\def\etal{\textit{et al.~}}
\def\argmax{\operatornamewithlimits{arg\,max}}
\def\ith{i^{\text{th}}}
\def\ellth{\ell^{\text{th}}}
\def\mth{m^{\text{th}}}
\def\e{\ell}
\def\L{\Lambda}
\def\modl{\hspace{-0.1in}\mod\Lambda}
\def\modbl{\hspace{-0.1in}\mod \beta \Lambda}
\def\modz{\hspace{-0.1in}\mod\Zbb^n}
\def\modp{\hspace{-0.1in}\mod p}
\def\modpz{\hspace{-0.1in}\mod p\Zbb^n}
\def\Re{\mathsf{Re}}
\def\Im{\mathsf{Im}}
\def\rcov{r_{_{\text{COV}}}}
\def\reffec{r_{_{\text{EFFEC}}}}
\def\ammse{\alpha_{_{\text{MMSE}}}}
\def\ab{\mathbf{a}}
\def\Ab{\mathbf{A}}
\def\bb{\mathbf{b}}
\def\Bb{\mathbf{B}}
\def\cb{\mathbf{c}}
\def\db{\mathbf{d}}
\def\Gb{\mathbf{G}}
\def\hb{\mathbf{h}}
\def\Hb{\mathbf{H}}
\def\Lb{\mathbf{L}}
\def\Qb{\mathbf{Q}}
\def\sb{\mathbf{s}}
\def\tb{\mathbf{t}}
\def\ub{\mathbf{u}}
\def\ubh{\mathbf{\hat{u}}}
\def\vb{\mathbf{v}}
\def\vbh{\mathbf{\hat{v}}}
\def\wb{\mathbf{w}}
\def\wbh{\mathbf{\hat{w}}}
\def\xb{\mathbf{x}}
\def\yb{\mathbf{y}}
\def\zb{\mathbf{z}}
\def\Bm{\mathcal{B}}
\def\Vm{\mathcal{V}}
\def\Em{\mathcal{E}}
\def\Dm{\mathcal{D}}
\def\Cm{\mathcal{C}}
\def\Rbb{\mathbb{R}}
\def\Cbb{\mathbb{C}}
\def\Fbb{\mathbb{F}}
\def\Zbb{\mathbb{Z}}
\def\ZC{\{\Zbb + j \Zbb\}}
\def\onen{\frac{1}{n}}
\def\onehalf{\frac{1}{2}}
\def\V{\mathsf{Vol}}
\def\n{^{-1}}
\newcommand{\snr}{\text{$\mathsf{SNR}$}}
\newtheorem{theorem}{Theorem}
\newtheorem{lemma}{Lemma}
\newtheorem{definition}{Definition}
\newtheorem{example}{Example}
\newtheorem{remark}{Remark}
\begin{document}

\title{Compute-and-Forward: Harnessing Interference through Structured Codes}
\author{Bobak Nazer, \IEEEmembership{Member, IEEE} and Michael Gastpar, \IEEEmembership{Member, IEEE}
\thanks{This work was supported by the National Science Foundation under grants CCR 0347298, CNS 0627024, and CCF 0830428 as well a Graduate Research Fellowship. M. Gastpar was also supported by the European Research Council under grant ERC StG 259530-ComCom. The material in this paper was presented in part at the IEEE International Symposium on Information Theory, Toronto, Canada, July 2008 and at the 42nd Annual IEEE Asilomar Conference on Signals, Systems, and Computers, Monterey, CA, October 2008.} 
\thanks{B. Nazer is with the Department of Electrical and Computer Engineering, Boston University, Boston, MA 02215, USA (email: bobak@bu.edu). M. Gastpar is with the Department of Electrical Engineering and Computer
Sciences, University of California, Berkeley, CA 94720 USA, and with the School of Computer and Communication Sciences, Ecole Polytechnique F\'ed\'erale (EPFL), 1015 Lausanne, Switzerland (e-mail: gastpar@eecs.berkeley.edu).}}
\markboth{IEEE Trans Info Theory, to appear}{~}

\maketitle

\begin{abstract}
Interference is usually viewed as an obstacle to communication in wireless networks. This paper proposes a new strategy, compute-and-forward, that exploits interference to obtain significantly higher rates between users in a network. The key idea is that relays should decode linear functions of transmitted messages according to their observed channel coefficients rather than ignoring the interference as noise. After decoding these linear equations, the relays simply send them towards the destinations, which given enough equations, can recover their desired messages. The underlying codes are based on nested lattices whose algebraic structure ensures that integer combinations of codewords can be decoded reliably. Encoders map messages from a finite field to a lattice and decoders recover equations of lattice points which are then mapped back to equations over the finite field. This scheme is applicable even if the transmitters lack channel state information. 
\end{abstract}
 
\begin{keywords}
Relaying, cooperative communication, structured codes, nested lattice codes, reliable computation, AWGN networks, interference.
\end{keywords}

\section{Introduction}

In a wireless network, a transmission from a single node is heard not only by the intended receiver, but also by all other nearby nodes; by analogy, any receiver  not only captures the signal from its designated transmitter, but from all other nearby transmitters. The resulting interference is usually viewed as highly undesirable and clever algorithms and protocols have been devised to avoid interference between transmitters. Collectively, these strategies transform the physical layer into a set of \textit{reliable bit pipes}, i.e. each link can accommodate a certain number of bits per time unit.  These bit pipes can then  be used seamlessly by higher layers in the protocol stack. 

Since wireless terminals must compete for the same fixed chunk of spectrum, interference avoidance results in diminishing rates as the network size increases. Recent work on cooperative communication has shown that this penalty can be overcome by adopting new strategies at the physical layer. The key idea is that users should help relay each other's messages by exploiting the broadcast and multiple-access properties of the wireless medium; properties that are usually viewed as a hindrance and are not captured by a bit pipe interface. To date, most proposed cooperative schemes have relied on one of the following three core relaying strategies:

\begin{itemize}
\item \textit{Decode-and-Forward:} The relay decodes at least some part of the transmitted messages. The recovered bits are then re-encoded for collaborative transmission to the next relay. Although this strategy offers significant advantages, the relay is ultimately interference-limited as the number of transmitted messages increases \cite{ce79, ltw04, kgg05,ehm07}.

\item \textit{Compress-and-Forward:} The signal observed at the relay is vector quantized and this information is passed towards the destination. If the destination receives information from multiple relays, it can treat the network as a multiple-input multiple-output (MIMO) channel. Unfortunately, since no decoding is performed at intermediate nodes, noise builds up as messages traverse the network \cite{ce79,kgg05,kim08,ary09,ssps09,lkec11}. 

\item \textit{Amplify-and-Forward:} The relay simply acts as a repeater and transmits a scaled version of its observation. Like compress-and-forward, this strategy converts the network into a large MIMO channel with the added possibility of a beamforming gain. However, noise also builds up with each retransmission. \cite{sg00,ltw04, gv05, bzg07, ehm07,mgm10}.
\end{itemize}

 In this paper, we propose a new strategy, \textit{compute-and-forward}, that enables relays to decode linear equations of the transmitted messages using the noisy linear combinations provided by the channel. A destination, given sufficiently many linear combinations, can solve for its desired messages. Our strategy relies on codes with a linear structure, specifically nested lattice codes. The linearity of the codebook ensures that integer combinations of codewords are themselves codewords. A relay is free to determine which linear equation to recover, but those closer to the channel's fading coefficients are available at higher rates. 

This strategy simultaneously affords protection against noise and the opportunity to exploit interference for cooperative gains. One could interpret compress-and-forward and amplify-and-forward as converting a network into a set of noisy linear equations; in this sense, compute-and-forward converts it into a set of \textit{reliable linear equations}. These equations can in turn be used for a digital implementation of cooperative schemes that could fit into a (slightly revised) network protocol stack. Classical relaying strategies seem to require a cross-layer design that dispenses with bit pipes and gives higher layers in the network stack direct access to the wireless medium. However, this would negate many of the advantages of a modular design \cite{kk05}. Compute-and-forward provides a natural solution to this problem by permitting a slight revision of the interface from bits to \textit{equations of bits}. 

We will develop a general framework for compute-and-forward that can  be used in any relay network with linear channels and additive white Gaussian noise (AWGN). Transmitters send out messages taking values in a prime-sized finite field and relays recover linear equations of the messages over the same field, making this an ideal physical layer interface for network coding. We will compare compute-and-forward to classical relaying strategies in a case study based on distributed MIMO. Classical relaying strategies perform well in either low or high signal-to-noise ratio (SNR) regimes. As we will see, compute-and-forward offers advantages in moderate SNR regimes where both interference and noise are significant factors.

\subsection{Related Work}
 
There is a large body of work on lattice codes and their applications in communications. We cannot do justice to all of this work here and point the interested reader to an excellent survey by Zamir \cite{zamir09}. The basic insight is that, for many AWGN networks of interest, nested lattice codes can approach the performance of standard random coding arguments. One key result by Erez and Zamir showed that nested lattice codes (combined with lattice decoding) can achieve the capacity of the point-to-point AWGN channel \cite{ez04}. More generally, Zamir, Shamai, and Erez demonstrated how to use nested lattice codes for many classical AWGN multi-terminal problems in \cite{zse02}. Subsequent work by El Gamal, Caire, and Damen showed that nested lattice codes achieve the diversity-multiplexing tradeoff of MIMO channels \cite{ecd04}. Note that, in general, structured codes are not sufficient to prove capacity results. For instance, group codes cannot approach the capacity of asymmetric discrete memoryless channels \cite{ahlswedegroup71}. 

It is tempting to assume that requiring codes to have a certain algebraic structure diminishes their usefulness for proving capacity theorems. However, it has become clear that for certain network communication scenarios, structured codes can actually outperform standard random coding arguments \cite{ng08ETT}. The first example of such behavior was found by {K\"orner} and Marton in \cite{km79}. They considered a decoder that wants to reconstruct the parity of two dependent binary sources observed by separate encoders. They found the rate region by using the same linear code at each encoder. More recently, we showed that structured codes offer large gains for reliable computation over multiple-access channels \cite{ng07IT}. Philosof \etal demonstrated that structured codes enable distributed dirty paper coding for multiple-access channels \cite{pz09,pzek09}.

The celebrated paper of Ahlswede \etal on network coding showed that for wired networks, relays must send out functions of received data, rather than just routing it \cite{acly00}. Subsequent work has shown that linear codes \cite{lyc03, km03} and linear codes with random coefficients \cite{hkmesk06} are sufficient for multicasting. There has recently been a great deal of interest in exploiting the physical layer of the wireless medium for network coding. To the best of our knowledge, the idea of using wireless interference for network coding was independently and concurrently proposed by several groups. Zhang, Liew, and Lam developed modulation strategies for bi-directional communication and coined the phrase "physical layer network coding" \cite{zll06}. Popovski and Yomo suggested the use of amplify-and-forward for the two-way relay channel \cite{py06VTC}. For this network, Rankov and Wittneben suggested both amplify-and-forward and compress-and-forward \cite{rw06}. We suggested the use of structured codes for the closely related wireless butterfly network \cite{ng06}. Subsequently, we developed lattice strategies for Gaussian multiple-access networks (without fading) \cite{ng07allerton} and Narayanan, Wilson, and Sprintson developed a nested lattice strategy for the two-way relay channel \cite{nws07,wnps10}.  Nam, Chung, and Lee generalized this strategy to include asymmetric power constraints \cite{ncl08}, found the capacity to within half a bit \cite{ncl10}, and extended their scheme to Gaussian multiple-access networks \cite{ncl09}. Owing to space constraints, we point to surveys by Liew, Zhang, and Lu \cite{lzl11} and ourselves \cite{ng11PIEEE} for a broader view of the rest of the physical layer network coding literature.

Work on interference alignment by Maddah-Ali, Motahari, and Khandani \cite{mmk08} and Cadambe and Jafar \cite{cj08} has shown that large gains are possible for interference channels at high signal-to-noise ratio (SNR). The key is to have users transmit along subspaces chosen such that all interference stacks up in the same dimensions at the receivers. Lattice codes can be used to realize these gains at finite SNR. Bresler, Parekh, and Tse used lattice codes to approximate the capacity of the many-to-one and one-to-many interference channels to within a constant number of bits \cite{bpt10}. This scheme was employed for bursty interference channels in \cite{kpv09}. For symmetric interference channels, Sridharan \etal developed a layered lattice strategy in \cite{sjvjs08}. Structured codes are also useful for ergodic alignment over fast fading interference channels \cite{ngjv09ISIT} and multi-hop networks \cite{jc09} as well as decentralized processing in cellular networks \cite{sps08,nsgs09}.

Distributed source coding can also benefit from the use of structured codes.  Krithivasan and Pradhan have employed nested lattice codes for the distributed compression of linear functions of jointly Gaussian sources \cite{kp09} as well as nested group codes for discrete memoryless sources \cite{kp10}. Wagner improved the performance of this lattice scheme in the low rate regime via binning and developed novel outer bounds \cite{wagner11}.  

Large gains are possible in multi-user source-channel coding \cite{sng07,ng08IZS,sv09}. For Gaussian settings, the modulo-lattice modulation scheme of Kochman and Zamir is particularly useful \cite{kz08}. Finally, recent work by He and Yener has shown that lattices are useful for physical layer secrecy \cite{hy09}. See also \cite{av09}. 

Finally, we mention several recent papers that have developed practical codes for compute-and-forward \cite{fsk10,hn10,oe10}.

\subsection{Summary of Paper Results}
 
 Our basic strategy is to take messages from a finite field, map them onto lattice points, and transmit these across the channel. Each relay observes a linear combination of these lattice points and attempts to decode an integer combination of them. This equation of lattice points is finally mapped back to a linear equation over a finite field. Our main theorems are summarized below:
\begin{itemize}
\item Theorems \ref{t:basiccompreal} and \ref{t:optcompreal} give our achievable rates for sending equations over a finite field from transmitters to relays over real-valued channel models. The strategy relies on a nested lattice coding strategy which is developed in Theorem \ref{t:latcompreal}. The corresponding results for complex-valued channel models are stated in Theorems \ref{t:basiccompcomplex}, \ref{t:optcompcomplex}, and \ref{t:latcompcomplex}. 
\item Theorems \ref{t:recoverallreal} through \ref{t:rankcomplex} give sufficient conditions on the equation coefficients so that a destination can recover one or more of the original messages.
\item Theorems \ref{t:twocomp} and \ref{t:twolevels} generalize the compute-and-forward scheme to include successive cancellation and superposition coding.
\item Theorem \ref{t:upper} is a simple upper bound on the rates for sending equations.
 \end{itemize}

We extend our framework to the slow fading setting in Section \ref{s:outage}. We then compare the performance of compute-and-forward to that of classical relaying strategies via a distributed MIMO case study in Section \ref{s:distmimo}.

\section{Problem Statement}

Our relaying strategy is applicable to any configuration of sources, relays, and destinations that are linked through linear\footnote{Erez and Zamir have recently investigated applying this framework to non-linear scenarios \cite{ez08}.} channels with additive white Gaussian\footnote{In fact, our strategy is applicable to a much broader class of additive noise statistics since we employ a minimum-distance decoder.} noise (AWGN). We will refer to such configurations as AWGN networks. To simplify the description of the scheme, we will first focus on how to deliver equations to a single set of relays. We will then show how a destination, given sufficiently many equations, can recover the intended messages. These two components are sufficient to completely describe an achievable rate region for any AWGN network. We will begin with definitions for real-valued channel models and then modify these to fit complex-valued channel models. 

%

\subsection{Real-Valued Channels}

Let $\Rbb$ denote the reals and $\Fbb_p$ denote the finite field of size $p$ where $p$ is always assumed to be prime. Let $+$ denote addition over the reals and $\oplus$ addition over the finite field. Furthermore, let $\sum$ denote summation over the reals and $\bigoplus$ denote summation over the finite field. It will be useful to map between the prime-sized finite field $\Fbb_p$ and the corresponding subset of the integers, $\{0,1,2,\ldots, p-1\}$. We will use the function $g(\cdot)$ to denote this map. This is essentially an identity map except for the change of alphabet. If $g$ or its inverse $g^{-1}$ are applied to a vector we assume they operate element-wise. We assume that the $\log$ operation is with respect to base $2$.

We will use boldface lowercase letters to denote column vectors and boldface uppercase letters to denote matrices. For example, $\hb \in \Rbb^L$ and $\Hb \in \Rbb^{M \times L}$. Let $\| \mathbf{h} \| \triangleq \sqrt{ \sum_{i=1}^L{|h[i]|^2}}$ denote the $\ell^2$-norm of $\hb$. Also, let $\hb^T$ denote the transpose of $\hb$. Finally, let $\mathbf{0}$ denote the zero vector,  $\delta_\e$ denote the unit vector with $1$ in the $\ellth$ entry and 0 elsewhere, and $\mathbf{I}^{M \times M}$ denote the identity matrix of size $M$. 


\begin{definition}[Messages] Each transmitter (indexed by $\ell = 1, 2, \ldots, L$) has a \mbox{length-$k_\ell$} \textit{message vector} that is drawn independently and uniformly over a prime-size finite field, $\mathbf{w}_\ell \in \mathbb{F}_p^{k_\ell}$. Without loss of generality, we assume that the transmitters are indexed by increasing message length. Since we are interested in functions of these message vectors, we zero-pad them to a common length $k \triangleq \max_\ell k _\ell$. 
\end{definition}
\begin{figure}[ht]
\begin{center}
\psset{unit=0.68mm}
\begin{pspicture}(-6,-20)(115,46)

\rput(0,-7){
\rput(1,32){$\wb_1$} \psline{->}(5,32)(10,32) \psframe(10,27)(21,37)
\rput(16,32){$\mathcal{E}_1$} \rput(27,35.5){$\xb_1$}
\psline{->}(21,32)(34,32)
}

\rput(0,-3){
\rput(1,12){$\wb_2$} \psline{->}(5,12)(10,12) \psframe(10,7)(21,17)
\rput(16,12){$\mathcal{E}_2$} \rput(27,15.5){$\xb_2$}
\psline{->}(21,12)(34,12)
}
\rput(16,0){$\vdots$}

\rput(0,5){
\rput(1,-18){$\wb_L$} \psline{->}(5,-18)(10,-18) \psframe(10,-23)(21,-13)
\rput(16,-18){$\mathcal{E}_L$} \rput(27,-14.5){$\xb_L$}
\psline{->}(21,-18)(34,-18)
}


\psframe(34,-23)(66,37)
\rput(50,7){\Large{$\mathbf{H}$}}

\rput(25,0){
\psline(41,32)(46.5,32)
\pscircle(49,32){2.5} \psline{-}(47.75,32)(50.25,32)
\psline{-}(49,30.75)(49,33.25) \psline{->}(49,39.5)(49,34.5) \rput(49,42.5){$\zb_1$}
\psline{->}(51.5,32)(65,32) \rput(57,35.5){$\yb_1$}

\psline(41,12)(46.5,12)
\pscircle(49,12){2.5} \psline{-}(47.75,12)(50.25,12)
\psline{-}(49,10.75)(49,13.25) \psline{->}(49,19.5)(49,14.5) \rput(49,22.5){$\zb_2$}
\psline{->}(51.5,12)(65,12) \rput(57,15.5){$\yb_2$}

\psline(41,-18)(46.5,-18)
\pscircle(49,-18){2.5} \psline{-}(47.75,-18)(50.25,-18)
\psline{-}(49,-19.25)(49,-16.75) \psline{->}(49,-10.5)(49,-15.5) \rput(49,-7.5){$\zb_M$}
\psline{->}(51.5,-18)(65,-18) \rput(57,-14.5){$\yb_M$}

\psframe(65,27)(77,37) \rput(71.5,32){$\mathcal{D}_1$}
\psline{->}(77,32)(82,32)
\rput(86,32){$\ubh_1$}

\psframe(65,7)(77,17) \rput(71.5,12){$\mathcal{D}_2$}
\psline{->}(77,12)(82,12)
\rput(86,12){$\ubh_2$}

\rput(71.5,-1){$\vdots$}

\psframe(65,-23)(77,-13) \rput(71.5,-18){$\mathcal{D}_M$}
\psline{->}(77,-18)(82,-18)
\rput(87,-18){$\ubh_M$}
}

\end{pspicture}
\end{center}
\caption{$L$ transmitters reliably communicate linear functions $\ub_m = \bigoplus_{\ell = 1}^L q_{m\ell} \wb_\ell$ to $M$ relays over a real-valued AWGN network.} \label{f:probstate}
\end{figure}
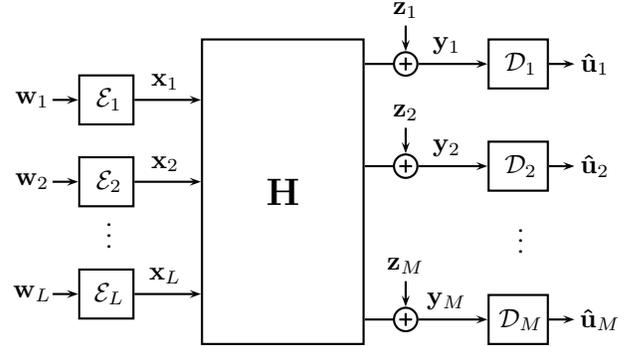

\begin{definition}[Encoders] Each transmitter is equipped with an \textit{encoder}, $\mathcal{E}_\ell: \mathbb{F}_p^k \rightarrow \Rbb^n$, that maps length-$k$ messages over the finite field to length-$n$ real-valued codewords, $\mathbf{x}_\ell = \mathcal{E}(\mathbf{w}_\ell)$. Each codeword is subject to the usual \textit{power constraint}, 
\begin{align}
 \| \mathbf{x}_\ell \|^2 \leq nP \ . \label{e:powerconstraint}
\end{align}
\end{definition}

\begin{remark}
Note that asymmetric power constraints can be incorporated by scaling the channel coefficients appropriately. 
\end{remark}

\begin{definition}[Message Rate] The \textit{message rate} $R_\ell$ of each transmitter is the length of its message (measured in bits) normalized by the number of channel uses,
\begin{align}
R_\ell = \frac{k_\ell}{n} \log{p}  \ . 
\end{align} Note that with our choice of indexing, the rates are in decreasing order, $R_1 \ge R_2 \ge \cdots \ge R_L$.
\end{definition}
\vspace{0.1in}
\begin{definition}[Channel Model] Each relay (indexed by $m = 1,2,\ldots, M$) observes a noisy linear combination of the transmitted signals through the \textit{channel},
\begin{align}
\mathbf{y}_m = \sum_{\ell=1}^L{h_{m\ell} \mathbf{x}_\ell} + \mathbf{z}_m  \ , \label{e:channelmodel}
\end{align}where $h_{m\ell} \in \Rbb$ are the channel coefficients and $\mathbf{z}$ is i.i.d. Gaussian noise, $\mathbf{z} \sim \mathcal{N}(\mathbf{0},\mathbf{I}^{n \times n})$. Let $\mathbf{h}_m = [h_{m1} \cdots h_{mL}]^T$ denote the vector of channel coefficients to relay $m$ and let $\mathbf{H} = \{ h_{m \ell} \}$ denote the entire channel matrix. Note that by this convention the $\mth$ row of $\Hb$ is $\hb_m^T$.
\end{definition}

\begin{remark} For our initial analysis, we will assume that the channel coefficients are fixed for all time. However, these results can easily be extended to the slow fading case under an outage formulation which we develop in Section \ref{s:outage}.
\end{remark}

\begin{remark} Our coding scheme only requires that each relay knows the channel coefficients from each transmitter to itself. Specifically, relay $m$ only needs to know $\hb_m$. Each transmitter only needs to know the desired message rate, not the realization of the channel.
\end{remark}

\begin{definition}[Desired Equations] \label{d:equations}
The goal of each relay is to reliably recover a \textit{linear combination} of the messages
\begin{align}
\ub_m &= \bigoplus_{\ell = 1}^L{q_{m \ell } \wb_\ell} \ . \label{e:msgeqn}
\end{align} where $q_{m \ell}$ are coefficients taking values in $\Fbb_p$. Each relay is equipped with a \textit{decoder}, $\mathcal{D}_m: \Rbb^n \rightarrow \mathbb{F}_p^k$, that maps the observed channel output $\mathbf{y}_m$ to an estimate $\mathbf{\hat{u}}_m = \mathcal{D}_m (\mathbf{y}_m)$ of the equation $\mathbf{u}_m$. 
\end{definition}

Although our desired equations are evaluated over the finite field $\Fbb_p$, the channel operates over the reals $\Rbb$. Our coding scheme will allow us to efficiently exploit the channel for reliable computation if the desired equation coefficients are close to the channel coefficients in an appropriate sense. The definition below provides an embedding from the finite field to the reals that will be useful in quantifying this closeness.

\begin{definition}[Coefficient Vector] The \textit{equation with coefficient vector} $\ab_m= \left[a_{m1} ~ a_{m2} ~ \cdots ~ a_{mL}\right]^T \in \Zbb^L$ is the linear combination of the transmitted messages $\mathbf{u}_m$ with coefficients given by 
\begin{align}
q_{m\ell} &= g^{-1}\big(\left[{a}_{m\ell} \right] \modp \big) \ . 
\end{align} Recall that $g^{-1}$ maps elements of $\{0,1,2,\ldots,p-1\}$ to the corresponding element in $\Fbb_p$.
\end{definition}

\begin{definition}[Probability of Error] \label{d:perror} We say that the equations with coefficient vectors $\ab_1, \ab_2, \ldots, \ab_M \in \Zbb^L$ are decoded with \textit{average probability of error} $\epsilon$ if
\begin{align}
\Pr\left( \bigcup_{m=1}^M \{ \ubh_m \neq  \ub_m \} \right)  < \epsilon \ .
\end{align}
\end{definition}

We would like to design a coding scheme that allows the transmitters to be oblivious of the channel coefficients and enables the relays to use their channel state information to select which equation to decode. Intuitively, equations whose coefficient vectors closely approximate the channel coefficients will be available at the highest rates. 

\begin{definition}[Computation Rate] \label{d:comprate} We say that the \textit{computation rate region} $\mathcal{R}(\hb_m, \ab_m)$ is achievable if for any $\epsilon > 0$ and $n$ large enough, there exist encoders and decoders, $\Em_1, \ldots, \Em_L, \Dm_1, \ldots, \Dm_M$, such that all relays can recover their desired equations with average probability of error $\epsilon$ so long as the underlying message rates $R_1, \ldots, R_L$ satisfy
\begin{align}
{R_\ell} <\min_{m:a_{m\ell}\neq 0} \mathcal{R}(\hb_m, \ab_m) \ . \label{e:compratedef}
\end{align}
\end{definition} 

In other words, a relay can decode an equation if the involved messages (i.e. those with non-zero coefficients) have message rates less than the computation rate between the channel and equation coefficient vectors. In fact, a relay will often be able to decode more than one equation and will have to decide which to forward into the network based on the requirements of the destinations. 

%
%

Although our scheme can be employed in any AWGN network, we will omit formal definitions for such networks and simply give recoverability conditions for equations of messages collected by a destination. This may occur via a single layer of relays as described above or through multiple layers.

\begin{definition}[Recovery] \label{d:recovery}
We say that message $\wb_\ell \in \Fbb_p^{k_\ell}$ can be \textit{recovered} at rate $R_\ell$ from the equations $\ub_m$ with coefficient vectors $\ab_1, \ldots, \ab_M \in \Zbb^L$ if for any $\epsilon > 0$ and $n$ large enough, there exists a decoder \mbox{$\Dm:\{ \Fbb_p^k \}^M \rightarrow  \Fbb_p^{k_\ell}$} such that
\begin{align}
&\wbh_\ell = \Dm\left(\ub_1, \ldots, \ub_M\right) \\
&\Pr\left(\wbh_\ell \neq \wb_\ell\right) < \epsilon \ . 
\end{align} \end{definition}

\subsection{Complex-Valued Channels} \label{s:complexchannel}
Let $\Cbb$ denote the complex field and $\mathbf{h}^*$ the Hermitian (or conjugate) transpose of a complex vector $\mathbf{h} \in \Cbb^L$. We also define $ j = \sqrt{-1}$. We are primarily interested in narrowband wireless channel models so we will specify our encoding and decoding schemes for complex baseband. Specifically, each transmitter sends a length-$n$ complex vector $\mathbf{x}_\ell \in \mathbb{C}^n$, which must obey the power constraint $\| \mathbf{x} \|_2 \leq nP$. Each relay observes a noisy linear superposition of the codewords, $\mathbf{y}_m = \sum_\ell h_{m\ell} \mathbf{x}_\ell + \mathbf{z}_m$, where $h_{m\ell} \in \Cbb$ are complex-valued channel coefficients and $\mathbf{z}_m$ is i.i.d. circularly symmetric complex Gaussian noise, $\mathbf{z}_m \sim \mathcal{CN}(\mathbf{0}, \mathbf{I}^{M\times M})$.

One simple possibility is to directly employ the framework developed above using the real-valued representation for complex vectors,
\begin{align*}
 \Re(\yb_m) &=  \sum_{\ell=1}^L{\left(\Re(h_{m\ell})\Re(\xb_\ell) - \Im(h_{m\ell})\Im(\xb_\ell) \right)} + \Re(\zb_m) \\
 \Im(\yb_m) &=  \sum_{\ell=1}^L{\left(\Im(h_{m\ell})\Re(\xb_\ell) + \Re(h_{m\ell})\Im(\xb_\ell) \right)} + \Im(\zb_m) \end{align*} From here, we can treat a complex-valued network with $L$ transmitters and $M$ relays as a real-valued network with $2L$ transmitters and $2M$ relays. However, there is a more elegant solution that takes advantage of the special structure of complex symbols. Below, we modify definitions to fit the complex case.

\begin{definition}[Complex Messages] Each transmitter has two \mbox{length-$k_\ell$} vectors that are drawn independently and uniformly over a prime-size finite field, $\mathbf{w}_\ell^R, \mathbf{w}_\ell^I \in \mathbb{F}_p^{k_\ell}$. The superscript denotes whether the vector is intended for the real part or the imaginary part of the channel. Together these vectors are the \textit{message} of transmitter $\ell$, $\wb_\e = (\wb_\e^R, \wb_\e^I)$. As before, we assume that the transmitters are indexed by increasing message length and zero-pad them to a common length $k \triangleq \max_\ell k _\ell$ prior to encoding. The \textit{message rate} of each transmitter is double the prior definition $R_\ell = (2 k_\ell/n) \log{p}$. 
\end{definition}

\begin{definition}[Desired Complex Equations] \label{d:equations}
The goal of each relay is to reliably recover a \textit{linear combination} of the messages, \begin{align}
\ub_m^R &= \bigoplus_{\ell = 1}^L\Big({q_{m \ell }^R \wb_\ell^R \oplus (-q_{m \ell }^I) \wb_\ell^I}\Big) \label{e:msgeqnR}\\
\ub_m^I &= \bigoplus_{\ell = 1}^L\Big({q_{m \ell }^I \wb_\ell^R \oplus q_{m \ell }^R \wb_\ell^I}\Big) \ ,  \label{e:msgeqnI}
\end{align} where the $q_{m\ell}$ are coefficients taking values in $\Fbb_p$ and $(-q_{m \ell})$ denotes the additive inverse of $q_{m \ell}$. The \textit{equation with coefficient vector} $\ab_m= \left[a_{m1} ~ a_{m2} ~ \cdots ~ a_{mL}\right]^T \in \{\Zbb + j \Zbb\}^L$ are the linear combinations with coefficients given by 
\begin{align}
q_{m\ell}^R &= g^{-1}\big(\left[\Re({a}_{m\ell}) \right]\hspace{-0.1in}\mod p \big) \\
q_{m\ell}^I &=  g^{-1}\big(\left[\Im({a}_{m\ell}) \right]\hspace{-0.1in}\mod p \big).
\end{align}
\end{definition}
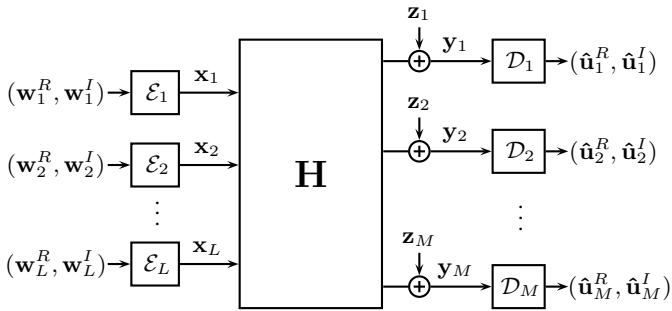
\begin{figure}[ht]
\begin{center}
\psset{unit=0.6mm}
\begin{pspicture}(-15,-20)(127,46)

\small

\rput(0,-7){
\rput(-6.5,32){$(\wb_1^R,\wb_1^I)$} \psline{->}(5,32)(10,32) \psframe(10,27)(21,37)
\rput(16,32){$\mathcal{E}_1$} \rput(27,35.5){$\xb_1$}
\psline{->}(21,32)(34,32)
}

\rput(0,-3){
\rput(-6.5,12){$(\wb_2^R,\wb_2^I)$} \psline{->}(5,12)(10,12) \psframe(10,7)(21,17)
\rput(16,12){$\mathcal{E}_2$} \rput(27,15.5){$\xb_2$}
\psline{->}(21,12)(34,12)
}
\rput(16,0){$\vdots$}

\rput(0,5){
\rput(-6.5,-18){$(\wb_L^R,\wb_L^I)$} \psline{->}(5,-18)(10,-18) \psframe(10,-23)(21,-13)
\rput(16,-18){$\mathcal{E}_L$} \rput(27,-14.5){$\xb_L$}
\psline{->}(21,-18)(34,-18)
}


\psframe(34,-23)(66,37)
\rput(50,7){\Large{$\mathbf{H}$}}

\rput(25,0){
\psline(41,32)(46.5,32)
\pscircle(49,32){2.5} \psline{-}(47.75,32)(50.25,32)
\psline{-}(49,30.75)(49,33.25) \psline{->}(49,39.5)(49,34.5) \rput(49,42.5){$\zb_1$}
\psline{->}(51.5,32)(65,32) \rput(57,35.5){$\yb_1$}

\psline(41,12)(46.5,12)
\pscircle(49,12){2.5} \psline{-}(47.75,12)(50.25,12)
\psline{-}(49,10.75)(49,13.25) \psline{->}(49,19.5)(49,14.5) \rput(49,22.5){$\zb_2$}
\psline{->}(51.5,12)(65,12) \rput(57,15.5){$\yb_2$}

\psline(41,-18)(46.5,-18)
\pscircle(49,-18){2.5} \psline{-}(47.75,-18)(50.25,-18)
\psline{-}(49,-19.25)(49,-16.75) \psline{->}(49,-10.5)(49,-15.5) \rput(49,-7.5){$\zb_M$}
\psline{->}(51.5,-18)(65,-18) \rput(57,-14.5){$\yb_M$}

\psframe(65,27)(77,37) \rput(71.5,32){$\mathcal{D}_1$}
\psline{->}(77,32)(82,32)
\rput(92,32){$(\ubh_1^R,\ubh_1^I)$}

\psframe(65,7)(77,17) \rput(71.5,12){$\mathcal{D}_2$}
\psline{->}(77,12)(82,12)
\rput(92,12){$(\ubh_2^R,\ubh_2^I)$}

\rput(71.5,-1){$\vdots$}

\psframe(65,-23)(77,-13) \rput(71.5,-18){$\mathcal{D}_M$}
\psline{->}(77,-18)(82,-18)
\rput(93.5,-18){$(\ubh_M^R,\ubh_M^I)$}
}

\end{pspicture}
\end{center}
\caption{$L$ transmitters reliably communicate linear functions $\ub_m^R = \bigoplus_{\ell = 1}^L\Big({q_{m \ell }^R \wb_\ell^R \oplus (-q_{m \ell }^I) \wb_\ell^I}\Big)$ and $\ub_m^I = \bigoplus_{\ell = 1}^L\Big({q_{m \ell }^I \wb_\ell^R \oplus q_{m \ell }^R \wb_\ell^I}\Big)$ to $M$ relays over a complex-valued AWGN network.} \label{f:probstate}
\end{figure}

Note that the coefficient choices for the real and imaginary part are coupled, which means that each relay only needs to decide on $2L$ coefficients instead of the $4L$ needed for a real-valued system with $2L$ transmitters. The definitions for the probability of error, the computation rate region, and recovery are identical to Definitions \ref{d:perror}, \ref{d:comprate}, and \ref{d:recovery} except with $\Cbb$ and $\ZC$ taking the place of $\Rbb$ and $\Zbb$, respectively.

\section{Main Results}

Our main result is that relays can often recover an equation of messages at a higher rate than any individual message (or subset of messages). The rates are highest when the equation coefficients closely approximate the channel coefficients. Below, we give a formal statement of this result for real-valued channels. Let $\log^+(x) \triangleq \max{(\log(x),0)}$. 

\begin{theorem}\label{t:basiccompreal}
For real-valued AWGN networks with channel coefficient vectors $\hb_m \in \Rbb^L$ and equation coefficient vectors $\ab_m \in \Zbb^L$, the following computation rate region is achievable:
\begin{align}
\mathcal{R}(\hb_m, \ab_m) =& \max_{\alpha_m \in \Rbb} \frac{1}{2} \log^+\left(\frac{P}{\alpha_m^2 + P \| \alpha_m \mathbf{h}_m - \mathbf{a}_m \|^2}\right)  \ . \nonumber 
\end{align}
\end{theorem}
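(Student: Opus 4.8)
The plan is to construct the desired equations by way of a nested lattice code, relying on the lattice computation result announced as Theorem~\ref{t:latcompreal}, and then to translate the lattice-domain guarantee into a finite-field guarantee. First I would fix a pair of nested lattices $\Lambda \subseteq \Lambda_{\text{FINE}}$ in $\Rbb^n$, where the coarse lattice $\Lambda$ is good for covering and for AWGN (in the Poltyrev/Erez--Zamir sense) and is scaled so that its second moment equals $P$, and the fine lattice $\Lambda_{\text{FINE}}$ is AWGN-good with nesting ratio chosen so that $\frac{1}{n}\log|\Lambda_{\text{FINE}} \cap \Vm(\Lambda)| \approx \max_\ell R_\ell$. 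Each transmitter maps its finite-field message $\wb_\ell$ into a coset representative $\tb_\ell \in \Lambda_{\text{FINE}} \cap \Vm(\Lambda)$ via a linear (over $\Fbb_p$, lifted to $\Zbb$) labelling of the nested-lattice quotient, dithers it with a common-randomness dither $\db_\ell$ uniform over $\Vm(\Lambda)$, and reduces mod $\Lambda$: $\xb_\ell = (\tb_\ell + \db_\ell) \bmod \Lambda$. Crank's mod-$\Lambda$ property makes $\xb_\ell$ uniform over $\Vm(\Lambda)$ and hence power-constrained to $nP$, and independent of $\tb_\ell$.

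Next I would have relay $m$ form $\alpha_m \yb_m$, subtract the appropriate integer combination of dithers, and reduce mod $\Lambda$ to obtain (after the standard manipulation)
\begin{align}
\tilde{\yb}_m = \Big( \sum_{\ell=1}^L a_{m\ell}\tb_\ell \;+\; \big(\alpha_m \hb_m - \ab_m\big)^T[\xb_1 \cdots \xb_L] + \alpha_m \zb_m \Big) \bmod \Lambda, \nonumber
\end{align}
so that the effective noise is $\zb_{\text{eff},m} = (\alpha_m \hb_m - \ab_m)^T[\xb_1 \cdots \xb_L] + \alpha_m\zb_m$, with per-symbol variance $N_{\text{eff}} = \alpha_m^2 + P\|\alpha_m\hb_m - \ab_m\|^2$. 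The lattice point $\sum_\ell a_{m\ell}\tb_\ell$ lies in $\Lambda_{\text{FINE}}$ (by linearity of the lattice), so a lattice decoder for $\Lambda_{\text{FINE}}$ applied to $\tilde{\yb}_m$ recovers $\big(\sum_\ell a_{m\ell}\tb_\ell\big) \bmod \Lambda$ with vanishing error probability provided the code rate is below $\frac{1}{2}\log(P/N_{\text{eff}})$; here I would invoke AWGN-goodness of $\Lambda_{\text{FINE}}$, handling the non-Gaussian part of $\zb_{\text{eff},m}$ either by the standard argument that a uniform-over-$\Vm(\Lambda)$ term is dominated by a Gaussian of slightly larger variance (smoothing / de Buda--type argument) or simply by appealing to the minimum-distance decoder remark in the paper. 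Optimizing over $\alpha_m \in \Rbb$ — the MMSE-type scaling — and taking $\log^+$ to cover the degenerate case where no positive rate is attainable yields the claimed region. Finally, because the labelling from $\Fbb_p^k$ to $\Lambda_{\text{FINE}}/\Lambda$ is a ring/module isomorphism compatible with reduction mod $p$, the recovered lattice equation $\big(\sum_\ell a_{m\ell}\tb_\ell\big)\bmod\Lambda$ maps back exactly to $\ub_m = \bigoplus_\ell q_{m\ell}\wb_\ell$ with $q_{m\ell} = g^{-1}([a_{m\ell}]\bmod p)$, which is Definition~\ref{d:equations}; the rate constraint~\eqref{e:compratedef} enters because only the transmitters with $a_{m\ell}\neq 0$ contribute to the equation and each must fit inside the fine codebook.

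The main obstacle I anticipate is the construction of the nested lattice pair with all the required simultaneous goodness properties together with a $p$-ary linear labelling — i.e., making the algebraic map between $\Fbb_p^k$ and the quotient $\Lambda_{\text{FINE}}/\Lambda$ behave linearly so that integer (equivalently mod-$p$) combinations of codewords are codewords whose labels are the corresponding $\Fbb_p$-combinations. This is precisely the content that Theorem~\ref{t:latcompreal} is set up to deliver, so in the write-up I would isolate it as a black box and spend the bulk of the argument on the dithering/mod-$\Lambda$ algebra, the effective-noise variance computation, and the $\alpha_m$ optimization, which are otherwise routine.
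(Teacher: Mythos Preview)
Your proposal is correct and follows essentially the same route as the paper: map messages to nested lattice codewords via the linear labelling $\phi$ (Lemma~\ref{l:fieldtolattice}), invoke Theorem~\ref{t:latcompreal} as a black box to decode the lattice equations at the relays (your dithering/mod-$\Lambda$/effective-noise sketch is in fact exactly the content of that theorem's proof), and map back via $\phi^{-1}$ (Lemma~\ref{l:msgeqn}). The only refinement the paper adds over your single-$\Lambda_{\text{FINE}}$ sketch is a chain $\Lambda \subseteq \Lambda_L \subseteq \cdots \subseteq \Lambda_1$ of nested fine lattices so that relay $m$ quantizes onto $\Lambda_{\ell_{\text{MAX}}(m)}$ rather than the finest lattice, which is what allows the constraint on $R_\ell$ to involve only those relays $m$ with $a_{m\ell}\neq 0$; since you explicitly defer to Theorem~\ref{t:latcompreal} for this construction, your plan is complete.
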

\vspace{0.1in} A detailed proof is given in Section \ref{s:realcompproof}.

\begin{theorem} \label{t:optcompreal}
The computation rate given in Theorem \ref{t:basiccompreal} is uniquely maximized by choosing $\alpha_m$ to be the MMSE coefficient
\begin{align}
\ammse = \frac{P ~\mathbf{h}_m^T\mathbf{a}_m}{1 + P \|\mathbf{h}_m \|^2}
\end{align}  which results in a computation rate region of
\begin{align}
\mathcal{R}(\hb_m, \ab_m) = \frac{1}{2}\log^+{\left(\left(\| \mathbf{a}_m \|^2 - \frac{P ~ ( \mathbf{h}_m^T \mathbf{a}_m )^2}{1 + P\|\mathbf{h}_m\|^2}\right)\n\right)} \nonumber
\end{align}
\end{theorem}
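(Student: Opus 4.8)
The plan is to reduce the problem to minimizing a one-dimensional quadratic. Since $\log^+(\cdot)$ is nondecreasing in its argument, maximizing $\mathcal{R}(\hb_m,\ab_m)$ over $\alpha_m \in \Rbb$ is equivalent to minimizing the denominator
\[
f(\alpha_m) \triangleq \alpha_m^2 + P\,\|\alpha_m \hb_m - \ab_m\|^2
\]
appearing inside the logarithm of Theorem~\ref{t:basiccompreal}. Expanding the squared norm,
\[
f(\alpha_m) = \bigl(1 + P\|\hb_m\|^2\bigr)\,\alpha_m^2 \;-\; 2P\,(\hb_m^T\ab_m)\,\alpha_m \;+\; P\,\|\ab_m\|^2 ,
\]
which is a strictly convex quadratic in the scalar $\alpha_m$, its leading coefficient $1+P\|\hb_m\|^2$ being strictly positive. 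Such a quadratic has a single stationary point, necessarily its unique global minimizer; solving $f'(\alpha_m)=0$ yields
\[
\alpha_m \;=\; \frac{P\,\hb_m^T\ab_m}{1 + P\|\hb_m\|^2} \;=\; \ammse ,
\]
which is the optimality-and-uniqueness assertion of the theorem.

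It then remains to substitute $\alpha_m = \ammse$ into $f$ and simplify. Exploiting the identity $(1+P\|\hb_m\|^2)\,\ammse = P\,\hb_m^T\ab_m$, the $\alpha_m^2$ and $\alpha_m$ terms of $f(\ammse)$ collapse to $-P\,(\hb_m^T\ab_m)\,\ammse$, leaving
\[
f(\ammse) \;=\; P\,\|\ab_m\|^2 \;-\; \frac{P^2\,(\hb_m^T\ab_m)^2}{1 + P\|\hb_m\|^2} \;=\; P\!\left(\|\ab_m\|^2 - \frac{P\,(\hb_m^T\ab_m)^2}{1 + P\|\hb_m\|^2}\right).
\]
Inserting this into $\tfrac12\log^+\!\bigl(P/f(\ammse)\bigr)$, the factor $P$ cancels and one recovers exactly the closed form stated for $\mathcal{R}(\hb_m,\ab_m)$.

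The algebra is routine; the only points deserving a moment's care are the clipping in $\log^+$ and whether $f$ can vanish. A Cauchy--Schwarz estimate of the discriminant of $f$ shows that $f(\alpha_m)>0$ for all $\alpha_m$ whenever $\ab_m \neq \mathbf{0}$, so the logarithm's argument is finite and positive and, because $\max(\cdot,0)$ is nondecreasing, the clipping does not displace the maximizing $\alpha_m$. (The degenerate choice $\ab_m = \mathbf{0}$ is the trivial all-zero equation and can be set aside; and the uniqueness should be read as applying whenever the resulting rate is strictly positive, since on the clipped region every $\alpha_m$ attains rate $0$.) I expect this bookkeeping, rather than any genuine difficulty, to be the only nontrivial part of the argument.
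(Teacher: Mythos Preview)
Your proposal is correct and follows essentially the same approach as the paper: the paper also minimizes the quadratic denominator $f(\alpha_m)$ by setting its derivative to zero, solves for $\ammse$, and substitutes back to obtain the closed form. Your treatment is in fact slightly more careful, as you address the strict convexity explicitly and discuss the $\log^+$ clipping and the positivity of $f$, points the paper leaves implicit.
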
 The proof is nearly identical to that of Theorem \ref{t:optcompcomplex}. 

The computation rate expression for the complex-valued case is simply twice the expression for the real-valued case.

%
%
%
%

\begin{theorem}\label{t:basiccompcomplex}
For complex-valued AWGN networks with channel coefficient vectors $\hb_m \in \Rbb^L$ and equation coefficient vectors $\ab_m \in \ZC^L$, the following computation rate region is achievable:
\begin{align}
\mathcal{R}(\hb_m, \ab_m) =& \max_{\alpha_m \in \Cbb} \log^+\left(\frac{P}{|\alpha_m|^2 + P \| \alpha_m \mathbf{h}_m - \mathbf{a}_m \|^2}\right) \ . \nonumber 
\end{align} 
\end{theorem}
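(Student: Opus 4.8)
The plan is to reduce Theorem~\ref{t:basiccompcomplex} to the real-valued achievability result of Theorem~\ref{t:basiccompreal} by a change of variables that exploits the coupled structure of the complex coefficient choices. First I would write each complex quantity in terms of its real and imaginary parts and stack them: a complex codeword $\xb_\ell \in \Cbb^n$ becomes a real vector $[\Re(\xb_\ell)^T ~ \Im(\xb_\ell)^T]^T \in \Rbb^{2n}$, and a complex scalar channel coefficient $h_{m\ell}$ acts on this stacked vector as the $2\times 2$ real ``rotation-scaling'' block $\begin{pmatrix} \Re(h_{m\ell}) & -\Im(h_{m\ell}) \\ \Im(h_{m\ell}) & \Re(h_{m\ell}) \end{pmatrix}$, applied blockwise across the $n$ time samples. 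The key point is that the coupling in Definition~\ref{d:equations} for complex equations is exactly what makes the integer combination $\ab_m$ act on the stacked real codewords through the \emph{same} block structure, so that decoding a complex equation of lattice points corresponds to decoding a single real equation of the stacked lattice points, not two independent ones.

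The main steps, in order, would be: (i) set up the stacked real representation described above and verify that the complex channel law $\yb_m = \sum_\ell h_{m\ell}\xb_\ell + \zb_m$ with $\zb_m \sim \mathcal{CN}(\mathbf 0, \mathbf I)$ becomes a real AWGN network in dimension $2n$ with unit-variance noise per real dimension (here the circular symmetry of $\zb_m$ is what gives i.i.d.\ real and imaginary noise components, each of variance $1/2$ per the usual convention, so one must track the variance normalization carefully); (ii) invoke Theorem~\ref{t:latcompreal}/\ref{t:basiccompreal} on this $2n$-dimensional real network, but applied to a lattice that itself has the complex (i.e., $\Zbb[j]$-module) structure, so that the achievable real computation rate for the stacked equation with a \emph{complex} scaling parameter $\alpha_m \in \Cbb$ (acting as its own $2\times 2$ block) is $\frac12 \log^+\!\big(P/(|\alpha_m|^2 + P\|\alpha_m \hb_m - \ab_m\|^2)\big)$ per real dimension; (iii) observe that there are $2n$ real dimensions carrying $2k_\ell \log p$ bits, so the rate per complex channel use is double the per-real-dimension rate, giving the claimed expression $\log^+\!\big(P/(|\alpha_m|^2 + P\|\alpha_m\hb_m - \ab_m\|^2)\big)$; (iv) finally, take the maximum over $\alpha_m \in \Cbb$ and check that restricting $\alpha_m$ to the complex-scalar blocks (rather than arbitrary $2\times 2$ real matrices) is the right comparison, i.e.\ that this restricted optimization is what the coupled coefficient structure permits.

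The step I expect to be the main obstacle is step~(ii): making rigorous that a nested lattice code over $\Cbb^n$ with the appropriate $\Zbb[j]$-linearity behaves, under the stacking map, exactly like a real nested lattice code of dimension $2n$ to which Theorem~\ref{t:latcompreal} applies, while simultaneously constraining the decoder's scaling to the ``complex'' subclass so that the rate penalty matches. One has to confirm that the self-similarity and goodness properties (Rogers-good, Poltyrev-good, and the quantization-good property) of the real lattices used in Theorem~\ref{t:latcompreal} can be arranged to hold for lattices that are images of complex lattices under stacking — equivalently, that constructing the complex nested lattice pair directly (e.g.\ via Construction~A over $\Zbb[j]/p\Zbb[j]$ or $\Zbb[\omega]$) yields stacked real lattices with the needed properties. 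A clean alternative, which I would pursue if the direct argument gets delicate, is to simply re-run the proof of Theorem~\ref{t:latcompreal} verbatim over $\Cbb$, tracking factors of two: the lattice quantization noise, the MMSE scaling, and the union-bound over the decoding error all go through with $\Cbb$ replacing $\Rbb$ and $\Zbb[j]$ replacing $\Zbb$, and the per-complex-symbol rate comes out doubled because each complex symbol carries twice the real entropy. Either way, the content of the theorem is ``everything is the same, with a factor of two,'' and the real work is bookkeeping the noise normalization and the lattice-goodness transfer.
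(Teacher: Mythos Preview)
Your proposal is correct in outline and in its bottom line (``everything is the same, with a factor of two''), but the route the paper takes is simpler and sidesteps precisely the obstacle you flag in step~(ii). Rather than building a $2n$-dimensional real lattice with $\Zbb[j]$-module structure and checking that goodness properties transfer, the paper uses the \emph{same} $n$-dimensional real nested lattice code twice: once on $\Re(\xb_\ell)$ and once on $\Im(\xb_\ell)$, with independent dithers and the coarse lattice rescaled to second moment $P/2$. The point is that the coupled coefficient structure in the complex equation definition makes each of $\vb_m^R$ and $\vb_m^I$ an \emph{ordinary} integer combination of the $2L$ real lattice points $\tb_1^R,\ldots,\tb_L^R,\tb_1^I,\ldots,\tb_L^I$, so the decoder simply runs two separate real lattice decodings (one on $\Re(\alpha_m\yb_m)$, one on $\Im(\alpha_m\yb_m)$). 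The effective noise on each has variance $\tfrac{1}{2}|\alpha_m|^2 + \tfrac{P}{2}\|\alpha_m\hb_m - \ab_m\|^2$, and since the signal second moment is $P/2$ the effective SNR matches the real case from (\ref{e:complexstart}) onward; the rate doubles because there are two streams.

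What this buys over your approach: no new lattice construction, no goodness-transfer argument, no Construction~A over $\Zbb[j]$. The real Theorem~\ref{t:latcompreal} is invoked as a black box on each component. What your stacked $\Zbb[j]$-lattice approach would buy, if carried out, is a conceptually cleaner single decoding step and possibly a natural extension to other rings of integers (e.g.\ Eisenstein), but for the present theorem it is unnecessary machinery.
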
\vspace{0.1in} A detailed proof is given in Section \ref{s:complexcompproof}.

\begin{theorem} \label{t:optcompcomplex}
The computation rate given in Theorem \ref{t:basiccompcomplex} is uniquely maximized by choosing $\alpha_m$ to be the MMSE coefficient
\begin{align}
\ammse = \frac{P ~\mathbf{h}_m^*\mathbf{a}_m}{1 + P \|\mathbf{h}_m \|^2}
\end{align}  which results in a computation rate region of
\begin{align}
\mathcal{R}(\hb_m, \ab_m) = \log^+{\left(\left(\| \mathbf{a}_m \|^2 - \frac{P ~ | \mathbf{h}_m^* \mathbf{a}_m |^2}{1 + P\|\mathbf{h}_m\|^2}\right)\n\right)} 
\end{align}
\end{theorem}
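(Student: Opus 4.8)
The plan is to reduce the whole statement to an elementary unconstrained quadratic minimization. Since $\log^+$ is nondecreasing and $x \mapsto P/x$ is strictly decreasing on $(0,\infty)$, maximizing $\mathcal{R}(\hb_m,\ab_m)$ from Theorem~\ref{t:basiccompcomplex} over $\alpha_m \in \Cbb$ is equivalent to minimizing the denominator
\[
f(\alpha_m) \triangleq |\alpha_m|^2 + P\,\|\alpha_m \hb_m - \ab_m\|^2
\]
over $\alpha_m \in \Cbb$, after which the rate equals $\log^+$ of $P$ divided by $\min_{\alpha_m} f(\alpha_m)$. So it suffices to (i) find the unique minimizer of $f$ and (ii) evaluate $P/f$ there and simplify.

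For step (i) I would expand the norm using $\|\alpha\hb_m - \ab_m\|^2 = |\alpha|^2\|\hb_m\|^2 - 2\Re\!\bigl(\overline{\alpha}\,\hb_m^*\ab_m\bigr) + \|\ab_m\|^2$ (the cross term collapses to twice a real part because $\ab_m^*\hb_m = \overline{\hb_m^*\ab_m}$), which gives
\[
f(\alpha_m) = c\,|\alpha_m|^2 - 2P\,\Re\!\bigl(\overline{\alpha_m}\,\hb_m^*\ab_m\bigr) + P\|\ab_m\|^2, \qquad c \triangleq 1 + P\|\hb_m\|^2 > 0 .
\]
Viewed as a function of $(\Re\alpha_m,\Im\alpha_m)\in\Rbb^2$ this is a strictly convex quadratic, so it has a unique minimizer, which one reads off by completing the square over $\Cbb$: writing $b \triangleq P\,\hb_m^*\ab_m$, we have $f(\alpha_m) = c\,|\alpha_m - b/c|^2 + P\|\ab_m\|^2 - |b|^2/c$, so the minimizer is $\alpha_m^\star = b/c = \ammse$ (the familiar MMSE/Wiener scaling, which is the source of the name) and the minimum value is $P\|\ab_m\|^2 - P^2|\hb_m^*\ab_m|^2/(1+P\|\hb_m\|^2)$.

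For step (ii), factoring $P$ out of the numerator and the minimum value gives
\[
\frac{P}{f(\alpha_m^\star)} = \left(\|\ab_m\|^2 - \frac{P\,|\hb_m^*\ab_m|^2}{1 + P\|\hb_m\|^2}\right)^{-1},
\]
which is exactly the expression inside $\log^+$ in the theorem; a one-line Cauchy--Schwarz bound $|\hb_m^*\ab_m|^2 \le \|\hb_m\|^2\|\ab_m\|^2$ shows $f(\alpha_m^\star) > 0$ for $\ab_m \neq \mathbf{0}$, so this is well-defined. I do not anticipate a genuine obstacle; the only care points are the complex-conjugate bookkeeping in the completion of the square (equivalently, one may set the Wirtinger derivative $\partial f/\partial \overline{\alpha_m}$ to zero, or differentiate in $\Re\alpha_m$ and $\Im\alpha_m$ and solve the $2\times 2$ linear system), and the mild caveat that ``uniquely maximized'' is to be read at the level of the argument of $\log^+$: strict convexity of $f$ makes the maximizer unique whenever $\mathcal{R}(\hb_m,\ab_m)>0$, while in the clipped regime $\mathcal{R}(\hb_m,\ab_m)=0$ uniqueness is vacuous. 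Theorem~\ref{t:optcompreal} follows verbatim with $\hb_m^*$ replaced by $\hb_m^T$ and an overall factor $\onehalf$, the completion of the square now carried out over $\Rbb$.
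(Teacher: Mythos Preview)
Your proposal is correct and follows essentially the same route as the paper: both reduce the problem to minimizing the quadratic denominator $f(\alpha_m)=|\alpha_m|^2+P\|\alpha_m\hb_m-\ab_m\|^2$ and then substitute the minimizer back in. The only cosmetic difference is that the paper sets a derivative of $f$ to zero while you complete the square; your version is arguably tidier for a complex scalar, and your explicit strict-convexity remark and Cauchy--Schwarz positivity check are welcome additions that the paper leaves implicit.
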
 

\begin{proof}
Let $f(\alpha_m)$ denote the denominator of the computation rate in Theorem \ref{t:basiccompcomplex}. Since it is quadratic in $\alpha_m$, it can be uniquely minimized by setting its first derivative to zero. \begin{align}
f(\alpha_m) &= \alpha_m^* \alpha_m + P (\alpha_m \hb_m - \ab_m)^*(\alpha_m \hb_m - \ab_m) \nonumber \\
\frac{df}{d \alpha_m}&= 2 \alpha_m + P(2 \alpha_m \hb_m^* \hb_m - 2\hb_m^* \ab_m) = 0\\
&\alpha_m (2 + 2 P \| \hb_m \| ^2 ) = 2 P~ \hb_m^* \ab_m 
\end{align} 
We solve this to get $\ammse$ and plug back into $f(\alpha_m)$. 
\begin{align}
f(\ammse) &=~ \frac{ P^2 | \mathbf{h}_m^* \mathbf{a}_m |^2}{\left(1 + P\|\mathbf{h}_m\|^2\right)^2} + \frac{P^3  \| \hb_m \|^2  | \mathbf{h}_m^* \mathbf{a}_m |^2}{\left(1 + P\|\mathbf{h}_m\|^2\right)^2}~ \nonumber \\
&\qquad - ~2 \frac{P^2  | \mathbf{h}_m^* \mathbf{a}_m |^2}{1 + P\|\mathbf{h}_m\|^2}+ P \|\ab_m\|^2 \\ 
  &=  -  \frac{P^2 | \mathbf{h}_m^* \mathbf{a}_m |^2}{1 + P\|\mathbf{h}_m\|^2}+ P \|\ab_m\|^2 
\end{align} Substituting this into $\log^+\left(\frac{P}{f(\ammse)}\right)$ yields the desired computation rate. 
\end{proof}

The main interpretation of Theorems \ref{t:basiccompreal} and \ref{t:basiccompcomplex} is that all relays can simultaneously decode equations with coefficient vectors $\ab_m$ so long as the involved messages' rates are within the computation rate region 
\begin{align}
R_\ell < \min_{a_{m\ell} \neq 0} \mathcal{R}(\hb_m, \ab_m) \ . 
\end{align}  In other words, exactly which equation to decode is left up to the relays. The scalar parameter $\alpha_m$ is used to move the channel coefficients closer to the desired integer coefficients. For instance, if $\alpha_m = 1$, then the effective signal-to-noise ratio is $$\snr = \frac{P}{1 + P\| \hb_m - \ab_m \|^2} \ ,$$ meaning that the non-integer part of the channel coefficients acts as additional noise.  More generally, the scaled channel output $\alpha_m \yb_m = \sum{\alpha_m h_{m\ell} \xb_\ell} + \alpha_m \zb_m$ can be equivalently written as a channel output $\tilde{\yb}_m = \sum{\tilde{h}_{m\ell} \xb_\ell} + \tilde{\zb}_m$ where $\tilde{h}_{m \ell} = \alpha_m h_{m \ell}$ and $\tilde{\zb}_m$ is i.i.d. according to $\mathcal{CN}(0,|\alpha_m|^2)$. In this case, the effective signal-to-noise ratio is $$\snr = \frac{P}{| \alpha_m|^2 + P \| \alpha_m \hb_m - \ab_m \|^2} \ .$$ Since there is a rate penalty both for noise and for non-integer channel coefficients, then $\alpha_m$ should be used to optimally balance between the two as in Theorems \ref{t:optcompreal} and \ref{t:optcompcomplex}. This is quite similar to the role of the MMSE scaling coefficient used by Erez and Zamir to achieve the capacity of the point-to-point AWGN channel in \cite{ez04}.
 
\begin{example}\label{e:intcomp} Let the channel matrix take values on the complex integers, $\Hb \in \{\Zbb +j\Zbb\}^{M \times L}$,  and assume that each relay wants a linear equation with a coefficient vector that corresponds exactly to the channel coefficients, $\ab_m = \hb_m$. Using Theorem \ref{t:optcompcomplex}, the relays can decode so long as
\begin{align}
R_\ell &< \min_{m: h_{m\ell} \neq 0 }\log^+{\left(\left(\|\hb_m\|^2 - \frac{P  \|\hb_m \|^4 }{1 +P  \| \hb_m \| ^2  } \right)\n \right)} \nonumber \\
&= \min_{m: h_{m\ell} \neq 0 } \log^+{\left(\frac{1 +P  \| \hb_m \| ^2}{\|\hb_m\|^2 + P  \|\hb_m \|^4 - P  \|\hb_m \|^4}\right)} \nonumber \\
&= \min_{m: h_{m\ell} \neq 0 }\log^+{\left(\frac{1}{\|\hb_m\|^2} + P\right)}
\end{align}
\end{example}
\begin{remark} One interesting special case of Example \ref{e:intcomp} is computing the modulo sum of codewords $\wb_1 \oplus \wb_2$ over a two-user Gaussian multiple-access channel $\mathbf{y} = \xb_1 + \xb_2 + \zb$. To date, the best known achievable computation rate for this scenario is $\log^+{\left(\frac{1}{2} + P\right)}$. Several papers (including our own) have studied this special case and it is an open problem as to whether the best known outer bound $\log{\left(1 + P\right)}$ is achievable \cite{ng07allerton, wnps10, ncl08}. Clearly, one can do better in the low SNR regime using standard multiple-access codes to recover all the messages then compute the sum to get $\frac{1}{2}\log{\left(1 + 2 P\right)}$.
\end{remark}

\begin{example} \label{ex:onemsg}
Assume there are $M$ transmitters and $M$ relays. Relay $m$ wants to recover the message from transmitter $m$. This corresponds to setting the desired coefficient vector to be a unit vector $\ab_m = \delta_m$. Substituting this choice into Theorem \ref{t:optcompcomplex}, we get that the messages can be decoded if their rates satisfy
\begin{align}
R_m &< \log^+{\left(\left(1 - \frac{P |h_{mm}|^2}{1 + P \| \hb_m \|^2}\right)\n \right)} \\
&= \log^+{\left(\left( \frac{1 + P \sum_{\ell \neq m} |h_{m\ell}|^2}{1 + P \| \hb_m \|^2}\right)\n\right)} \\
& = \log{\left(1 + \frac{P |h_{mm}|^2}{1 + P  \sum_{\ell \neq m} |h_{m\ell}|^2}\right)} \ .
\end{align} This is exactly the rate achievable with standard multiple-access techniques if the relays ignore all other messages as noise. In Section \ref{s:cancel}, we will use successive cancellation of lattice equations to show that if a relay wants all of the messages, any point in the Gaussian multiple-access rate region is achievable with compute-and-forward.
\end{example}

\begin{remark}
The setup in Example \ref{ex:onemsg} is exactly that of an $M$-user Gaussian interference channel. Higher rates are possible by incorporating techniques such as the superposition of public and private messages \cite{hk81,etw08} and interference alignment \cite{mmk08,cj08}. Note that these can be implemented in concert with compute-and-forward. For instance, in Section \ref{s:superposition}, we describe a superposition compute-and-forward strategy. \end{remark}

In general, the choice of the coefficient vector $\ab_m$ at each relay will depend both on the channel coefficients and the message demands at the destinations. Relays should make use of their available channel state information (CSI) to determine the most valuable equation to forward. One simple greedy approach is to choose coefficient vectors with the highest computation rate
\begin{align}
\ab_{m} = \argmax_{\mathbf{\tilde{a}}} \mathcal{R}(\hb_m, \mathbf{\tilde{a}}) \ . \label{e:compmax} 
\end{align} This is a compelling strategy for scenarios where only local CSI is available. It resembles random linear network coding \cite{hkmesk06} except here the randomness stems entirely from the channel coefficients. In the next lemma, we demonstrate that this maximization does not require a search over all integer vectors. 

\begin{lemma} \label{l:coeffbound}
For a given channel vector $\hb$, the computation rate $\mathcal{R}(\hb_m, \ab_m)$ from Theorems \ref{t:optcompreal} and \ref{t:optcompcomplex} are zero if the coefficient vector $\ab$ satisfies:
\begin{align}
\| \ab_m \|^2 \geq 1 + \| \hb_m \|^2 P. \label{e:coeffbound}
\end{align}
\end{lemma}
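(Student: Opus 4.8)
The plan is to work directly from the closed-form rate expressions in Theorems \ref{t:optcompreal} and \ref{t:optcompcomplex}, rather than from the variational forms in Theorems \ref{t:basiccompreal} and \ref{t:basiccompcomplex}; using the already-optimized expressions means a bound there certifies that \emph{no} choice of $\alpha_m$ gives positive rate. Since $\log^+(x) = \max(\log x, 0)$ vanishes exactly when $x \le 1$, and the argument of $\log^+$ in Theorem \ref{t:optcompreal} is the reciprocal of $\| \ab_m \|^2 - P ( \hb_m^T \ab_m )^2 / (1 + P\|\hb_m\|^2)$, the rate $\mathcal{R}(\hb_m,\ab_m)$ is zero precisely when
\[
\| \ab_m \|^2 - \frac{P \, ( \hb_m^T \ab_m )^2}{1 + P\|\hb_m\|^2} \;\ge\; 1 .
\]
So the task reduces to showing that the hypothesis $\|\ab_m\|^2 \ge 1 + P\|\hb_m\|^2$ implies this inequality.

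The key (and essentially only) step is the Cauchy--Schwarz inequality $(\hb_m^T \ab_m)^2 \le \|\hb_m\|^2 \|\ab_m\|^2$, which yields
\begin{align}
\| \ab_m \|^2 - \frac{P \, ( \hb_m^T \ab_m )^2}{1 + P\|\hb_m\|^2}
&\;\ge\; \| \ab_m \|^2\left(1 - \frac{P\|\hb_m\|^2}{1 + P\|\hb_m\|^2}\right) \nonumber \\
&\;=\; \frac{\|\ab_m\|^2}{1 + P\|\hb_m\|^2} \;\ge\; 1 , \nonumber
\end{align}
where the last step is exactly the hypothesis $\|\ab_m\|^2 \ge 1 + P\|\hb_m\|^2$. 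This proves the real-valued case. The complex-valued case in Theorem \ref{t:optcompcomplex} is handled identically, replacing the real Cauchy--Schwarz bound with $|\hb_m^* \ab_m|^2 \le \|\hb_m\|^2 \|\ab_m\|^2$ and noting that the complex rate is $\log^+$ of the reciprocal of the same expression (with $(\hb_m^T\ab_m)^2$ replaced by $|\hb_m^*\ab_m|^2$), so the identical chain of inequalities drives it to zero.

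There is no genuine obstacle here; the only points requiring care are bookkeeping ones. First, one must track the direction of the inequality: because the bracketed quantity sits in the \emph{denominator} of the $\log^+$ argument, "rate zero'' corresponds to that quantity being $\ge 1$, not $\le 1$. Second, one should confirm that the boundary case $\|\ab_m\|^2 = 1 + P\|\hb_m\|^2$ is included, which it is, since the final inequality above is then met with equality and $\log^+$ of a quantity equal to $1$ is still $0$. I would also remark that the contrapositive gives the useful corollary that the greedy maximization \eqref{e:compmax} may be restricted to the finite set $\{\ab : \|\ab\|^2 < 1 + P\|\hb_m\|^2\}$.
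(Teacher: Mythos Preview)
Your proof is correct and essentially identical to the paper's: both apply Cauchy--Schwarz to bound the subtracted term, arriving at the same upper bound $\log^+\!\big((1+P\|\hb_m\|^2)/\|\ab_m\|^2\big)$ on the rate, which vanishes under the hypothesis. The only cosmetic difference is that the paper first rewrites the reciprocal as a single fraction before bounding, whereas you bound the bracketed quantity directly; the algebra is the same.
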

\begin{proof}
Note that $| \mathbf{h}_m^* \mathbf{a}_m |^2 \leq  \|\mathbf{h}_m \|^2 \|\mathbf{a}_m \|^2$ by the Cauchy-Schwarz inequality. Using this, we can upper bound the computation rate:
\begin{align}
&\log^+{\left(\left(\| \mathbf{a}_m \|^2 - \frac{P ~ | \mathbf{h}_m^* \mathbf{a}_m |^2}{1 + P\|\mathbf{h}_m\|^2}\right)\n\right)} \\
& = \log^+{\left(\frac{1 + P\|\mathbf{h}_m\|^2}{\| \mathbf{a}_m \|^2 + P \|\mathbf{h}_m \|^2\|\mathbf{a}_m \|^2 -P ~ | \mathbf{h}_m^* \mathbf{a}_m |^2}\right)} \nonumber \\
&\leq  \log^+{\left(\frac{1 + P\|\mathbf{h}_m\|^2}{\| \mathbf{a}_m \|^2 }\right)} \ . 
\end{align} The result follows immediately.
\end{proof}

\begin{figure}[h]
\centering
\includegraphics[width=3.75in]{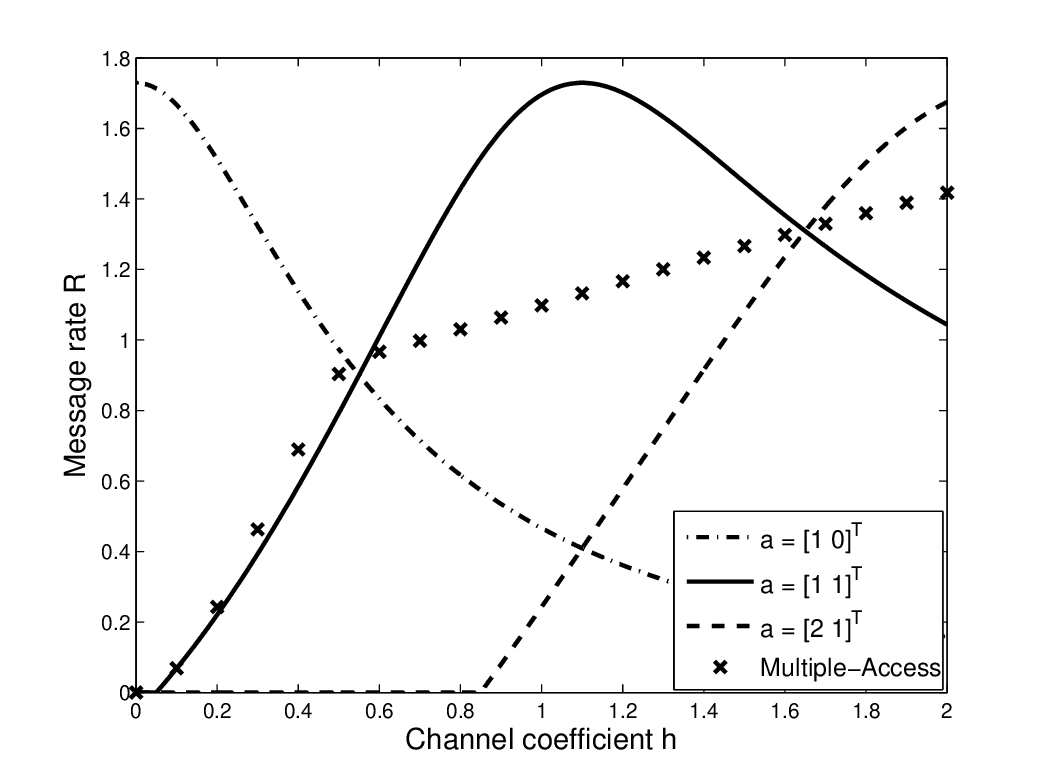}
\caption{Recovering equations with coefficient vectors $\ab = [1 ~0]^T, [1 ~1]^T, [2~ 1]^T$ over a multiple-access channel with channel vector $\mathbf{h} = [h ~1]$ where $h$ varies between $0$ and $2$. The message rates are symmetric $R_1 = R_2 = R$ and the power is $P = 10$dB. For comparison, we have also plotted the symmetric multiple-access capacity. }\label{f:rateprofile}
\end{figure}

In Figure \ref{f:rateprofile}, we have plotted how the computation rate from Theorem \ref{t:optcompreal} varies as the channel coefficients change for several possible coefficients vectors. In this example, the message rates are symmetric $R_1 = R_2 = R$ and the power is $10$dB. The channel vector $\hb = [h~1]^T$ is parametrized by $h$ which is varied between $0$ and $2$. The coefficient vectors are $\ab = [1 ~0]^T,~[1~1]^T,$ and $[2~1]^T$. Each of these vectors attain its maximum computation rate when the channel vector is an exact match.

\begin{remark}As the power increases, more coefficient vectors should be used to approximate the channel more finely. However, in the high SNR limit, it has recently been shown by Niesen and Whiting that the degrees-of-freedom (DoF) of our scheme becomes discontinuous \cite{nw11}. Specifically, at rational channel vectors, our scheme attains the maximum DoF but, at irrational vectors, the DoF is upper bounded by a constant as the number of users increases. Under the assumption that the transmitters know the channel realization, they can attain the maximum DoF (up to a set of measure zero) by coupling compute-and-forward with the interference alignment scheme of Motahari \textit{et al.} for fixed channels \cite{mgmk09}. 
\end{remark}

\begin{remark}\label{r:piecewise}
Note that each relay is free to decode more than one equation, so long as all the appropriate computation rates are satisfied. In some cases, it may be beneficial to recover a desired equation by first decoding equations of subsets of messages and then combining them. 
\end{remark}

The following example shows that it is useful to allow for a different rate at each transmitter. 

\begin{example}
Consider a complex-valued AWGN network with $L = 4$ transmitters and $M = 2$ relays. The channel vectors are $\hb_1 = [4 ~-4~~ 1~ -1]^T$ and $\hb_2 = [1~ ~1~ ~2 ~~2]^T$. The desired coefficient vectors are $\ab_1 = \hb_1$ and $\ab_2 = [ 0~~ 0~~1~~1]^T$. These equations can be reliably recovered so long as the message rates satisfy:
\begin{align} 
R_\ell <
\begin{cases} 
{\displaystyle \log^+{ \left(\frac{1}{34} + P \right)}} & \ell=1,2\\
{\displaystyle \log^+{\left(\frac{1}{2} + \frac{4 P}{1 + 2P}\right)}} & \ell = 3,4
\end{cases}
\end{align}
\end{example}

\section{Nested Lattice Codes} \label{s:lattices}

In order to allow relays to decode integer combinations of codewords, we need codebooks with a linear structure. Specifically, we will use nested lattice codes that have both good statistical and good algebraic properties. Erez and Zamir developed a class of nested lattice codes that can approach the capacity of point-to-point AWGN channels in \cite{ez04}. These codes operate under a modulo arithmetic that is well-suited for mapping operations over a finite field to the complex field.

First, we will provide some necessary definitions from \cite{ez04} on nested lattice codes. Note that all of these definitions are given over $\Rbb^n$. For complex-valued channels, our scheme will use the same lattice code over the real and imaginary parts of the channel input (albeit with different messages). 

\subsection{Lattice Definitions}
\begin{definition}[Lattice] An $n$-dimensional \textit{lattice}, $\Lambda$, is a set of
points in $\mathbb{R}^n$ such that if $\sb,\tb \in \Lambda$, then
$\sb + \tb \in \Lambda$, and if $\sb \in \Lambda$, then $-\sb \in
\Lambda$. A lattice can always be written in terms of a lattice generator
matrix $\mathbf{B} \in \mathbb{R}^{n \times n}$:
\begin{align}
\Lambda = \{\sb =\mathbf{B} \mathbf{c}: \mathbf{c} \in \mathbb{Z}^n\} \ .
\end{align} 
\end{definition}

\begin{definition}[Nested Lattices] A lattice $\Lambda$ is said to be \textit{nested} in a lattice $\Lambda_1$ if $\Lambda \subseteq \Lambda_1$. We will sometimes refer to $\Lambda$ as the coarse lattice and $\Lambda_1$ as the fine lattice. More generally, a sequence of lattices $\Lambda, \Lambda_1, \ldots, \Lambda_L$ is nested if $\Lambda \subseteq \Lambda_1 \subseteq \cdots \subseteq \Lambda_L$.
\end{definition}

\begin{definition}[Quantizer] A \textit{lattice quantizer} is a map, $Q_\Lambda: \mathbb{R}^n
\rightarrow \Lambda$, that sends a point, $\sb$, to the nearest
lattice point in Euclidean distance:  
\begin{align} 
Q_\Lambda(\mathbf{s}) = \arg \min_{\mathbf{\lambda} \in \Lambda}{||\sb -
\lambda||} \ .\end{align}
\end{definition}

\begin{definition}[Voronoi Region] The \textit{fundamental Voronoi region}, $\mathcal{V}$, of a lattice, is the set of all points in $\Rbb^n$ that are closest to the zero
vector: $\mathcal{V} = \{\sb : Q_\Lambda(\sb) = \mathbf{0}\}$. Let $\mathsf{Vol}(\mathcal{V})$ denote the volume of $\mathcal{V}$. 
\end{definition}

\begin{definition}[Modulus] Let $[\sb]~\modl$ denote the quantization error of $\sb \in \Rbb^n$ with respect to the lattice $\Lambda$,
\begin{align}
[\sb] \modl = \sb - Q_\Lambda(\sb) \ .
\end{align}For all $\sb, \tb \in \mathbb{R}^n$ and $\L \subseteq \L_1$, the~$\modl$ operation satisfies:
\begin{align}
[\sb + \tb] \modl&=\big[[\sb] \modl + \tb\big] \modl \label{e:modprop1}   \\
 \big[Q_{\L_1}(\sb)\big]\modl &= \big[Q_{\L_1}\big([\sb]\modl\big)\big]\modl \label{e:modprop2} \\
[a \sb] \modl &= [a [\sb] \modl] \modl  ~~~~~~\forall a \in \mathbb{Z} \label{e:modprop3}\\
 \beta [\sb] \modl &= [\beta\sb] \modbl  \, ~~~~~~~~~~~~~~\forall \beta \in \mathbb{R} \label{e:modprop4}
\end{align} 
\end{definition}

\begin{definition}[Nested Lattice Codes] A \textit{nested lattice code} $\mathcal{L}$ is the set of all points of a fine lattice $\Lambda_1$ that are within the fundamental Voronoi region $\Vm$ of a coarse lattice $\Lambda$,
\begin{align}
\mathcal{L} = \Lambda_1 \cap \Vm = \{ \tb: \tb = \lambda \modl, \lambda \in \Lambda_1 \} \ .
\end{align}
The rate of a nested lattice code is
\begin{align}
r = \onen \log{|\mathcal{L}|} = \onen \log{\frac{\V(\Vm)}{\V(\Vm_1)}} \ . 
\end{align}
\end{definition}
 
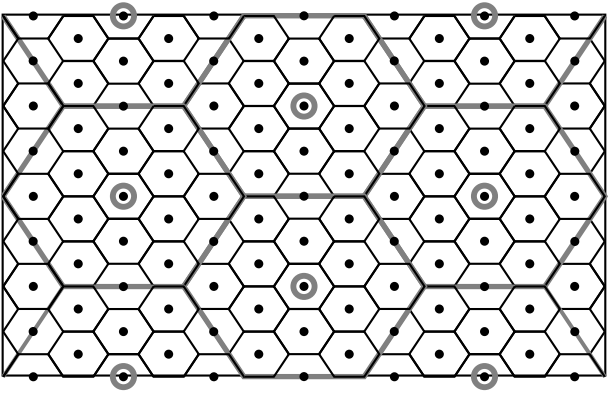
\begin{figure}[h]
\begin{center}
\psset{unit=0.6mm}
\begin{pspicture}(-28,-20)(108,60)

\psframe(-27,-20)(107,60.5)

\rput(120,40){

\rput(-20,0){
\psline(-3.333,5)(3.333,5)(6.667,0)(3.333,-5)(-3.333,-5)(-6.667,0)(-3.333,5)
\pscircle[fillstyle=solid,fillcolor=black](0,0){1}
}
\rput(-10,15){
\psline(-3.333,-5)(-6.667,0)(-3.333,5)
}
\rput(-10,5){
\psline(-3.333,-5)(-6.667,0)(-3.333,5)
}
\rput(-10,-5){
\psline(-3.333,-5)(-6.667,0)(-3.333,5)
}
\rput(-10,-15){
\psline(-3.333,-5)(-6.667,0)(-3.333,5)
}

}
\rput(120,0){

\rput(-20,0){
\psline(-3.333,5)(3.333,5)(6.667,0)(3.333,-5)(-3.333,-5)(-6.667,0)(-3.333,5)
\pscircle[fillstyle=solid,fillcolor=black](0,0){1}
}
\rput(-10,15){
\psline(-3.333,-5)(-6.667,0)(-3.333,5)
}
\rput(-10,5){
\psline(-3.333,-5)(-6.667,0)(-3.333,5)
}
\rput(-10,-5){
\psline(-3.333,-5)(-6.667,0)(-3.333,5)
}
\rput(-10,-15){
\psline(-3.333,-5)(-6.667,0)(-3.333,5)
}

}

\rput(80,60){

\psline[linewidth=2pt,linecolor=black!50!white](26.667,0)(13.333,-20)(-13.333,-20)(-26.667,0)

\rput(-20,0){
\psline(6.667,0)(3.333,-5)(-3.333,-5)(-6.667,0)\pscircle[fillstyle=solid,fillcolor=black](0,0){1}
}
\rput(-10,-5){
\psline(-3.333,5)(3.333,5)(6.667,0)(3.333,-5)(-3.333,-5)(-6.667,0)(-3.333,5)
\pscircle[fillstyle=solid,fillcolor=black](0,0){1}
}
\rput(-10,-15){
\psline(-3.333,5)(3.333,5)(6.667,0)(3.333,-5)(-3.333,-5)(-6.667,0)(-3.333,5)
\pscircle[fillstyle=solid,fillcolor=black](0,0){1}
}
\rput(20,0){
\psline(6.667,0)(3.333,-5)(-3.333,-5)(-6.667,0)\pscircle[fillstyle=solid,fillcolor=black](0,0){1}
}
\rput(0,0){
\psline(6.667,0)(3.333,-5)(-3.333,-5)(-6.667,0)
\pscircle[fillstyle=solid,fillcolor=black](0,0){1}
}
\rput(0,-10){
\psline(-3.333,5)(3.333,5)(6.667,0)(3.333,-5)(-3.333,-5)(-6.667,0)(-3.333,5)
\pscircle[fillstyle=solid,fillcolor=black](0,0){1}
}
\rput(10,-5){
\psline(-3.333,5)(3.333,5)(6.667,0)(3.333,-5)(-3.333,-5)(-6.667,0)(-3.333,5)
\pscircle[fillstyle=solid,fillcolor=black](0,0){1}
}
\rput(10,-15){
\psline(-3.333,5)(3.333,5)(6.667,0)(3.333,-5)(-3.333,-5)(-6.667,0)(-3.333,5)
\pscircle[fillstyle=solid,fillcolor=black](0,0){1}
}

\pscircle[fillstyle=solid,fillcolor=black](20,-10){1}

}
\rput(80,20){

\psline[linewidth=2pt,linecolor=black!50!white](-26.667,0)(-13.333,20)(13.333,20)(26.667,0)(13.333,-20)(-13.333,-20)(-26.667,0)

\rput(-20,0){
\psline(-3.333,5)(3.333,5)(6.667,0)(3.333,-5)(-3.333,-5)(-6.667,0)(-3.333,5)
\pscircle[fillstyle=solid,fillcolor=black](0,0){1}
}
\rput(-10,15){
\psline(-3.333,5)(3.333,5)(6.667,0)(3.333,-5)(-3.333,-5)(-6.667,0)(-3.333,5)
\pscircle[fillstyle=solid,fillcolor=black](0,0){1}
}
\rput(-10,5){
\psline(-3.333,5)(3.333,5)(6.667,0)(3.333,-5)(-3.333,-5)(-6.667,0)(-3.333,5)
\pscircle[fillstyle=solid,fillcolor=black](0,0){1}
}
\rput(-10,-5){
\psline(-3.333,5)(3.333,5)(6.667,0)(3.333,-5)(-3.333,-5)(-6.667,0)(-3.333,5)
\pscircle[fillstyle=solid,fillcolor=black](0,0){1}
}
\rput(-10,-15){
\psline(-3.333,5)(3.333,5)(6.667,0)(3.333,-5)(-3.333,-5)(-6.667,0)(-3.333,5)
\pscircle[fillstyle=solid,fillcolor=black](0,0){1}
}
\rput(20,0){
\psline(-3.333,5)(3.333,5)(6.667,0)(3.333,-5)(-3.333,-5)(-6.667,0)(-3.333,5)
\pscircle[fillstyle=solid,fillcolor=black](0,0){1}
}
\rput(0,10){
\psline(-3.333,5)(3.333,5)(6.667,0)(3.333,-5)(-3.333,-5)(-6.667,0)(-3.333,5)
\pscircle[fillstyle=solid,fillcolor=black](0,0){1}
}
\rput(0,0){
\psline(-3.333,5)(3.333,5)(6.667,0)(3.333,-5)(-3.333,-5)(-6.667,0)(-3.333,5)
\pscircle[fillstyle=solid,fillcolor=black](0,0){1}
}
\rput(0,-10){
\psline(-3.333,5)(3.333,5)(6.667,0)(3.333,-5)(-3.333,-5)(-6.667,0)(-3.333,5)
\pscircle[fillstyle=solid,fillcolor=black](0,0){1}
}
\rput(10,15){
\psline(-3.333,5)(3.333,5)(6.667,0)(3.333,-5)(-3.333,-5)(-6.667,0)(-3.333,5)
\pscircle[fillstyle=solid,fillcolor=black](0,0){1}
}
\rput(10,5){
\psline(-3.333,5)(3.333,5)(6.667,0)(3.333,-5)(-3.333,-5)(-6.667,0)(-3.333,5)
\pscircle[fillstyle=solid,fillcolor=black](0,0){1}
}
\rput(10,-5){
\psline(-3.333,5)(3.333,5)(6.667,0)(3.333,-5)(-3.333,-5)(-6.667,0)(-3.333,5)
\pscircle[fillstyle=solid,fillcolor=black](0,0){1}
}
\rput(10,-15){
\psline(-3.333,5)(3.333,5)(6.667,0)(3.333,-5)(-3.333,-5)(-6.667,0)(-3.333,5)
\pscircle[fillstyle=solid,fillcolor=black](0,0){1}
}

\pscircle[fillstyle=solid,fillcolor=black](0,20){1}
\pscircle[fillstyle=solid,fillcolor=black](20,10){1}
\pscircle[fillstyle=solid,fillcolor=black](20,-10){1}

}
\rput(80,-20){

\psline[linewidth=1.5pt,linecolor=black!50!white](-26.667,0)(-13.333,20)(13.333,20)(26.667,0)

\rput(-20,0){
\psline(-6.667,0)(-3.333,5)(3.333,5)(6.667,0)
\pscircle[fillstyle=solid,fillcolor=black](0,0){1}
}
\rput(-10,15){
\psline(-3.333,5)(3.333,5)(6.667,0)(3.333,-5)(-3.333,-5)(-6.667,0)(-3.333,5)
\pscircle[fillstyle=solid,fillcolor=black](0,0){1}
}
\rput(-10,5){
\psline(-3.333,5)(3.333,5)(6.667,0)(3.333,-5)(-3.333,-5)(-6.667,0)(-3.333,5)
\pscircle[fillstyle=solid,fillcolor=black](0,0){1}
}

\rput(20,0){
\psline(-6.667,0)(-3.333,5)(3.333,5)(6.667,0)
\pscircle[fillstyle=solid,fillcolor=black](0,0){1}
}
\rput(0,10){
\psline(-3.333,5)(3.333,5)(6.667,0)(3.333,-5)(-3.333,-5)(-6.667,0)(-3.333,5)
\pscircle[fillstyle=solid,fillcolor=black](0,0){1}
}
\rput(0,0){
\psline(-6.667,0)(-3.333,5)(3.333,5)(6.667,0)
\pscircle[fillstyle=solid,fillcolor=black](0,0){1}
}

\rput(10,15){
\psline(-3.333,5)(3.333,5)(6.667,0)(3.333,-5)(-3.333,-5)(-6.667,0)(-3.333,5)
\pscircle[fillstyle=solid,fillcolor=black](0,0){1}
}
\rput(10,5){
\psline(-3.333,5)(3.333,5)(6.667,0)(3.333,-5)(-3.333,-5)(-6.667,0)(-3.333,5)
\pscircle[fillstyle=solid,fillcolor=black](0,0){1}
}

\pscircle[fillstyle=solid,fillcolor=black](0,20){1}
\pscircle[fillstyle=solid,fillcolor=black](20,10){1}

}
\rput(40,40){

\psline[linewidth=2pt,linecolor=black!50!white](-26.667,0)(-13.333,20)(13.333,20)(26.667,0)(13.333,-20)(-13.333,-20)(-26.667,0)

\rput(-20,0){
\psline(-3.333,5)(3.333,5)(6.667,0)(3.333,-5)(-3.333,-5)(-6.667,0)(-3.333,5)
\pscircle[fillstyle=solid,fillcolor=black](0,0){1}
}
\rput(-10,15){
\psline(-3.333,5)(3.333,5)(6.667,0)(3.333,-5)(-3.333,-5)(-6.667,0)(-3.333,5)
\pscircle[fillstyle=solid,fillcolor=black](0,0){1}
}
\rput(-10,5){
\psline(-3.333,5)(3.333,5)(6.667,0)(3.333,-5)(-3.333,-5)(-6.667,0)(-3.333,5)
\pscircle[fillstyle=solid,fillcolor=black](0,0){1}
}
\rput(-10,-5){
\psline(-3.333,5)(3.333,5)(6.667,0)(3.333,-5)(-3.333,-5)(-6.667,0)(-3.333,5)
\pscircle[fillstyle=solid,fillcolor=black](0,0){1}
}
\rput(-10,-15){
\psline(-3.333,5)(3.333,5)(6.667,0)(3.333,-5)(-3.333,-5)(-6.667,0)(-3.333,5)
\pscircle[fillstyle=solid,fillcolor=black](0,0){1}
}
\rput(20,0){
\psline(-3.333,5)(3.333,5)(6.667,0)(3.333,-5)(-3.333,-5)(-6.667,0)(-3.333,5)
\pscircle[fillstyle=solid,fillcolor=black](0,0){1}
}
\rput(0,10){
\psline(-3.333,5)(3.333,5)(6.667,0)(3.333,-5)(-3.333,-5)(-6.667,0)(-3.333,5)
\pscircle[fillstyle=solid,fillcolor=black](0,0){1}
}
\rput(0,0){
\psline(-3.333,5)(3.333,5)(6.667,0)(3.333,-5)(-3.333,-5)(-6.667,0)(-3.333,5)
\pscircle[fillstyle=solid,fillcolor=black](0,0){1}
}
\rput(0,-10){
\psline(-3.333,5)(3.333,5)(6.667,0)(3.333,-5)(-3.333,-5)(-6.667,0)(-3.333,5)
\pscircle[fillstyle=solid,fillcolor=black](0,0){1}
}
\rput(10,15){
\psline(-3.333,5)(3.333,5)(6.667,0)(3.333,-5)(-3.333,-5)(-6.667,0)(-3.333,5)
\pscircle[fillstyle=solid,fillcolor=black](0,0){1}
}
\rput(10,5){
\psline(-3.333,5)(3.333,5)(6.667,0)(3.333,-5)(-3.333,-5)(-6.667,0)(-3.333,5)
\pscircle[fillstyle=solid,fillcolor=black](0,0){1}
}
\rput(10,-5){
\psline(-3.333,5)(3.333,5)(6.667,0)(3.333,-5)(-3.333,-5)(-6.667,0)(-3.333,5)
\pscircle[fillstyle=solid,fillcolor=black](0,0){1}
}
\rput(10,-15){
\psline(-3.333,5)(3.333,5)(6.667,0)(3.333,-5)(-3.333,-5)(-6.667,0)(-3.333,5)
\pscircle[fillstyle=solid,fillcolor=black](0,0){1}
}

\pscircle[fillstyle=solid,fillcolor=black](0,20){1}
\pscircle[fillstyle=solid,fillcolor=black](20,10){1}
\pscircle[fillstyle=solid,fillcolor=black](20,-10){1}

}
\rput(40,0){

\psline[linewidth=2pt,linecolor=black!50!white](-26.667,0)(-13.333,20)(13.333,20)(26.667,0)(13.333,-20)(-13.333,-20)(-26.667,0)

\rput(-20,0){
\psline(-3.333,5)(3.333,5)(6.667,0)(3.333,-5)(-3.333,-5)(-6.667,0)(-3.333,5)
\pscircle[fillstyle=solid,fillcolor=black](0,0){1}
}
\rput(-10,15){
\psline(-3.333,5)(3.333,5)(6.667,0)(3.333,-5)(-3.333,-5)(-6.667,0)(-3.333,5)
\pscircle[fillstyle=solid,fillcolor=black](0,0){1}
}
\rput(-10,5){
\psline(-3.333,5)(3.333,5)(6.667,0)(3.333,-5)(-3.333,-5)(-6.667,0)(-3.333,5)
\pscircle[fillstyle=solid,fillcolor=black](0,0){1}
}
\rput(-10,-5){
\psline(-3.333,5)(3.333,5)(6.667,0)(3.333,-5)(-3.333,-5)(-6.667,0)(-3.333,5)
\pscircle[fillstyle=solid,fillcolor=black](0,0){1}
}
\rput(-10,-15){
\psline(-3.333,5)(3.333,5)(6.667,0)(3.333,-5)(-3.333,-5)(-6.667,0)(-3.333,5)
\pscircle[fillstyle=solid,fillcolor=black](0,0){1}
}
\rput(20,0){
\psline(-3.333,5)(3.333,5)(6.667,0)(3.333,-5)(-3.333,-5)(-6.667,0)(-3.333,5)
\pscircle[fillstyle=solid,fillcolor=black](0,0){1}
}
\rput(0,10){
\psline(-3.333,5)(3.333,5)(6.667,0)(3.333,-5)(-3.333,-5)(-6.667,0)(-3.333,5)
\pscircle[fillstyle=solid,fillcolor=black](0,0){1}
}
\rput(0,0){
\psline(-3.333,5)(3.333,5)(6.667,0)(3.333,-5)(-3.333,-5)(-6.667,0)(-3.333,5)
\pscircle[fillstyle=solid,fillcolor=black](0,0){1}
}
\rput(0,-10){
\psline(-3.333,5)(3.333,5)(6.667,0)(3.333,-5)(-3.333,-5)(-6.667,0)(-3.333,5)
\pscircle[fillstyle=solid,fillcolor=black](0,0){1}
}
\rput(10,15){
\psline(-3.333,5)(3.333,5)(6.667,0)(3.333,-5)(-3.333,-5)(-6.667,0)(-3.333,5)
\pscircle[fillstyle=solid,fillcolor=black](0,0){1}
}
\rput(10,5){
\psline(-3.333,5)(3.333,5)(6.667,0)(3.333,-5)(-3.333,-5)(-6.667,0)(-3.333,5)
\pscircle[fillstyle=solid,fillcolor=black](0,0){1}
}
\rput(10,-5){
\psline(-3.333,5)(3.333,5)(6.667,0)(3.333,-5)(-3.333,-5)(-6.667,0)(-3.333,5)
\pscircle[fillstyle=solid,fillcolor=black](0,0){1}
}
\rput(10,-15){
\psline(-3.333,5)(3.333,5)(6.667,0)(3.333,-5)(-3.333,-5)(-6.667,0)(-3.333,5)
\pscircle[fillstyle=solid,fillcolor=black](0,0){1}
}

\pscircle[fillstyle=solid,fillcolor=black](0,20){1}
\pscircle[fillstyle=solid,fillcolor=black](20,10){1}
\pscircle[fillstyle=solid,fillcolor=black](20,-10){1}

}

\rput(0,60){

\psline[linewidth=2pt,linecolor=black!50!white](26.667,0)(13.333,-20)(-13.333,-20)(-26.667,0)

\rput(-20,0){
\psline(6.667,0)(3.333,-5)(-3.333,-5)(-6.667,0)\pscircle[fillstyle=solid,fillcolor=black](0,0){1}
}
\rput(-10,-5){
\psline(-3.333,5)(3.333,5)(6.667,0)(3.333,-5)(-3.333,-5)(-6.667,0)(-3.333,5)
\pscircle[fillstyle=solid,fillcolor=black](0,0){1}
}
\rput(-10,-15){
\psline(-3.333,5)(3.333,5)(6.667,0)(3.333,-5)(-3.333,-5)(-6.667,0)(-3.333,5)
\pscircle[fillstyle=solid,fillcolor=black](0,0){1}
}
\rput(20,0){
\psline(6.667,0)(3.333,-5)(-3.333,-5)(-6.667,0)\pscircle[fillstyle=solid,fillcolor=black](0,0){1}
}
\rput(0,0){
\psline(6.667,0)(3.333,-5)(-3.333,-5)(-6.667,0)
\pscircle[fillstyle=solid,fillcolor=black](0,0){1}
}
\rput(0,-10){
\psline(-3.333,5)(3.333,5)(6.667,0)(3.333,-5)(-3.333,-5)(-6.667,0)(-3.333,5)
\pscircle[fillstyle=solid,fillcolor=black](0,0){1}
}
\rput(10,-5){
\psline(-3.333,5)(3.333,5)(6.667,0)(3.333,-5)(-3.333,-5)(-6.667,0)(-3.333,5)
\pscircle[fillstyle=solid,fillcolor=black](0,0){1}
}
\rput(10,-15){
\psline(-3.333,5)(3.333,5)(6.667,0)(3.333,-5)(-3.333,-5)(-6.667,0)(-3.333,5)
\pscircle[fillstyle=solid,fillcolor=black](0,0){1}
}

\pscircle[fillstyle=solid,fillcolor=black](20,-10){1}

}
\rput(0,20){

\psline[linewidth=2pt,linecolor=black!50!white](-26.667,0)(-13.333,20)(13.333,20)(26.667,0)(13.333,-20)(-13.333,-20)(-26.667,0)

\rput(-20,0){
\psline(-3.333,5)(3.333,5)(6.667,0)(3.333,-5)(-3.333,-5)(-6.667,0)(-3.333,5)
\pscircle[fillstyle=solid,fillcolor=black](0,0){1}
}
\rput(-10,15){
\psline(-3.333,5)(3.333,5)(6.667,0)(3.333,-5)(-3.333,-5)(-6.667,0)(-3.333,5)
\pscircle[fillstyle=solid,fillcolor=black](0,0){1}
}
\rput(-10,5){
\psline(-3.333,5)(3.333,5)(6.667,0)(3.333,-5)(-3.333,-5)(-6.667,0)(-3.333,5)
\pscircle[fillstyle=solid,fillcolor=black](0,0){1}
}
\rput(-10,-5){
\psline(-3.333,5)(3.333,5)(6.667,0)(3.333,-5)(-3.333,-5)(-6.667,0)(-3.333,5)
\pscircle[fillstyle=solid,fillcolor=black](0,0){1}
}
\rput(-10,-15){
\psline(-3.333,5)(3.333,5)(6.667,0)(3.333,-5)(-3.333,-5)(-6.667,0)(-3.333,5)
\pscircle[fillstyle=solid,fillcolor=black](0,0){1}
}
\rput(20,0){
\psline(-3.333,5)(3.333,5)(6.667,0)(3.333,-5)(-3.333,-5)(-6.667,0)(-3.333,5)
\pscircle[fillstyle=solid,fillcolor=black](0,0){1}
}
\rput(0,10){
\psline(-3.333,5)(3.333,5)(6.667,0)(3.333,-5)(-3.333,-5)(-6.667,0)(-3.333,5)
\pscircle[fillstyle=solid,fillcolor=black](0,0){1}
}
\rput(0,0){
\psline(-3.333,5)(3.333,5)(6.667,0)(3.333,-5)(-3.333,-5)(-6.667,0)(-3.333,5)
\pscircle[fillstyle=solid,fillcolor=black](0,0){1}
}
\rput(0,-10){
\psline(-3.333,5)(3.333,5)(6.667,0)(3.333,-5)(-3.333,-5)(-6.667,0)(-3.333,5)
\pscircle[fillstyle=solid,fillcolor=black](0,0){1}
}
\rput(10,15){
\psline(-3.333,5)(3.333,5)(6.667,0)(3.333,-5)(-3.333,-5)(-6.667,0)(-3.333,5)
\pscircle[fillstyle=solid,fillcolor=black](0,0){1}
}
\rput(10,5){
\psline(-3.333,5)(3.333,5)(6.667,0)(3.333,-5)(-3.333,-5)(-6.667,0)(-3.333,5)
\pscircle[fillstyle=solid,fillcolor=black](0,0){1}
}
\rput(10,-5){
\psline(-3.333,5)(3.333,5)(6.667,0)(3.333,-5)(-3.333,-5)(-6.667,0)(-3.333,5)
\pscircle[fillstyle=solid,fillcolor=black](0,0){1}
}
\rput(10,-15){
\psline(-3.333,5)(3.333,5)(6.667,0)(3.333,-5)(-3.333,-5)(-6.667,0)(-3.333,5)
\pscircle[fillstyle=solid,fillcolor=black](0,0){1}
}

\pscircle[fillstyle=solid,fillcolor=black](0,20){1}
\pscircle[fillstyle=solid,fillcolor=black](20,10){1}
\pscircle[fillstyle=solid,fillcolor=black](20,-10){1}

}
\rput(0,-20){

\psline[linewidth=1.5pt,linecolor=black!50!white](-26.667,0)(-13.333,20)(13.333,20)(26.667,0)

\rput(-20,0){
\psline(-6.667,0)(-3.333,5)(3.333,5)(6.667,0)
\pscircle[fillstyle=solid,fillcolor=black](0,0){1}
}
\rput(-10,15){
\psline(-3.333,5)(3.333,5)(6.667,0)(3.333,-5)(-3.333,-5)(-6.667,0)(-3.333,5)
\pscircle[fillstyle=solid,fillcolor=black](0,0){1}
}
\rput(-10,5){
\psline(-3.333,5)(3.333,5)(6.667,0)(3.333,-5)(-3.333,-5)(-6.667,0)(-3.333,5)
\pscircle[fillstyle=solid,fillcolor=black](0,0){1}
}

\rput(20,0){
\psline(-6.667,0)(-3.333,5)(3.333,5)(6.667,0)
\pscircle[fillstyle=solid,fillcolor=black](0,0){1}
}
\rput(0,10){
\psline(-3.333,5)(3.333,5)(6.667,0)(3.333,-5)(-3.333,-5)(-6.667,0)(-3.333,5)
\pscircle[fillstyle=solid,fillcolor=black](0,0){1}
}
\rput(0,0){
\psline(-6.667,0)(-3.333,5)(3.333,5)(6.667,0)
\pscircle[fillstyle=solid,fillcolor=black](0,0){1}
}

\rput(10,15){
\psline(-3.333,5)(3.333,5)(6.667,0)(3.333,-5)(-3.333,-5)(-6.667,0)(-3.333,5)
\pscircle[fillstyle=solid,fillcolor=black](0,0){1}
}
\rput(10,5){
\psline(-3.333,5)(3.333,5)(6.667,0)(3.333,-5)(-3.333,-5)(-6.667,0)(-3.333,5)
\pscircle[fillstyle=solid,fillcolor=black](0,0){1}
}

\pscircle[fillstyle=solid,fillcolor=black](0,20){1}
\pscircle[fillstyle=solid,fillcolor=black](20,10){1}

}

\rput(-40,40){

\rput(20,0){
\psline(-3.333,5)(3.333,5)(6.667,0)(3.333,-5)(-3.333,-5)(-6.667,0)(-3.333,5)
\pscircle[fillstyle=solid,fillcolor=black](0,0){1}
}

\rput(10,15){
\psline(3.333,5)(6.667,0)(3.333,-5)
}
\rput(10,5){
\psline(3.333,5)(6.667,0)(3.333,-5)
}
\rput(10,-5){
\psline(3.333,5)(6.667,0)(3.333,-5)
}
\rput(10,-15){
\psline(3.333,5)(6.667,0)(3.333,-5)
}

\pscircle[fillstyle=solid,fillcolor=black](20,10){1}
\pscircle[fillstyle=solid,fillcolor=black](20,-10){1}

}
\rput(-40,0){

\rput(20,0){
\psline(-3.333,5)(3.333,5)(6.667,0)(3.333,-5)(-3.333,-5)(-6.667,0)(-3.333,5)
\pscircle[fillstyle=solid,fillcolor=black](0,0){1}
}

\rput(10,15){
\psline(3.333,5)(6.667,0)(3.333,-5)
}
\rput(10,5){
\psline(3.333,5)(6.667,0)(3.333,-5)
}
\rput(10,-5){
\psline(3.333,5)(6.667,0)(3.333,-5)
}
\rput(10,-15){
\psline(3.333,5)(6.667,0)(3.333,-5)
}

\pscircle[fillstyle=solid,fillcolor=black](20,10){1}
\pscircle[fillstyle=solid,fillcolor=black](20,-10){1}

}

\pscircle[linewidth=1.3,linecolor=black!50!white](0,20){3}
\pscircle[linewidth=1.3,linecolor=black!50!white](40,0){3}
\pscircle[linewidth=1.3,linecolor=black!50!white](40,40){3}
\pscircle[linewidth=1.3,linecolor=black!50!white](80,20){3}
\pscircle[linewidth=1.3,linecolor=black!50!white](0,-20){3}
\pscircle[linewidth=1.3,linecolor=black!50!white](80,-20){3}
\pscircle[linewidth=1.3,linecolor=black!50!white](0,60){3}
\pscircle[linewidth=1.3,linecolor=black!50!white](80,60){3}

\pscircle[fillstyle=solid,fillcolor=black](40,-20){1}

\end{pspicture}
\caption{Part of a nested lattice $\L \subset \L_1 \subset \Rbb^2$. Black points are elements of the fine lattice $\L_1$ and gray circles are elements of the coarse lattice $\L$. The Voronoi regions for the fine and coarse lattice are drawn in black and gray respectively. A nested lattice code is the set of all fine lattice points within the Voronoi region of the coarse lattice centered on the origin.}
\label{f:nestedlattices}
\end{center}
\end{figure}

Let $\mathcal{B}(r)$ denote an $n$-dimensional ball of radius $r$,
\begin{align}
\Bm(r) \triangleq \{\sb : \|\sb\| \leq r,~\sb \in \Rbb^n\}
\end{align} and let Vol$(\Bm(r))$ denote its volume.

\begin{definition}[Covering Radius] The \textit{covering radius} of a lattice $\Lambda$ is the smallest real number $\rcov$ such that \mbox{$\Rbb^n \subseteq \Lambda + \Bm(\rcov)$}.
\end{definition} 

\begin{definition}[Effective Radius] \label{d:reffec} The \textit{effective radius} of a lattice with Voronoi region $\Vm$ is the real number $\reffec$ that satisfies $\mathsf{Vol}(\mathcal{B}(\reffec)) = \mathsf{Vol}(\mathcal{V})$.
\end{definition} 

\begin{definition}[Moments] The \textit{second moment} of a lattice $\Lambda$ is  defined as the second moment per dimension of a uniform distribution over the fundamental Voronoi region $\Vm$,
\begin{align}
\sigma_\Lambda^2 = \frac{1}{n\mathsf{Vol}(\Vm)} \int_{\Vm}{\|\xb\|^2 d\xb} \ .
\end{align} The \textit{normalized second moment} of a lattice is given by
\begin{align}
G(\Lambda) = \frac{\sigma_\Lambda^2}{(\mathsf{Vol}(\Vm))^{2/n}} \ . \label{e:normsecmoment}
\end{align}
\end{definition}

The following three definitions are the basis for proving AWGN channel coding theorems using nested lattice codes. Let $\Lambda^{(n)}$ denote a sequence of lattices indexed by their dimension.

\begin{definition}[Covering Goodness] A sequence of lattices $\Lambda^{(n)} \subset \Rbb^n$ is \textit{good for covering} if
\begin{align}
\lim_{n \rightarrow \infty}{\frac{\rcov^{(n)}}{\reffec^{(n)}}} = 1 \ .
\end{align} Such lattices were shown to exist by Rogers \cite{rogers59}. \end{definition}

\begin{definition}[Quantization Goodness] A sequence of lattices $\Lambda^{(n)} \subset \Rbb^n$ is \textit{good for mean-squared error (MSE) quantization} if
\begin{align}
\lim_{n \rightarrow \infty}{G(\Lambda^{(n)})} = \frac{1}{2 \pi e} \ .
\end{align} Zamir, Feder, and Poltyrev showed that sequences of such lattices exist in \cite{zf96}.
\end{definition}

\begin{definition}[AWGN Goodness] \label{d:awgngood} Let $\mathbf{z}$ be a length-$n$ i.i.d. Gaussian vector, $\zb \sim \mathcal{N}(0, \sigma^2_Z \mathbf{I}^{n \times n})$. The volume-to-noise ratio of a lattice is given by
\begin{align}
\mu(\Lambda,\epsilon) = \frac{(\V(\Vm))^{2/n}}{\sigma^2_Z} \label{e:volnoiseratio}
\end{align} where $\sigma^2_Z$ is chosen such that $\Pr\{ \mathbf{z} \notin \Vm\} = \epsilon$. A sequence of lattices $\Lambda^{(n)}$ is \textit{good for AWGN} if 
\begin{align}
~~~~~~~~~~~~~~~~~ \lim_{n \rightarrow \infty}{\mu(\Lambda^{(n)},\epsilon)} = 2 \pi e  ~~~~ \forall \epsilon \in (0,1)
\end{align} and, for fixed volume-to-noise ratio greater than $2 \pi e$, $\Pr\{ \mathbf{z} \notin \Vm^{(n)}\}$ decays exponentially in $n$. In \cite{poltyrev94}, Poltyrev demonstrated the existence of such lattices. \end{definition}

\subsection{Lattice Constructions}\label{s:latticeconstruct}

Our nested lattice codes are a slight variant of those used by Erez and Zamir to approach the capacity of a point-to-point AWGN channel \cite{ez04}. As in their considerations, we will have a coarse lattice that is good for covering, quantization, and AWGN and a fine lattice that is good for AWGN. We generalize this construction to include multiple nested fine lattices all of which are good for AWGN. This will allow each transmitter to operate at a different rate. 

\begin{lemma}[Erez-Litsyn-Zamir] \label{l:coarselattice} There exists a sequence of lattices $\Lambda^{(n)}$ that is simultaneously good for covering, quantization, and AWGN. \end{lemma}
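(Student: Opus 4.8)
The plan is to produce the sequence via a probabilistic argument over a single random ensemble of lattices, from which, for each $n$, one extracts a lattice with all three properties by a union bound. The first move is a reduction: goodness for covering \emph{automatically} forces goodness for MSE quantization. Indeed, if $\Lambda^{(n)}$ is good for covering then $\Vm^{(n)} \subseteq \Bm(\rcov^{(n)})$, so $\sigma^2_{\Lambda^{(n)}} \le (\rcov^{(n)})^2/(n+2)$; dividing by $\V(\Vm^{(n)})^{2/n} = (\reffec^{(n)})^2 \omega_n^{2/n}$, where $\omega_n = \V(\Bm(1))$, and using $(n+2)\,\omega_n^{2/n} \to 2\pi e$ together with $\rcov^{(n)}/\reffec^{(n)} \to 1$ gives $G(\Lambda^{(n)}) \to 1/(2\pi e)$, the least possible value. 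Hence it suffices to build a sequence that is simultaneously good for covering and good for AWGN.

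To that end, fix a rate parameter, let $p = p(n)$ be a prime growing polynomially in $n$, draw a uniformly random $k \times n$ generator matrix over $\Fbb_p$ (with $k/n$ tied to the rate), let $C$ be the linear code it spans, and form the Construction-A lattice $\Lambda_0 = p^{-1}(\Zbb^n + C)$, finally scaled to $\Lambda = \gamma_n \Lambda_0$ so that $\reffec$ takes the prescribed value. This is essentially the Erez--Litsyn--Zamir ensemble, and over it one establishes two concentration statements. \emph{AWGN-goodness:} by a Loeliger/Poltyrev-style averaging bound, for every fixed volume-to-noise ratio exceeding $2\pi e$ the ensemble-averaged probability that Gaussian noise escapes $\Vm$ decays exponentially in $n$; hence all but an exponentially small fraction of the lattices satisfy $\mu(\Lambda,\epsilon) \le 2\pi e + \delta_n$ (for a sequence $\delta_n \downarrow 0$) with the exponential tail intact, which together with the general converse $\liminf_n \mu(\Lambda^{(n)},\epsilon) \ge 2\pi e$ yields AWGN-goodness in the limit. \emph{Covering-goodness:} for any fixed $\delta>0$, a Rogers-type first-moment computation adapted to the Construction-A ensemble shows that the expected fraction of a fundamental cell not covered by $\Lambda + \Bm((1+\delta)\reffec)$ tends to $0$; a net argument over the cell then upgrades this to the statement that, with probability tending to $1$, the whole cell is covered, i.e. $\rcov \le (1+\delta)\reffec$. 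Letting $\delta = \delta_n \downarrow 0$ slowly enough preserves probability tending to $1$.

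Since the two bad events (failing AWGN-goodness within slack $\delta_n$; failing $\rcov/\reffec \le 1+\delta_n$) each have probability tending to $0$, so does their union; thus for all large $n$ there is a lattice $\Lambda^{(n)}$ in the $n$-th ensemble that is both covering-good and AWGN-good to within $\delta_n$, and therefore, by the reduction of the first paragraph, also quantization-good. Collecting these lattices gives the required sequence.

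I expect the covering estimate to be the main obstacle. Rogers' classical argument averages over the continuous ensemble of unimodular lattices, whereas here the averaging is over the discrete, highly structured Construction-A ensemble; one must control the contribution of atypical (very short, or nearly aligned) codewords and verify that, in the radius regime $r \approx \reffec$, the ensemble-averaged count of lattice points near a generic point obeys the Minkowski--Hlawka/Poisson heuristic closely enough for the uncovered-volume bound to vanish. The AWGN part, by contrast, is Poltyrev's theorem transported to the same ensemble, and the quantization part is the elementary reduction above.
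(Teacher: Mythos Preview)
The paper does not give a proof of this lemma: it simply states that the result is a corollary of \cite[Theorem~5]{elz05} (Erez--Litsyn--Zamir), noting that the lattices there are built via Construction~A. So there is nothing in the paper to compare your argument against; you have sketched the proof of the cited result itself rather than anything the present paper does.

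Your outline is, at the high level, the Erez--Litsyn--Zamir strategy: a single Construction-A ensemble, Poltyrev/Loeliger averaging for AWGN-goodness, a Rogers-type argument for covering, and a union bound to extract a single good sequence. The reduction ``covering-good $\Rightarrow$ quantization-good'' is correct in spirit, but your inequality $\sigma_{\Lambda}^2 \le (\rcov)^2/(n+2)$ is not quite right: containment $\Vm\subseteq\Bm(\rcov)$ only gives the cruder bound $\sigma_\Lambda^2 \le \rcov^2/n$ (a uniform distribution on $\Vm$ could sit near the boundary and have larger second moment than the uniform distribution on the ball). This does not affect the limit, since $n\,\omega_n^{2/n}\to 2\pi e$ just as $(n+2)\,\omega_n^{2/n}$ does, so the conclusion $G(\Lambda^{(n)})\to 1/(2\pi e)$ stands.

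You have correctly flagged the genuine difficulty: transporting Rogers' covering argument from the Haar-random unimodular ensemble to the discrete Construction-A ensemble is the nontrivial technical content of \cite{elz05}, and your sketch stops short of it. If you intend to actually carry this out rather than cite it, that is where the work lies; the AWGN part and the quantization reduction are, as you say, comparatively routine.
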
 This is a corollary of their main result which develops lattices that are good in all the above senses as well as for packing \cite[Theorem 5]{elz05}. Note that these lattices are built using Construction A which is described below.

We will use a coarse lattice $\Lambda$ of dimension $n$ from Lemma \ref{l:coarselattice} scaled such that its second moment is equal to $P$. Let $\mathbf{B} \in \Rbb^{n \times n}$ denote the generator matrix of this lattice. Our fine lattices are defined using the following procedure (the first three steps of which are often referred to as Construction A \cite{loeliger97,elz05}):

\begin{enumerate}
\item Draw a matrix $\mathbf{G}_L \in \Fbb^{n \times k_L}_p$ with every element chosen i.i.d. according to the uniform distribution over $\{0,1,2,\ldots, p-1\}$. Recall that $p$ is prime. 
\item Define the codebook $\mathcal{C}_L$ as follows:
\begin{align}
\mathcal{C}_L = \left\{ \mathbf{c} =  \mathbf{G}_L\wb:\mathbf{w} \in \mathbb{F}_p^{k_L} \right\}.
\end{align} All operations in this step are over $\mathbb{F}_p$. 

\item Form the lattice $\tilde{\Lambda}_L$ by projecting the codebook into the reals by $g(\cdot)$, scaling down by a factor of $p$, and placing a copy at every integer vector. This tiles the codebook over $\mathbb{R}^n$,
\begin{align}
\tilde{\Lambda}_L = p^{-1}g( \mathcal{C}_L) + \mathbb{Z}^n \ .
\end{align}
\item Rotate $\tilde{\Lambda}_L$ by the generator matrix of the coarse nested lattice to get the fine lattice for transmitter $L$,
\begin{align}
\Lambda_L = \mathbf{B}\tilde{\Lambda}_L \ . 
\end{align}
\item Repeat steps 1) - 4) for each transmitter $\ell = 1,2,\ldots, L-1$ by replacing $\Gb_L$ with $\Gb_\ell$ which is defined to be the first $k_\ell$ columns of $\Gb_L$.
\end{enumerate}

Recall that $k_1 \geq \cdots \geq k_L$. Any pair of fine lattices $\Lambda_{\ell_1},\Lambda_{\ell_2}, 1 \leq \ell_1 < \ell_2 < L$ are nested since all elements of $\mathcal{C}_{\ell_1}$ can be found from $\mathbf{G}_{\ell_2}$ by multiplying by all $\mathbf{w} \in \mathbb{F}^{n \times k_{\ell_2}}$ with zeros in the last $\ell_2 - \ell_1$ elements.  Also observe that $\Lambda = \mathbf{B} \Zbb^n$ is nested within each fine lattice by construction. Therefore, the lattices are nested in the desired order, $\Lambda \subseteq \Lambda_L \subseteq \cdots \subseteq \Lambda_1$. 

We now enforce that all the underlying generator matrices $\mathbf{G}_\ell$ are full rank. By the union bound, we get that:

\begin{align}
\Pr\left(\bigcup_{\ell=1}^L{\left\{\mathsf{rank}(\Gb_\ell) < k_\ell\right\}}\right) &\leq \sum_{\ell=1}^L{\sum_{ \begin{subarray}{l}\wb \in \Fbb_p^{k_\ell} \\ \wb \neq \mathbf{0}\end{subarray}}{\Pr\left\{\Gb_\ell \wb = \mathbf{0}\right\}}} \nonumber \\
& = p^{-n} \sum_{\ell = 1}^L{(p^{k_\ell} - 1)}
\end{align}

Thus, by choosing $p$ and $k_1,\ldots,k_L$ to grow appropriately with $n$, all matrices $\Gb_1, \ldots, \Gb_L$ are full rank with probability that goes to $1$ with $n$. Note that if $\mathbf{G}_\ell$ has full rank, then the number of fine lattice points in the fundamental Voronoi region $\mathcal{V}$ of the coarse lattice is given by $|\Lambda_\ell \cap \mathcal{V}| = p^{k_\ell}$ so that the rate of the $\ell^{\text{th}}$ nested lattice code $\mathcal{L}_\ell = \Lambda_\ell \cap \mathcal{V} $ is
\begin{align}
r_\ell = \frac{1}{n} \log{|\Lambda_\ell \cap \mathcal{V}|}=  \frac{k_\ell}{n}\log{p} = R_\ell
\end{align} as desired. (In the complex-valued case, we set $r_\ell = R_\ell/2$.)  
In Appendix \ref{s:awgngood}, we show that the fine lattices are AWGN good so long as $\frac{n}{p} \rightarrow 0$ as $n$ grows. There are many choices of $p$ and $k_1, \ldots, k_L$ that will ensure that the fine lattices have the desired properties. One possibility is to let $p$ grow like $n \log{n}$ and set $k_\ell = \lfloor nR_\e(\log{p})\n \rfloor$.

\begin{remark}
We require that the fine lattices are generated from full-rank submatrices of the same finite field codebook so that it is possible to compute linear equations over messages with different rates. The full rank condition on the coarse lattice allows us to move between lattice equations and equations of finite field messages.
\end{remark}

In \cite{elz05, kp07report}, some useful properties of nested lattices derived from Construction A are established. These apply to our construction as well and we repeat them below.

\begin{lemma} \label{l:latprops}
Let $\Lambda_\ell(i)$ denote the $\ith$ point in the $\ellth$ nested lattice code $\mathcal{L}_\ell = \Lambda_\ell \cap \Vm$ for $i = 0,1,2,\ldots, p^{k_\ell}-1$ from the random lattice construction above. We have that:
\begin{itemize}
\item $\Lambda_\ell(i)$ is uniformly distributed over $p^{-1} \Lambda \cap \Vm$.
\vspace{0.1in}
\item For any $i_1 \neq i_2, ~[\Lambda_\ell(i_1) - \Lambda_\ell(i_2)] ~\modl$ is uniformly distributed over $\{p^{-1} \Lambda\} \cap \Vm$.
\end{itemize}
\end{lemma}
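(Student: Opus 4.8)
The plan is to trace a single codeword, and a difference of two codewords, back through the Construction~A pipeline of Section~\ref{s:latticeconstruct}, and to observe that the only randomness that survives the trace is the image of a fixed nonzero vector in $\Fbb_p^{k_\ell}$ under the i.i.d.-uniform matrix $\Gb_\ell$, which is uniform over $\Fbb_p^n$.

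First I would fix the correspondence between messages and codewords: by construction the codeword $\Lambda_\ell(i)$ labeled by message $\wb \in \Fbb_p^{k_\ell}$ equals $\Lambda_\ell(i) = \big[\mathbf{B}\,p^{-1} g(\Gb_\ell \wb)\big]\modl$, the unique representative in $\Vm$ of the coset $\mathbf{B}\,p^{-1} g(\Gb_\ell \wb) + \Lambda$ of $\Lambda$ inside $p^{-1}\Lambda$. I would then record the purely geometric fact that $\{p^{-1}\Lambda\} \cap \Vm$ is a complete system of representatives for the $p^n$ cosets of $\Lambda$ in $p^{-1}\Lambda$, and that $\vb \mapsto \big[\mathbf{B}\,p^{-1} g(\vb)\big]\modl$ is a well-defined bijection from $\Fbb_p^n$ (equivalently $\Zbb^n / p\Zbb^n$) onto this set. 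Granting these, each of the two bullets reduces to showing that a certain $\Fbb_p^n$-valued random vector is uniformly distributed.

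The probabilistic core is elementary: if $\Gb \in \Fbb_p^{n \times k}$ has i.i.d.\ uniform entries and $\vb \in \Fbb_p^k$ is nonzero, then $\Gb\vb$ is uniform over $\Fbb_p^n$. Indeed, the rows of $\Gb$ are i.i.d., so the entries of $\Gb\vb$ are mutually independent; for a single entry, choose a coordinate $j$ with $v_j \neq 0$, condition on the other entries of that row, and use that $v_j$ is a unit of the field $\Fbb_p$ (here primality of $p$ is essential), so that $G_{ij}v_j$ — and hence the whole entry — is uniform. For the first bullet I apply this with $\vb = \wb$ for any $\wb \neq \mathbf{0}$; the index of the all-zero message is the evident degenerate exception, where $\Lambda_\ell(i) = \mathbf{0}$ deterministically. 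For the second bullet, let $\wb_1, \wb_2$ be the messages labeling $i_1 \neq i_2$; using properties~(\ref{e:modprop1}) and~(\ref{e:modprop3}) of $[\,\cdot\,]\modl$, together with the fact that reduction modulo $p$ commutes with the (essentially identity) map $g$ — concretely $p^{-1}\big(g(\Gb_\ell \wb_1) - g(\Gb_\ell \wb_2)\big) \equiv p^{-1} g\big(\Gb_\ell(\wb_1 \oplus (-\wb_2))\big) \pmod{\Zbb^n}$ — one obtains $\big[\Lambda_\ell(i_1) - \Lambda_\ell(i_2)\big]\modl = \big[\mathbf{B}\,p^{-1} g\big(\Gb_\ell(\wb_1 \oplus (-\wb_2))\big)\big]\modl$, and since $\wb_1 \oplus (-\wb_2) \neq \mathbf{0}$, the core claim with $\vb = \wb_1 \oplus (-\wb_2)$ finishes the argument; note that here neither message need be individually nonzero, matching the statement.

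The main obstacle is bookkeeping rather than depth: one must verify that the scaling by $p^{-1}$, the integer projection $g(\cdot)$, the tiling by $\Zbb^n$, the rotation by $\mathbf{B}$, and the final reduction $[\,\cdot\,]\modl$ compose correctly — in particular the displayed congruence, which is precisely the point at which the finite-field and real-valued descriptions of the lattice are glued together. Once this is in hand, the proof is just the one-line uniformity claim above together with the coset count; alternatively one may simply invoke \cite{elz05, kp07report}, where these facts are established for Construction~A lattices.
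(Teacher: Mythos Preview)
Your proof is correct and is essentially the standard Construction~A argument; the paper does not give its own proof of this lemma but simply cites \cite{elz05, kp07report}, and your sketch supplies exactly the details one finds there. The one substantive observation you make that the paper's statement glosses over is the degenerate case of the all-zero message in the first bullet: as you note, $\Lambda_\ell(i) = \mathbf{0}$ deterministically when $\wb = \mathbf{0}$, so the uniform-distribution claim really holds only for the nonzero indices --- this is a well-known wrinkle in the Construction~A literature and does not affect any downstream use of the lemma, since only the second (pairwise) property is invoked in Appendix~\ref{s:awgngood}.
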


Thus, each fine lattice can be interpreted as a diluted version of a scaled down coarse lattice $p^{-1} \Lambda$. 

\subsection{Integer Combinations of Lattice Points}

Our scheme relies on mapping messages from a finite field to codewords from a nested lattice code. The relay will first decode an integer combination of lattice codewords and then convert this into an equation of the messages. 

\begin{definition}[Lattice Equation] A \textit{lattice equation} $\vb$ is an integer combination of lattice codewords $\tb_\ell \in \mathcal{L}_\ell$ modulo the coarse lattice,
\begin{align}
\vb = \left[ \sum_{\ell = 1}^L{a_\ell \tb_\ell}\right] \modl
\end{align} for some coefficients $a_\ell \in \Zbb$.
\end{definition} Note that the lattice equation takes values on the finest lattice in the summation. That is, if $a_{1},\ldots, a_{\ell -1} = 0$ then the lattice equation $\vb$ only takes values on $\mathcal{L}_\ell = \Lambda_\ell \cap \mathcal{V}$. 

\begin{lemma} \label{l:consArank}
Any lattice $\Lambda$ that results from Construction A has a full-rank generator matrix $\mathbf{B}$.
\end{lemma}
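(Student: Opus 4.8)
The plan is to realize every Construction-A lattice as a subgroup of $\Rbb^n$ sandwiched between two scalings of $\Zbb^n$, and then invoke the elementary fact that a discrete subgroup of $\Rbb^n$ that spans $\Rbb^n$ over $\Rbb$ is a rank-$n$ lattice, hence has an invertible $n \times n$ generator matrix. Recall from Section~\ref{s:latticeconstruct} that, before the rotation by the coarse generator matrix $\mathbf{B}$, a Construction-A lattice has the form $\tilde{\Lambda} = p^{-1} g(\mathcal{C}) + \Zbb^n$ for a linear code $\mathcal{C} \subseteq \Fbb_p^n$ (the coarse lattice $\Lambda = \mathbf{B}\Zbb^n$ itself being the special case $\mathcal{C} = \{\mathbf{0}\}$), and the rotated lattice is $\mathbf{B}\tilde{\Lambda}$ with $\mathbf{B}$ invertible. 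Since an invertible linear map carries a full-rank generator matrix to a full-rank generator matrix, it suffices to prove the claim for $\tilde{\Lambda}$.

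First I would check that $\tilde{\Lambda}$ is indeed a lattice, i.e.\ a discrete subgroup of $\Rbb^n$. Discreteness is immediate from the inclusion $\tilde{\Lambda} \subseteq p^{-1}\Zbb^n$, which holds because each entry of $g(\mathbf{c})$ lies in $\{0,1,\ldots,p-1\} \subseteq \Zbb$, and $p^{-1}\Zbb^n$ is discrete. Closure under addition uses the linearity of $\mathcal{C}$: for $\mathbf{c}_1, \mathbf{c}_2 \in \mathcal{C}$ we have $g(\mathbf{c}_1) + g(\mathbf{c}_2) = g(\mathbf{c}_1 \oplus \mathbf{c}_2) + p\,\mathbf{k}$ for some integer vector $\mathbf{k}$ (the carries), so the real sum of two elements of $\tilde{\Lambda}$ again lies in $p^{-1}g(\mathcal{C}) + \Zbb^n$; closure under negation is analogous. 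This bookkeeping is the only computation in the proof, and it is routine.

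Next, taking the all-zero codeword in the definition of $\tilde{\Lambda}$ yields $\Zbb^n \subseteq \tilde{\Lambda}$, so combined with the inclusion above,
\begin{align}
\Zbb^n \subseteq \tilde{\Lambda} \subseteq p^{-1}\Zbb^n \ . \nonumber
\end{align}
In particular $\tilde{\Lambda}$ contains the $n$ linearly independent standard basis vectors of $\Zbb^n$, so it spans $\Rbb^n$ over $\Rbb$. A discrete subgroup of $\Rbb^n$ that spans $\Rbb^n$ has rank exactly $n$ and therefore admits a basis of $n$ (necessarily $\Rbb$-linearly independent) vectors; arranging them as columns gives a generator matrix $\mathbf{B}' \in \Rbb^{n \times n}$ with $\det \mathbf{B}' \neq 0$. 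Hence $\tilde{\Lambda}$ has a full-rank generator matrix, and so does $\mathbf{B}\tilde{\Lambda}$ via the generator matrix $\mathbf{B}\mathbf{B}'$. There is no genuine obstacle here: the only points worth stating explicitly are that a lattice has a full-rank generator matrix if and only if it linearly spans $\Rbb^n$, and that no hypothesis on $\mathsf{rank}(\Gb)$ is needed, since the sandwich $\Zbb^n \subseteq \tilde{\Lambda} \subseteq p^{-1}\Zbb^n$ holds for every linear code $\mathcal{C}$.
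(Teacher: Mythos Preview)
Your proof is correct and rests on the same key observation as the paper's: the inclusion $\Zbb^n \subseteq \tilde{\Lambda}$ (via the zero codeword) shows the lattice contains the standard basis vectors and hence spans $\Rbb^n$, forcing any generator matrix to be full rank. The paper's proof consists of exactly this one line; your extra verification that $\tilde{\Lambda}$ is a discrete subgroup and your discussion of the rotated case go beyond what the lemma asks but are harmless.
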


\begin{proof}
Note that $\Zbb^n \subset \Lambda$ so that $\Lambda$ contains all of the unit vectors by default. Thus, $\mathbf{B}$ spans $\Rbb^n$ and is full rank.
\end{proof}


Since our nested lattice codes are built using nested finite field codes, it is possible to map messages to lattice points and back while preserving linearity. The next two lemmas make this notion precise.


\begin{lemma}\label{l:fieldtolattice}
Let $\mathbf{w}_\ell$ be a message in $\Fbb_p^{k_\ell}$ that is zero-padded to length $k$. The function
\begin{align}
\phi(\wb_\ell) = \left[\Lb p^{-1}g( \Gb \mathbf{w}_\ell) \right]\modl \label{e:fieldtolattice}
\end{align} is a one-to-one map between the set of such messages and the elements of the nested lattice code $\mathcal{L}_\ell = \L_\ell \cap \mathcal{V}$.
\end{lemma}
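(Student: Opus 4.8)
The plan is to view $\phi$ as a composition --- first the finite-field encoding $\wb_\ell \mapsto \Gb\wb_\ell$, then the Construction-A embedding $\cb \mapsto [\Bb p^{-1} g(\cb)]\modl$ --- and to show each piece is a bijection onto the relevant set, working throughout on the high-probability event (already invoked when the lattices were built) that every $\Gb_\ell$ has full column rank and that the coarse-lattice generator matrix $\Bb$ (the matrix written $\Lb$ in the statement) is invertible; the latter is Lemma~\ref{l:consArank}, since the coarse lattice comes from Lemma~\ref{l:coarselattice} via Construction~A. The structural fact to keep in view is that, because $\wb_\ell$ is zero-padded, $\Gb\wb_\ell$ equals the $k_\ell$-column submatrix $\Gb_\ell$ applied to the information part, so $\wb_\ell \mapsto \Gb\wb_\ell$ is injective with image $\Cm_\ell$, a set of size $p^{k_\ell}$.

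Next I would check that $\phi(\wb_\ell)$ really lands in $\mathcal{L}_\ell = \Lambda_\ell \cap \Vm$. Since $\Lambda_\ell = \Bb\bigl(p^{-1}g(\Cm_\ell) + \Zbb^n\bigr)$, taking the integer part to be $\mathbf{0}$ shows $\Bb p^{-1}g(\Gb\wb_\ell) \in \Lambda_\ell$; because $\Lambda \subseteq \Lambda_\ell$, reducing modulo $\Lambda$ leaves the point in $\Lambda_\ell$ while bringing it into $\Vm$, so $\phi(\wb_\ell) \in \mathcal{L}_\ell$. (Ties of $Q_\Lambda$ on $\partial\Vm$ are broken by whatever fixed fundamental region defines $\mathcal{L}_\ell$.)

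For injectivity, $\phi(\wb) = \phi(\wb')$ forces $\Bb p^{-1}\bigl(g(\Gb\wb) - g(\Gb\wb')\bigr) \in \Lambda = \Bb\Zbb^n$, so invertibility of $\Bb$ gives $g(\Gb\wb) - g(\Gb\wb') \in p\Zbb^n$; since every entry of $g(\Gb\wb)$ and $g(\Gb\wb')$ lies in $\{0,1,\dots,p-1\}$, each coordinate of the difference is a multiple of $p$ strictly between $-p$ and $p$, hence $0$. Thus $g(\Gb\wb)=g(\Gb\wb')$, i.e. $\Gb\wb = \Gb\wb'$ over $\Fbb_p$, and restricting to the $k_\ell$ information coordinates and using the full column rank of $\Gb_\ell$ yields $\wb=\wb'$. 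Surjectivity then follows by counting --- there are $p^{k_\ell}$ zero-padded messages and $|\mathcal{L}_\ell| = |\Lambda_\ell \cap \Vm| = p^{k_\ell}$ as recorded in the construction --- or, equivalently, by running the chain backwards: an arbitrary $\lambda \in \Lambda_\ell$ is $\Bb(p^{-1}g(\cb)+\zb)$ for some $\cb\in\Cm_\ell$, $\zb\in\Zbb^n$, so $[\lambda]\modl = [\Bb p^{-1}g(\cb)]\modl = \phi(\wb_\ell)$ where $\wb_\ell$ is the zero-padding of the unique $\wb$ with $\Gb_\ell\wb = \cb$.

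I do not anticipate a genuine obstacle: everything reduces to the invertibility of $\Bb$ and the full rank of the $\Gb_\ell$ (both already established), plus the elementary facts that $[\xb]\modl = [\yb]\modl$ precisely when $\xb-\yb\in\Lambda$ and that two integer vectors with entries in $\{0,\dots,p-1\}$ which agree modulo $p\Zbb^n$ coincide. The one point that truly needs attention is the bookkeeping of the zero-padding, so that injectivity gets applied to the submatrix $\Gb_\ell$ rather than to the larger matrix $\Gb$.
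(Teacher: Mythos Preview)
Your proposal is correct and follows essentially the same approach as the paper: decompose $\phi$ into the finite-field encoding step (injective by full rank of $\Gb_\ell$ after handling the zero-padding) and the Construction-A embedding (injective via invertibility of $\Bb$), then conclude bijectivity by the cardinality match $|\Fbb_p^{k_\ell}| = |\Lambda_\ell \cap \Vm| = p^{k_\ell}$. The only cosmetic differences are that the paper phrases the second injectivity step as ``$\modl$ is a bijection from $\Bb[0,1)^n$ to $\Vm$'' proved by contradiction, whereas you argue directly on the lattice points, and you additionally check that the image lands in $\mathcal{L}_\ell$, which the paper leaves implicit.
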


\begin{proof}
Since the last $k - k_{\ell}$ elements of $\mathbf{w}_\ell$ are zero, multiplying the message by $\mathbf{G}$ is the same as multiplying the first $k_{\ell}$ elements by $\mathbf{G}_\ell$. Since $\Gb_\ell$ is assumed to be full rank, it takes $\wb_\ell$ to a unique point in the finite field codebook $\Cm_\ell$. The function $g$ simply maps finite field elements to integers and $p^{-1}$ is a rescaling so $p^{-1} g(\Gb \wb_\ell)$ maps $\wb_\ell$ to a unique point in $[0,1)^n$.  Lemma \ref{l:consArank} shows that $\mathbf{B}$ is full rank so we just need show that the $\modl$ operation is a bijection between $\mathbf{B} [0,1)^n$ and $\Vm$. Assume, for the sake of a contradiction, $\exists x,y \in \Bb [0,1)^n, x\neq y$ such that {$[x]\mod\L = [y] \mod\L$}. This implies that $x - Q_\L(x) = y - Q_\L(y)$. Now multiply both sides by $\mathbf{B}^{-1}$ and then take the modulus with respect to $\Zbb^n$,
\begin{align*}
[\mathbf{B}^{-1} (x - Q_\L(x))]\modz &= [\mathbf{B}^{-1}(y - Q_\L(y))]\modz\\
[\mathbf{B}^{-1} x ]\modz &= [\mathbf{B}^{-1}y ]\modz \\
x&= y
\end{align*} where the second line follows since for any $\lambda \in \Lambda, ~\mathbf{B}\n \lambda \in \Zbb^n$. A contradiction has been reached which shows that $\modl$ is a bijection. Combining this with the fact that the finite field and the nested lattice code have the same number of elements, $|\Fbb_p^{k_\ell}| = |\L_\ell \cap \Vm| = p^{k_\ell}$, shows that $\phi_\ell$ is a one-to-one map.
\end{proof}


\begin{lemma}\label{l:msgeqn}
Let $\ub = \bigoplus_\ell{q_\ell \wb_\ell}$ be the desired equation for some coefficients $q_\ell \in \Fbb_p$ and messages $\wb_\ell \in \Fbb_p^{k_\ell}$ zero-padded to length $k$. Assume the messages are mapped to nested lattice codewords, $\tb_\ell = \phi(\wb_\ell)$, and let \mbox{$\vb = [\sum{a_\ell \tb_\ell}] ~\modl$} denote the lattice equation for some $a_\ell \in \Zbb$ such that $q_\ell = g\n([a_\ell] \mod{p})$. Then the desired equation can be obtained using $\ub = \phi^{-1}(\vb)$ where \begin{align}
\phi^{-1}(\vb) &= (\mathbf{G}^T \mathbf{G})^{-1} \mathbf{G}^T g\n\left( p [\mathbf{B}\n \mathbf{v}]\modz \right).
\end{align} 
\end{lemma}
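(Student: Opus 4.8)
The plan is to chase the lattice equation $\vb$ through the successive operations in the claimed formula and show that it returns to $\Gb\ub \in \Fbb_p^n$, after which a left inverse of $\Gb$ extracts $\ub$ itself. The only ingredients needed are the modulo identities \eqref{e:modprop1}--\eqref{e:modprop4}, the fact (Lemma \ref{l:consArank}) that $\Lambda = \Bb\Zbb^n$ with $\Bb$ invertible, and linearity of multiplication by $\Gb$ over $\Fbb_p$.

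First I would substitute $\tb_\ell = \phi(\wb_\ell) = [\Bb p^{-1} g(\Gb \wb_\ell)]\modl$ from \eqref{e:fieldtolattice} into $\vb = \big[\sum_{\ell=1}^{L} a_\ell \tb_\ell\big]\modl$. Since the $a_\ell$ are integers, each $\tb_\ell$ differs from $\Bb p^{-1}g(\Gb\wb_\ell)$ by an element of $\Lambda$ (equivalently, apply \eqref{e:modprop3} and \eqref{e:modprop1}), so the inner $\modl$ operations may be dropped and the sum merged by linearity of $\sb\mapsto\Bb p^{-1}\sb$:
\begin{align*}
\vb = \Big[\Bb p^{-1}\sum_{\ell=1}^{L} a_\ell\, g(\Gb \wb_\ell)\Big]\modl \ .
\end{align*}
Next I would apply $\Bb^{-1}$ and reduce modulo $\Zbb^n$. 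Because $\Lambda = \Bb\Zbb^n$, the quantization point $Q_\Lambda(\cdot)$ lands in $\Zbb^n$ after multiplication by $\Bb^{-1}$; this is the matrix version of \eqref{e:modprop4} with $\beta$ replaced by $\Bb^{-1}$, and it gives $[\Bb^{-1}\vb]\modz = [p^{-1}\sum_\ell a_\ell g(\Gb\wb_\ell)]\modz$. Multiplying by $p$ and invoking \eqref{e:modprop4} with $\beta = p$ then produces
\begin{align*}
p\,[\Bb^{-1}\vb]\modz = \Big[\sum_{\ell=1}^{L} a_\ell\, g(\Gb \wb_\ell)\Big]\modpz \ .
\end{align*}

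At this point the argument is an integer vector, so reduction modulo $p\Zbb^n$ is entrywise reduction modulo $p$, and applying $g^{-1}$ returns the corresponding vector of $\Fbb_p^n$. Since $x\mapsto g^{-1}([x]\modp)$ is a ring homomorphism $\Zbb\to\Fbb_p$ and $q_\ell = g^{-1}([a_\ell]\modp)$ by hypothesis, this vector equals $\bigoplus_{\ell=1}^{L} q_\ell\,(\Gb\wb_\ell)$ computed over $\Fbb_p$; by linearity of $\Gb$ this is $\Gb\big(\bigoplus_\ell q_\ell\wb_\ell\big) = \Gb\ub$. Finally, the construction forces $\Gb$ to have full column rank, so $(\Gb^T\Gb)^{-1}\Gb^T$ is a left inverse of $\Gb$ over $\Fbb_p$, whence $\phi^{-1}(\vb) = (\Gb^T\Gb)^{-1}\Gb^T\Gb\ub = \ub$, as claimed.

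The step I expect to be the main obstacle is transporting the $\modl$ operation through the matrix $\Bb^{-1}$: the identities \eqref{e:modprop1}--\eqref{e:modprop4} are stated only for integer or real scalars, so I must verify directly that $[\Bb^{-1}(\,\cdot\,)]\modz$ agrees with the image of $[\,\cdot\,]\modl$, which rests on $\Bb^{-1}\Lambda = \Zbb^n$ and on $\Bb^{-1}$ intertwining $Q_\Lambda$ with $Q_{\Zbb^n}$. A secondary point worth a remark is that $(\Gb^T\Gb)^{-1}$ presupposes $\Gb^T\Gb$ to be invertible over $\Fbb_p$; one should either argue this holds for the random construction alongside the full-rank condition already imposed, or simply replace $(\Gb^T\Gb)^{-1}\Gb^T$ by an arbitrary left inverse of $\Gb$. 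The remaining manipulations — the iterated modulo reductions and the bookkeeping with $g$ and $g^{-1}$ — are routine.
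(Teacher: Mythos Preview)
Your proposal is correct and follows essentially the same route as the paper: both expand $\vb$ via $\tb_\ell=\phi(\wb_\ell)$, strip the inner $\modl$'s using the integrality of $a_\ell$ and $\Bb^{-1}\Lambda=\Zbb^n$, scale by $p$ via \eqref{e:modprop4}, pass to $\Fbb_p$ through the ring homomorphism $x\mapsto g^{-1}([x]\modp)$, and finish with the left inverse of $\Gb$. The paper carries out the $\Bb^{-1}$-transport step by the direct computation you anticipated (writing $[\cdot]\modl=\cdot-Q_\Lambda(\cdot)$ and noting $\Bb^{-1}Q_\Lambda(\cdot)\in\Zbb^n$), and it likewise leaves the invertibility of $\Gb^T\Gb$ over $\Fbb_p$ implicit, so your caveats are apt but not points of divergence.
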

\begin{proof} Recall that since $\Bb$ is the generator matrix of $\Lambda$, $\mathbf{B}\n \L= \Zbb^n$. Also note that since $\wb_\ell$ is zero-padded to length $k$, then multiplying by $\Gb$ has the same effect as multiplying the original message by $\Gb_\ell$. We have that
\begin{align}
&[\Bb \n \vb] \modz \\
&= \left[\Bb\n \sum_{\ell=1}^L{a_\ell} \tb_\ell -  \Bb\n Q_\L\left( \sum_{\ell=1}^L{a_\ell} \tb_\ell\right)\right] \modz \\
&\overset{\mbox{\footnotesize{(a)}}}{=}\left[\Bb\n \sum_{\ell=1}^L{a_\ell} \tb_\ell \right] \modz \\
&\overset{\mbox{\footnotesize{(b)}}}= \left[\sum_{\ell=1}^L{a_\ell} \bigg(p\n g(\Gb \wb_\e) \right. \nonumber \\ & \qquad \qquad~~~- ~\Bb\n Q_\L\Big(\Bb p\n g(\Gb \wb_\e)\Big)\bigg) \Bigg] \modz \\
&\overset{\mbox{\footnotesize{(c)}}}{=} \left[\sum_{\ell=1}^L{a_\ell} p\n g(\Gb \wb_\e )\right] \modz 
\end{align} where (a) and (c) follow since $Q_{\L}(\cdot)$ is an element of $\L$ so $\Bb\n Q_{\L}(\cdot)$ is an element of $\Zbb^n$ and (b) follows using (\ref{e:fieldtolattice}). Multiplying by $p$ and applying (\ref{e:modprop4}) yields 
\begin{align}
p[\Bb \n \vb] \modz &= \left[\sum_{\ell=1}^L{a_\ell} g(\Gb \wb_\e )\right] \modpz \\ &\overset{\mbox{\footnotesize{(d)}}}{=} \left[g\left(\bigoplus_{\ell=1}^L{q_\ell \Gb \wb_\ell}\right)\right]\modpz \\
&= g\left(\bigoplus_{\ell=1}^L{q_\ell \Gb \wb_\ell}\right)
\end{align} where (d) follows since $g$ maps between $\{0,1,\ldots,p-1\}$ and $\Fbb_p$ and $q_\ell = g\n([a] \mod{p})$.\\ Applying $g\n$ to move back to the finite field we get
\begin{align}
g\n\left(p[\Bb \n \vb] \modz \right)=  \Gb \bigoplus_{\ell=1}^L{q_\ell  \wb_\ell}
\end{align} Finally, note that $\left(\Gb^T \Gb\right)\n \Gb^T$ is the left-inverse of $\Gb$ which implies that $\phi\n(\vb) = \ub$.  
\end{proof}

\section{Compute-and-Forward} \label{s:basiccompproof}

In this section, we provide a detailed description of our coding scheme. See Figure \ref{f:systemdiagram} for a block diagram. The following four steps are a basic outline:
\begin{enumerate}
\item Each transmitter maps its message from the finite field onto an element of a nested lattice code.
\item The lattice codewords are transmitted over the channel.
\item Each relay decodes a linear equation of the lattice codewords.
\item These lattice equations are mapped back to the finite field to
get the desired linear combination of messages. 
\end{enumerate} We begin with the proof for the real-valued case and then move on to the complex-valued case.

\subsection{Real-Valued Channel Models} \label{s:realcompproof}

When a relay attempts to decode an integer combination of the lattice points, it must overcome two sources of noise. One is simply the channel noise $\zb$. The other is due to the fact that the channel coefficients that are often not exactly equal to the desired equation coefficients. As a result, part of the noise stems from the codewords themselves (sometimes referred to as ``self-noise''). To overcome this issue, the transmitters will dither their lattice points using common randomness that is also known to the relays. This dithering makes the transmitted codewords independent from the underlying lattice points. Since our scheme works with respect to expectation over these dither vectors, then it can be shown that (at least) one set of good fixed dither vectors exists (which means that no common randomness is actually necessary). We defer the proof of this fact to Appendix \ref{s:fixeddithers}. The following lemma from \cite{ez04} captures a key property of dithered nested lattice codes.

\begin{lemma}[Erez-Zamir] \label{l:dither}
Let $\mathbf{t}$ be a random vector with an arbitrary distribution over $\Rbb^n$. If $\db$ is independent of $\tb$ and uniformly distributed over $\Vm$, then $[\tb - \db]~\modl$ is also independent of $\tb$ and uniformly distributed over $\Vm$.
\end{lemma}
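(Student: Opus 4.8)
This is the classical ``Crypto Lemma'' (dither lemma); the plan is to prove it by conditioning on $\tb$. First I would reduce the statement to a deterministic-shift claim: since $\db$ is independent of $\tb$, conditioned on $\{\tb = t\}$ the vector $\ub \triangleq [\tb - \db]\modl$ has the same law as $[t - \db]\modl$. Hence it suffices to show that for \emph{every} fixed $t \in \Rbb^n$, the vector $[t - \db]\modl$ is uniformly distributed over $\Vm$; once this is known, the conditional law of $\ub$ given $\tb$ is the same (uniform on $\Vm$) for all $t$, which is exactly the assertion that $\ub$ is independent of $\tb$ and uniform over $\Vm$.

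To establish the fixed-$t$ claim, fix a measurable set $A \subseteq \Vm$. By the definition $[\cdot]\modl = \cdot - Q_\Lambda(\cdot)$, and using $\Lambda = -\Lambda$, one has $[t - \db]\modl \in A$ if and only if $t - \db \in A + \Lambda$, i.e.\ $\db \in (t - A) + \Lambda$, these equivalences holding up to the measure-zero boundary of $\Vm$. Since $\db$ is uniform over $\Vm$,
\begin{align}
\Pr\big( [t - \db]\modl \in A \big) \ = \ \frac{\V\big( ((t-A) + \Lambda) \cap \Vm \big)}{\V(\Vm)} \ . \nonumber
\end{align}
The remaining work is to show the numerator equals $\V(A)$, which is exactly the tiling property of the fundamental Voronoi region: the translates $\{\Vm + \lambda : \lambda \in \Lambda\}$ cover $\Rbb^n$ and overlap only on a set of Lebesgue measure zero. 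Consequently $t - A$ decomposes, up to measure zero, as the disjoint union of the pieces $(t - A) \cap (\Vm + \lambda)$; translating the $\lambda$-piece by $-\lambda$ carries it into $\Vm$, the resulting images $((t - A) - \lambda) \cap \Vm$ are again pairwise disjoint up to measure zero (two of them could meet only if $t - A$ contained two points differing by a nonzero lattice vector, impossible off the measure-zero boundary of $\Vm$ since $A \subseteq \Vm$), and their union is precisely $((t-A) + \Lambda) \cap \Vm$. As translation preserves volume, $\V\big(((t-A)+\Lambda) \cap \Vm\big) = \V(t - A) = \V(A)$, so $\Pr( [t-\db]\modl \in A ) = \V(A)/\V(\Vm)$ — the uniform law on $\Vm$, independent of $t$. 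Together with the reduction above, this proves the lemma.

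I expect the only genuinely delicate point to be the measure-theoretic bookkeeping in the last step — making precise that cutting $t - A$ along the lattice tiling and reassembling the pieces inside $\Vm$ is volume-preserving — which rests on the two facts that $\Vm$ has boundary of Lebesgue measure zero and that $\{\Vm + \lambda\}_{\lambda \in \Lambda}$ partitions $\Rbb^n$ up to measure zero; everything else is just unwinding the definition of $[\cdot]\modl$ together with $\Lambda = -\Lambda$. An equivalent and perhaps cleaner packaging is to observe that, for fixed $t$, the map $\db \mapsto [t - \db]\modl$ restricted to $\Vm$ is, off a measure-zero set, a bijection that is piecewise a translation, hence Lebesgue-measure-preserving; this delivers the uniformity in one stroke.
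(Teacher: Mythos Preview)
Your argument is correct and is the standard proof of the Crypto Lemma. However, note that the paper does not actually supply a proof of this statement: it is stated as Lemma~\ref{l:dither}, attributed to Erez--Zamir, and simply cited from \cite{ez04} without any argument in the paper itself. So there is no in-paper proof to compare against; your write-up would serve perfectly well as a self-contained justification, and the alternative packaging you mention at the end (that $\db \mapsto [t-\db]\modl$ is, off a null set, a piecewise-translation bijection of $\Vm$ onto itself) is indeed the cleanest way to phrase it.
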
 

We now set out to prove that the relays can reliably recover integer combinations of transmitted lattice points.

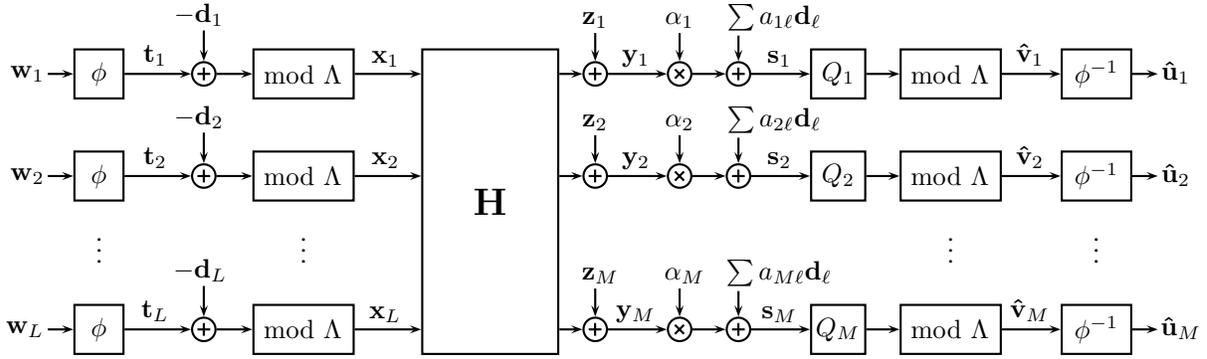
\begin{figure*}[!t]
\begin{center}
\psset{unit=0.68mm}
\begin{pspicture}(0,-23)(222,50)

\rput(0,32){
\rput(1,0){$\wb_1$}\psline{->}(5,0)(10,0)
\psframe(10,-5)(20,5) \rput(15,0){$\phi$} 
\psline{->}(20,0)(33,0) \rput(26,3.6){$\tb_1$}
\pscircle(35.5,0){2.5} \psline{-}(34.25,0)(36.75,0)\psline{-}(35.5,-1.25)(35.5,1.25) 
\psline{->}(35.5,8.5)(35.5,2.5) \rput(34.5,11.5){$-\db_1$}
\psline{->}(38,0)(45,0)
\psframe(45,-5)(65,5) \rput(55,0){$\modl$}
\psline{->}(65,0)(78,0) \rput(71,3){$\xb_1$}
}

\rput(0,12){
\rput(1,0){$\wb_2$}\psline{->}(5,0)(10,0)
\psframe(10,-5)(20,5) \rput(15,0){$\phi$} 
\psline{->}(20,0)(33,0) \rput(26,3.6){$\tb_2$}
\pscircle(35.5,0){2.5} \psline{-}(34.25,0)(36.75,0)\psline{-}(35.5,-1.25)(35.5,1.25) 
\psline{->}(35.5,8.5)(35.5,2.5) \rput(34.5,11.5){$-\db_2$}
\psline{->}(38,0)(45,0)
\psframe(45,-5)(65,5) \rput(55,0){$\modl$}
\psline{->}(65,0)(78,0) \rput(71,3){$\xb_2$}
}

\rput(15,-1){$\vdots$}
\rput(55,-1){$\vdots$}

\rput(0,-18){
\rput(0.5,0){$\wb_L$}\psline{->}(5,0)(10,0)
\psframe(10,-5)(20,5) \rput(15,0){$\phi$} 
\psline{->}(20,0)(33,0) \rput(26,3.6){$\tb_L$}
\pscircle(35.5,0){2.5} \psline{-}(34.25,0)(36.75,0)\psline{-}(35.5,-1.25)(35.5,1.25) 
\psline{->}(35.5,8.5)(35.5,2.5) \rput(34.8,11.5){$-\db_L$}
\psline{->}(38,0)(45,0)
\psframe(45,-5)(65,5) \rput(55,0){$\modl$}
\psline{->}(65,0)(78,0) \rput(71,3){$\xb_L$}
}

\psframe(78,-23)(105,37)
\rput(91.5,7){\Large{$\mathbf{H}$}}

\rput(105,0){

\rput(0,32){
\psline{->}(0,0)(4.5,0)
\pscircle(7,0){2.5}  \rput(7,0){\psline{-}(0,-1.25)(0,1.25) \psline{-}(-1.25,0)(1.25,0)} 
\psline{->}(7,8)(7,2.5) \rput(7,10.5){$\zb_1$}
\psline{->}(9.5,0)(21,0) \rput(15,3){$\yb_1$}
\pscircle(23.5,0){2.5}  \rput{45}(23.5,0){\psline{-}(0,-1.25)(0,1.25) \psline{-}(-1.25,0)(1.25,0)} 
\psline{->}(23.5,8)(23.5,2.5) \rput(23.5,10.5){$\alpha_1$}
\psline{->}(26,0)(32.5,0)
\pscircle(35,0){2.5}  \rput(35,0){\psline{-}(0,-1.25)(0,1.25) \psline{-}(-1.25,0)(1.25,0)} 
\psline{->}(35,7.5)(35,2.5) \rput(42,11){$\sum a_{1\ell}\db_\ell$}
\psline{->}(37.5,0)(49,0) \rput(43,3){$\sb_1$}
\psframe(49,-5)(60,5) \rput(54.5,0){$Q_1$} 
\psline{->}(60,0)(66.5,0)
\psframe(66.5,-5)(86.5,5) \rput(76.5,0){$\modl$}
\psline{->}(86.5,0)(98,0) \rput(92,3.7){$\vbh_1$}
\psframe(98,-5)(112,5) \rput(105,0){$\phi^{-1}$} 
\psline{->}(112,0)(117,0)\rput(120.5,0.3){$\ubh_1$}
}

\rput(0,12){
\psline{->}(0,0)(4.5,0)
\pscircle(7,0){2.5}  \rput(7,0){\psline{-}(0,-1.25)(0,1.25) \psline{-}(-1.25,0)(1.25,0)} 
\psline{->}(7,8)(7,2.5) \rput(7,10.5){$\zb_2$}
\psline{->}(9.5,0)(21,0) \rput(15,3){$\yb_2$}
\pscircle(23.5,0){2.5}  \rput{45}(23.5,0){\psline{-}(0,-1.25)(0,1.25) \psline{-}(-1.25,0)(1.25,0)} 
\psline{->}(23.5,8)(23.5,2.5) \rput(23.5,10.5){$\alpha_2$}
\psline{->}(26,0)(32.5,0)
\pscircle(35,0){2.5}  \rput(35,0){\psline{-}(0,-1.25)(0,1.25) \psline{-}(-1.25,0)(1.25,0)} 
\psline{->}(35,7.5)(35,2.5) \rput(42,11){$\sum a_{2\ell}\db_\ell$}
\psline{->}(37.5,0)(49,0) \rput(43,3){$\sb_2$}
\psframe(49,-5)(60,5) \rput(54.5,0){$Q_2$} 
\psline{->}(60,0)(66.5,0)
\psframe(66.5,-5)(86.5,5) \rput(76.5,0){$\modl$}
\psline{->}(86.5,0)(98,0) \rput(92,3.7){$\vbh_2$}
\psframe(98,-5)(112,5) \rput(105,0){$\phi^{-1}$} 
\psline{->}(112,0)(117,0)\rput(120.5,0.3){$\ubh_2$}
}

\rput(76.5,-1){$\vdots$}
\rput(105,-1){$\vdots$}

\rput(0,-18){
\psline{->}(0,0)(4.5,0)
\pscircle(7,0){2.5}  \rput(7,0){\psline{-}(0,-1.25)(0,1.25) \psline{-}(-1.25,0)(1.25,0)} 
\psline{->}(7,8)(7,2.5) \rput(8,10.5){$\zb_M$}
\psline{->}(9.5,0)(21,0) \rput(15,3){$\yb_M$}
\pscircle(23.5,0){2.5}  \rput{45}(23.5,0){\psline{-}(0,-1.25)(0,1.25) \psline{-}(-1.25,0)(1.25,0)} 
\psline{->}(23.5,8)(23.5,2.5) \rput(24.5,10.5){$\alpha_M$}
\psline{->}(26,0)(32.5,0)
\pscircle(35,0){2.5}  \rput(35,0){\psline{-}(0,-1.25)(0,1.25) \psline{-}(-1.25,0)(1.25,0)} 
\psline{->}(35,7.5)(35,2.5) \rput(43,11){$\sum a_{M\ell}\db_\ell$}
\psline{->}(37.5,0)(49,0) \rput(43,3){$\sb_M$}
\psframe(49,-5)(60,5) \rput(54.5,0){$Q_M$} 
\psline{->}(60,0)(66.5,0)
\psframe(66.5,-5)(86.5,5) \rput(76.5,0){$\modl$}
\psline{->}(86.5,0)(98,0) \rput(92,3.7){$\vbh_M$}
\psframe(98,-5)(112,5) \rput(105,0){$\phi^{-1}$} 
\psline{->}(112,0)(117,0)\rput(121.5,0.3){$\ubh_M$}
}
}
\end{pspicture}
\end{center}
\caption{System diagram of the nested lattice encoding and decoding operations employed as part of the compute-and-forward framework (for real-valued channel models). Each message $\wb_\e$ is mapped to a lattice codeword $\tb_\e$, dithered, and transmitted as $\xb_\e$. Each relay observes $\yb_m$ which it scales by $\alpha_m$. It then removes the dithers, quantizes the result onto the fine lattice using $Q_m$, and maps it onto the fundamental Voronoi region of the coarse lattice using $~\modl$. Finally, the relay maps its estimate $\vbh_m$ of the lattice equation $\vb_m =[\sum{a_{m\ell}\tb_{\ell}}]~\modl$ back to the finite field using $\phi^{-1}$ to get an estimate $\ubh_m$ of a linear equation of the messages $\ub_m = \bigoplus{q_{m\e} \wb_\e}$ where $q_{m\ell} = g^{-1}( [a_{m\ell}]~\modp )$ are the finite field representations of the coefficients.  } \label{f:systemdiagram}
\end{figure*}

\begin{theorem}\label{t:latcompreal}
For any $\epsilon > 0$ and $n$ large enough, there exist nested lattice codes $\Lambda \subseteq  \L_L \subseteq \cdots \subseteq \L_1$ with rates $r_1, \ldots, r_L$, such that for all channel vectors $\hb_1, \ldots, \hb_M \in \Rbb^L$ and coefficient vectors $\ab_1, \ldots, \ab_M \in \Zbb^L$, relay $m$ can decode the lattice equation
\begin{align}
\vb_m = \left[\sum_{\ell = 1}^L{a_{m\e}\tb_\e} \right]\modl
\end{align}
of transmitted lattice points $\tb_\ell \in \mathcal{L}_\e$ with average probability of error $\epsilon$ so long as
\begin{align}
r_\ell &< \min_{m:a_{m\e} \neq 0}\frac{1}{2}\log^+\left(\frac{P}{\alpha_m^2 + P \| \alpha_m \mathbf{h}_m - \mathbf{a}_m \|^2}\right) \nonumber 
\end{align}for some choice of $\alpha_1, \ldots, \alpha_M \in \Rbb$.
\end{theorem}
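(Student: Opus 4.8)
The plan is to adapt the dithered nested-lattice argument of Erez and Zamir \cite{ez04} to integer linear combinations, reducing the problem to a lattice escape-probability estimate.

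\emph{Encoding and the induced mod-$\Lambda$ channel.} Transmitter $\ell$ maps its message to the codeword $\tb_\ell = \phi(\wb_\ell) \in \mathcal{L}_\ell = \Lambda_\ell \cap \Vm$ (Lemma \ref{l:fieldtolattice}), draws a dither $\db_\ell$ uniform over $\Vm$ from common randomness known to the relays, and transmits $\xb_\ell = [\tb_\ell - \db_\ell]\modl$. By the Crypto Lemma (Lemma \ref{l:dither}) the $\xb_\ell$ are mutually independent, independent of the codewords $\tb_\ell$ and of the noise, and each is uniform over $\Vm$; since $\Lambda$ is scaled to second moment $P$ and is good for covering, $\Vm$ lies in a Euclidean ball of radius $(1+o(1))\sqrt{nP}$, so \eqref{e:powerconstraint} holds up to a negligible rescaling. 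Fix scalars $\alpha_1,\ldots,\alpha_M$ and have relay $m$ form $\sb_m = \alpha_m \yb_m + \sum_{\ell=1}^L a_{m\ell}\db_\ell$. Writing $\alpha_m h_{m\ell} = a_{m\ell} + (\alpha_m h_{m\ell}-a_{m\ell})$, using $\xb_\ell + \db_\ell = \tb_\ell - Q_\Lambda(\tb_\ell - \db_\ell)$, and noting that $a_{m\ell}\in\Zbb$ while $Q_\Lambda(\cdot)\in\Lambda$, a short computation gives
\begin{align*}
\sb_m \;=\; \sum_{\ell=1}^L a_{m\ell}\tb_\ell \;+\; \lambda \;+\; \tilde{\zb}_m \ , \qquad \tilde{\zb}_m \;=\; \alpha_m\zb_m + \sum_{\ell=1}^L (\alpha_m h_{m\ell}-a_{m\ell})\,\xb_\ell \ ,
\end{align*}
for some $\lambda \in \Lambda$, so modulo $\Lambda$ the relay sees the lattice equation corrupted only by the effective noise $\tilde{\zb}_m$ --- the channel noise plus a ``self-noise'' term reflecting the mismatch between $\alpha_m\hb_m$ and $\ab_m$.

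\emph{Decoding and the error event.} Let $\ell^*_m = \min\{\ell : a_{m\ell}\neq 0\}$. By the nesting $\Lambda\subseteq\Lambda_L\subseteq\cdots\subseteq\Lambda_1$ every summand $a_{m\ell}\tb_\ell$ with $a_{m\ell}\neq 0$, and also $\lambda$, lies in $\Lambda_{\ell^*_m}$, the coarsest fine lattice guaranteed to contain $\sum_\ell a_{m\ell}\tb_\ell$. The relay quantizes onto $\Lambda_{\ell^*_m}$ and reduces mod $\Lambda$: $\vbh_m = [Q_{\Lambda_{\ell^*_m}}(\sb_m)]\modl$. Whenever $\tilde{\zb}_m$ falls inside the fundamental Voronoi region $\Vm_{\ell^*_m}$ of $\Lambda_{\ell^*_m}$ we have $Q_{\Lambda_{\ell^*_m}}(\sb_m) = \sum_\ell a_{m\ell}\tb_\ell + \lambda$, whence $\vbh_m = [\sum_\ell a_{m\ell}\tb_\ell]\modl = \vb_m$. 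Thus relay $m$'s error probability is at most $\Pr\{\tilde{\zb}_m\notin\Vm_{\ell^*_m}\}$, and a union bound over the finitely many relays controls the joint error. (Passing $\vbh_m$ through $\phi^{-1}$, Lemma \ref{l:msgeqn}, then recovers the finite-field equation $\ub_m$, which is how this feeds Theorem \ref{t:basiccompreal}.)

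\emph{The escape probability, and the main obstacle.} Since the $\xb_\ell$ are zero-mean, mutually independent and independent of $\zb_m\sim\mathcal{N}(\mathbf{0},\mathbf{I})$, with per-dimension second moment $\sigma_\Lambda^2 = P$, the effective noise has per-dimension variance $\sigma^2_{\mathrm{eff},m} = \alpha_m^2 + P\|\alpha_m\hb_m - \ab_m\|^2$. As $\Lambda$ is good for quantization, $\V(\Vm)^{2/n}\to 2\pi e\,\sigma_\Lambda^2 = 2\pi e P$, so $\V(\Vm_{\ell^*_m})^{2/n} = 2^{-2 r_{\ell^*_m}}\V(\Vm)^{2/n}\to 2\pi e P\, 2^{-2 r_{\ell^*_m}}$ and the volume-to-noise ratio of $\Lambda_{\ell^*_m}$ against $\tilde{\zb}_m$ tends to $2\pi e\cdot P\,2^{-2 r_{\ell^*_m}}/\sigma^2_{\mathrm{eff},m}$. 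This exceeds $2\pi e$ exactly when $r_{\ell^*_m} < \tfrac12\log\!\big(P/(\alpha_m^2 + P\|\alpha_m\hb_m-\ab_m\|^2)\big)$; because $r_1\ge\cdots\ge r_L$ and $r_{\ell^*_m} = \max\{r_\ell : a_{m\ell}\neq 0\}$, imposing this for every $m$ is equivalent to the stated region $r_\ell < \min_{m:a_{m\ell}\neq 0}\tfrac12\log^+(\cdot)$. If $\tilde{\zb}_m$ were Gaussian, AWGN-goodness of $\Lambda_{\ell^*_m}$ (Definition \ref{d:awgngood}, verified for our fine lattices in Appendix \ref{s:awgngood}) would then force $\Pr\{\tilde{\zb}_m\notin\Vm_{\ell^*_m}\}\to 0$. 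The one genuinely delicate point is that $\tilde{\zb}_m$ is \emph{not} Gaussian, the self-noise being a combination of vectors uniform over $\Vm$. I would close this gap with the Erez--Zamir device of comparing $\tilde{\zb}_m$ to a Gaussian of equal variance at asymptotically no cost: smoothing by an arbitrarily weak independent Gaussian and invoking quantization-goodness, which makes the normalized uniform-over-$\Vm$ law approach an i.i.d.\ Gaussian, so that the exponential escape bound guaranteed by AWGN-goodness for any fixed volume-to-noise ratio strictly above $2\pi e$ still applies. Letting $n\to\infty$ then drives the error below $\epsilon$, and since everything was computed on average over the dithers, a standard averaging argument (Appendix \ref{s:fixeddithers}) fixes deterministic dithers achieving the same bound, so no common randomness is actually required.
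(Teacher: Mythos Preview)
Your proof is correct and follows essentially the paper's route: dithered transmission, form $\sb_m = \alpha_m\yb_m + \sum_\ell a_{m\ell}\db_\ell$, reduce modulo $\Lambda$ to the lattice equation plus effective noise of per-dimension variance $\alpha_m^2 + P\|\alpha_m\hb_m-\ab_m\|^2$, quantize onto the finest involved fine lattice, and invoke its AWGN-goodness. The one place your sketch departs from the paper is the non-Gaussian step: the paper does not ``smooth by a weak Gaussian'' but uses a direct density-domination bound (Lemma~\ref{l:gaussiannoise}, following \cite[Lemma~11]{ez04}) showing $f_{\tilde{\zb}_m}(\zb) \le e^{Lc(n)n} f_{\zb^*}(\zb)$ for an i.i.d.\ Gaussian $\zb^*$ with $c(n)\to 0$, so the subexponential prefactor is absorbed into the Poltyrev error exponent---that is the precise mechanism you should invoke.
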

\begin{proof} Each encoder is given a {dither vector} $\mathbf{d}_\ell$ which is generated independently according to a uniform distribution over $\mathcal{V}$. All dither vectors are made available to each relay. Encoder $\ell$ dithers its lattice point, takes $\modl$, and transmits the result:
\begin{align}
\mathbf{x}_\ell = [\mathbf{t}_\ell - \mathbf{d}_\ell]\modl \ .  \label{e:xell}
\end{align} By Lemma \ref{l:dither}, $\xb_\e$ is uniform over $\Vm$ so $E[\|\mathbf{x}_\ell\|^2] = nP$, where the expectation is taken over the dithers. In Appendix \ref{s:fixeddithers}, we argue that there exist fixed dithers that meet the power constraint set forth in (\ref{e:powerconstraint}).

The channel output at relay $m$ is 
\begin{align}
\yb_m = \sum_{\ell = 1}^L{h_{m\e} \xb_\e} + \zb_m \ .
\end{align} Recall that the transmitters are ordered by decreasing message rates. Let $\e_{\text{MAX}}(m) = \max{\{\e: a_{m \e} \neq 0\}}$ denote the highest index value of the non-zero coefficients in $\ab_m$. Also, let $Q_m$ denote the lattice quantizer for the corresponding fine lattice $\L_{\e_{\text{MAX}}(m)}$. Note that this is the highest rate message in the equation and thus the rate of the equation itself. Each relay computes
\begin{align}
\sb_m &= \alpha_m \mathbf{y}_m + \sum_{\ell=1}^L{a_{m \e} \mathbf{d}_\ell}  \ .\label{e:decodedither}
\end{align} To get an estimate of the lattice equation $\vb_m$, this vector is quantized onto $\L_{\e_{\text{MAX}}(m)}$ modulo the coarse lattice $\Lambda$:
\begin{align}
\vbh_m&= \big[Q_m(\sb_m)\big]\modl \ . 
\end{align} Using (\ref{e:modprop2}), we get that
\begin{align}
 \big[Q_m(\sb_m)\big]\modl =  \big[Q_m([\sb_m]\modl)\big]\modl \ .
 \end{align} We now show that $[\sb_m]~\modl$ is equivalent to $\vb_m$ plus some noise terms. Let $\theta_{m\ell} = \alpha_m h_{m \ell} - a_{m \ell}$.
 \begin{align}
&[\sb_m]\modl \\
&=\left[\sum_{\e=1}^L{\Big(\alpha_m h_{m\e}\xb_\e + a_{m\e} \db_\e\Big)} + \alpha_m \zb_m \right] \modl \label{e:latmanip1real}\\
&= \left[\sum_{\e=1}^L \Big(a_{m\e} (\xb_\e + \db_\e) + \theta_{m\e} \xb_\e \Big)+ \alpha_m \zb_m \right] \modl \label{e:latmanip2real}\\
&= \left[\sum_{\e=1}^L a_{m\e} \Big([\tb_\e - \db_\e]\modl + \db_\e\Big)\right.  \nonumber \\
& \qquad ~+ ~\left. \sum_{\e = 1}^L \theta_{m\e} \xb_\e + \alpha_m \zb_m \right] \modl \label{e:latmanip3real}\\
&= \left[\sum_{\e=1}^L a_{m\e} \tb_\e + \sum_{\e = 1}^L \theta_{m\e} \xb_\e + \alpha_m \zb_m \right] \modl \label{e:latmanip4real}\\
&= \left[\vb_m + \sum_{\e = 1}^L \theta_{m\e} \xb_\e + \alpha_m \zb_m \right] \modl \label{e:latmanip5real}
\end{align} where the last two steps are due to (\ref{e:modprop1}). From Lemma \ref{l:dither}, the pair of random variables $(\vb_m, \vbh_m)$ has the same joint distribution as the pair $(\vb_m, \mathbf{\tilde{v}}_m)$ defined by the following: 
\begin{align}
\mathbf{\tilde{v}}_m &= \big[Q_m(\mathbf{v}_m + \mathbf{z}_{eq, m})\big]\modl \\
\mathbf{z}_{eq,m} &= \alpha_m \zb_m + \sum_{\ell=1}^L{\theta_{m\e} \mathbf{\tilde{d}}_\ell}
\end{align} where each $\mathbf{\tilde{d}}_\ell$ is drawn independently according to a uniform distribution over $\mathcal{V}$. See Figure \ref{f:equivchannel} for a block diagram of the equivalent channel. The probability of error $\Pr(\vbh_m \neq \vb_m)$ is thus equal to the probability that the equivalent noise leaves the Voronoi region surrounding the codeword, \mbox{$\Pr\big(\zb_{m,eq} \notin \mathcal{V}_{\ell_{\text{MAX}}(m)}\big)$}.

\begin{figure}[h]
\begin{center}
\psset{unit=0.65mm}
\begin{pspicture}(2,-23)(129,50)

\rput(0,32){
\rput(2,0){$\wb_1$}\psline{->}(6,0)(10,0)
\psframe(10,-5)(20,5) \rput(15,0){$\phi$} 
\psline{->}(20,0)(30,0) \rput(24.5,3.6){$\tb_1$}
}

\rput(0,12){
\rput(2,0){$\wb_2$}\psline{->}(6,0)(10,0)
\psframe(10,-5)(20,5) \rput(15,0){$\phi$} 
\psline{->}(20,0)(30,0) \rput(24.5,3.6){$\tb_2$}
}

\rput(15,-1){$\vdots$}

\rput(0,-18){
\rput(1.5,0){$\wb_L$}\psline{->}(6,0)(10,0)
\psframe(10,-5)(20,5) \rput(15,0){$\phi$} 
\psline{->}(20,0)(30,0) \rput(24.5,3.6){$\tb_L$}
}

\psframe(30,-23)(48,37)
\rput(39,7){\Large{$\mathbf{A}$}}

\rput(42,0){

\rput(0,32){
\psline{->}(6,0)(10,0) 
\pscircle(12.5,0){2.5}  \rput(12.5,0){\psline{-}(0,-1.25)(0,1.25) \psline{-}(-1.25,0)(1.25,0)} 
\psline{->}(12.5,8)(12.5,2.5) \rput(14,10.5){$\zb_{eq,1}$}
\psline{->}(15,0)(20,0)
\rput(-29,0){
\psframe(49,-5)(60,5) \rput(54.5,0){$Q_1$} 
\psline{->}(60,0)(65,0)
\psframe(65,-5)(85,5) \rput(75,0){$\modl$}
\psline{->}(85,0)(95,0) \rput(89.5,4){$\mathbf{\tilde{v}}_1$}
\psframe(95,-5)(107,5) \rput(101,0){$\phi^{-1}$} 
\psline{->}(107,0)(111,0)\rput(114.5,0.3){$\ubh_1$}
}
}

\rput(0,12){
\psline{->}(6,0)(10,0) 
\pscircle(12.5,0){2.5}  \rput(12.5,0){\psline{-}(0,-1.25)(0,1.25) \psline{-}(-1.25,0)(1.25,0)} 
\psline{->}(12.5,8)(12.5,2.5) \rput(14,10.5){$\zb_{eq,2}$}
\psline{->}(15,0)(20,0)
\rput(-29,0){
\psframe(49,-5)(60,5) \rput(54.5,0){$Q_2$} 
\psline{->}(60,0)(65,0)
\psframe(65,-5)(85,5) \rput(75,0){$\modl$}
\psline{->}(85,0)(95,0) \rput(89.5,4){$\mathbf{\tilde{v}}_2$}
\psframe(95,-5)(107,5) \rput(101,0){$\phi^{-1}$} 
\psline{->}(107,0)(111,0)\rput(114.5,0.3){$\ubh_2$}
}
}

\rput(46,-1){$\vdots$}
\rput(72,-1){$\vdots$}

\rput(0,-18){
\psline{->}(6,0)(10,0) 
\pscircle(12.5,0){2.5}  \rput(12.5,0){\psline{-}(0,-1.25)(0,1.25) \psline{-}(-1.25,0)(1.25,0)} 
\psline{->}(12.5,8)(12.5,2.5) \rput(15,10.5){$\zb_{eq,M}$}
\psline{->}(15,0)(20,0)
\rput(-29,0){
\psframe(49,-5)(60,5) \rput(54.8,0){$Q_M$} 
\psline{->}(60,0)(65,0)
\psframe(65,-5)(85,5) \rput(75,0){$\modl$}
\psline{->}(85,0)(95,0) \rput(89.8,4){$\mathbf{\tilde{v}}_M$}
\psframe(95,-5)(107,5) \rput(101,0){$\phi^{-1}$} 
\psline{->}(107,0)(111,0)\rput(115.5,0.3){$\ubh_M$}
}
}
}
\end{pspicture}
\end{center}
\caption{Equivalent channel induced by the modulo-$\L$ transformation. In this ``virtual'' channel model, each encoder maps its message $\wb_\e$ to a lattice point $\tb_\e$. Each relay observes an integer combination $\sum{a_{m\ell} \tb_\e}$ of the lattice points corrupted by effective noise $\zb_{eq,m}$. It then quantizes onto the fine lattice using $Q_m$ and takes $~\modl$ to get an estimate $\mathbf{\tilde{v}}_m$ of the lattice equation $\vb_m =[\sum{a_{m\ell}\tb_{\ell}}]~\modl$. Finally, the relay maps the recovered lattice equation to an estimate $\ubh_m$ of its desired linear equation of the messages $\ub_m = \bigoplus{q_{m\e} \wb_\e}$ where $q_{m\ell} = g^{-1}( [a_{m\ell}]~\modp )$ are the finite field representations of the coefficients. } \label{f:equivchannel}
\end{figure}
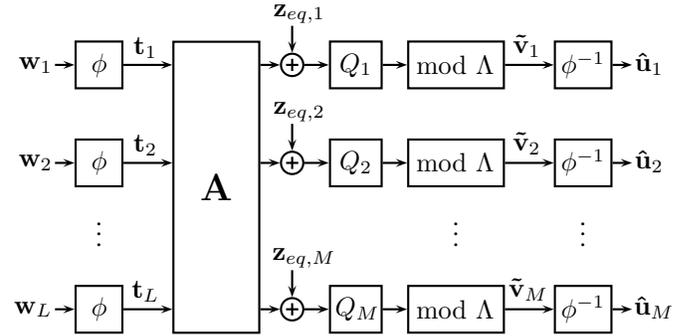

Using Lemma \ref{l:gaussiannoise} from Appendix \ref{s:gaussiannoise}, the density of $\zb_{eq,m}$ can be upper bounded (times a constant) by the density of an i.i.d. zero-mean Gaussian vector $\zb^*_m$ whose variance $\sigma_m^2$ approaches
\begin{align}
N_{eq,m} &= \alpha_m^2 + P \sum_{\e=1}^L \theta_{m\e}^2 \\
&= \alpha_m^2 + P \| \alpha_m \hb_m - \ab_m \|^2 \label{e:complexstart}
\end{align} as $n \rightarrow \infty$. 
We also show in Appendix \ref{s:awgngood} that $\L_1, \L_2, \ldots, \L_L$ are good for AWGN. From Definition \ref{d:awgngood}, this means that $\epsilon_m = \Pr(\zb^*_m \notin \Vm_{\e_{\text{MAX}}(m)})$ goes to zero exponentially in $n$ so long as the volume-to-noise ratio satisfies $\mu(\Lambda_{\e_{\text{MAX}}(m)},\epsilon_m) > 2 \pi e$. If this occurs, then $\Pr(\zb_{eq,m} \notin \Vm_{\e_{\text{MAX}}(m)})$ goes to zero exponentially in $n$ as well. Note that, by the union bound, the average probability of error $\epsilon$ is upper bounded by the sum
\begin{align}
\epsilon \leq \sum_{m=1}^M \Pr(\zb_{eq,m} \notin \Vm_{\e_{\text{MAX}}(m)})  \ . 
\end{align}

To ensure that the probability of error goes to zero for all desired equations\footnote{Note that by Lemma \ref{l:coeffbound} the number of available coefficient vectors $\ab_m$ at each relay is finite if $\| \hb_m \|$ and $P$ are finite. Therefore, it can be shown via a union bound that each relay can decode more than one equation.}, we get that the volume of  $\Vm_{\e_{\text{MAX}}(m)}$ must satisfy
\begin{align}
2 \pi e < \mu(\L_{\e_{\text{MAX}}(m)}, \epsilon_m) =\frac{(\mathsf{Vol}(\Vm_{\e_{\text{MAX}}(m)}))^{2/n}}{\sigma_m^2}
\end{align} for all relays with $a_{m\e} \neq 0$. If we set the volume of each Voronoi region $\Vm_{\e}$ as follows, the constraints are always met:
\begin{align}
\mathsf{Vol}(\Vm_\e) > \left(2 \pi e \max_{m:a_{m\e}\neq 0}{\sigma_m^2}\right)^{n/2}
\end{align} 

Recall that the rate of a nested lattice code is 
\begin{align}
r_\e = \onen \log{\left(\frac{\mathsf{Vol}(\Vm)}{\mathsf{Vol}(\Vm_\e)}\right)}.
\end{align} Using (\ref{e:normsecmoment}), we can solve for the volume of the fundamental Voronoi region of the coarse lattice:
\begin{align}
\mathsf{Vol}(\Vm) = \left(\frac{P}{G(\Lambda)}\right)^{n/2} \label{e:volv}
\end{align} 

It follows that we can achieve any rate less than
\begin{align}
r_\e < \min_{m:a_{m\e} \neq 0}{\frac{1}{2} \log^+{\left(\frac{P}{G(\L) 2 \pi e   \sigma_m^2}\right)}}
\end{align} Choose $\delta > 0$. Since $\Lambda$ is good for quantization, for $n$ large enough, we have that $G(\Lambda) 2 \pi e < (1 + \delta)$. We also know that $\sigma_m^2$ converges to $N_{eq,m}$ so for $n$ large enough we have $\sigma_m^2 < (1 + \delta) N_{eq,m}$. Finally, we get that the rate $r_\ell$ of each nested lattice code is at least
\begin{align}
\min_{m: a_{m \e} \neq 0} {\frac{1}{2} \log^+\left(\frac{P}{\alpha_m^2 + P\|\alpha_m \hb_{m} - \ab_m\|^2}\right) - \log(1 + \delta)} \nonumber
\end{align} Thus, by choosing $\delta$ small enough, we can approach the computation rates as closely as desired. 
\end{proof}


We now put all of these ingredients together to prove Theorem \ref{t:basiccompreal}. See Figures \ref{f:encoder} and \ref{f:decoder} for block diagrams of the encoding and decoding process.

\textit{Proof of Theorem \ref{t:basiccompreal}:} See Figure \ref{f:systemdiagram} for a block diagram. Choose $\epsilon > 0$. Encoder $\e$ maps its finite field message vector $\wb_\ell$ to a lattice point $\mathbf{t}_\ell  \in \Lambda_\ell \cap \mathcal{V}$, using $\phi$ from Lemma \ref{l:fieldtolattice},
\begin{align}
\mathbf{t}_\ell &= \phi(\wb_\ell) \ . 
\end{align} Using Theorem \ref{t:latcompreal}, these lattice points can be transmitted across the channel so that the relays can make estimates $\vbh_m$ of lattice equations $\vb_m$ with coefficient vectors $\ab_m \in \Zbb^L$ such that $\Pr\left( \cup_m \{ \vbh_m \neq  \vb_m \} \right) < \epsilon$ for $n$ large enough so long as
\begin{align}
R_\ell <& \min_{m: a_{ml} \neq 0} \frac{1}{2} \log^+\left(\frac{P}{\alpha_m^2 + P \| \alpha_m \mathbf{h}_m - \mathbf{a}_m \|^2}\right)  \nonumber
\end{align} for some $\alpha_1, \ldots, \alpha_M \in \Rbb$. 
Finally, using $\phi\n$ from Lemma \ref{l:msgeqn}, each relay can produce estimates of the desired linear combination of messages, $\ubh_m = \phi\n(\vbh_m)$, such that \mbox{$\Pr\big(\cup_m \{ \ubh_m \neq \ub_m \}\big) < \epsilon$} where
\begin{align}
\ub_m &=\bigoplus_{\ell = 1}^L{q_{m \ell }\wb_\ell} \\
q_{m\ell} &= g\n( [a_{m\e} ] \modp ) \ . 
\end{align} 
\subsection{Complex-Valued Channel Models} \label{s:complexcompproof}

We now show how to use nested lattice codes over complex-valued channel models.

\begin{theorem}\label{t:latcompcomplex}
For any $\epsilon > 0$ and $n$ large enough, there exist nested lattice codes $\Lambda \subseteq  \L_1 \subseteq \cdots \subseteq \L_L$ with rates $R_1, \ldots, R_L$, such that for all channel vectors $\hb_1, \ldots, \hb_M \in \Rbb^L$ and coefficient vectors $\ab_1, \ldots, \ab_M \in \{\Zbb + j \Zbb \}^L$, each relay can decode lattice equations $\vb_m^R, \vb_m^I$ where 
\begin{align}
\vb_m^R = \left[\sum_{\ell = 1}^L{\Re(a_{m\e})\tb_\e^R - \Im(a_{m\e})\tb_\e^I} \right]\modl \\
\vb_m^I = \left[\sum_{\ell = 1}^L{\Im(a_{m\e})\tb_\e^R + \Re(a_{m\e})\tb_\e^I} \right]\modl
\end{align}
of transmitted lattice points $\tb_\ell^R, \tb_\e^I \in \L_\ell \cap \Vm$ with average probability of error $\epsilon$ so long as
\begin{align}
r_\ell &<\frac{1}{2}\log^+\left(\frac{P}{|\alpha_m|^2 + P \| \alpha_m \mathbf{h}_m - \mathbf{a}_m \|^2}\right) \label{e:latcomp}
\end{align}for some choice of $\alpha_1, \ldots, \alpha_M \in \Cbb$.
\end{theorem}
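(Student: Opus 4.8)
The plan is to run the argument of Theorem~\ref{t:latcompreal} twice in parallel --- once on the in-phase and once on the quadrature component of the complex channel --- using a \emph{single} real-valued nested lattice code from Section~\ref{s:latticeconstruct}, and to exploit the Gaussian-integer structure of $\ab_m$ so that the relay's modulo-$\L$ map cleanly separates $\vb_m^R$ from $\vb_m^I$. I would take the coarse lattice $\L$ (good for covering, quantization, and AWGN) scaled so that its per-dimension second moment is $\sigma_\L^2=P/2$, together with the nested fine lattices $\L\subseteq\L_L\subseteq\cdots\subseteq\L_1$ built there, and set $r_\e=R_\e/2$. Transmitter $\e$ forms $\tb_\e^R=\phi(\wb_\e^R)$ and $\tb_\e^I=\phi(\wb_\e^I)$ by Lemma~\ref{l:fieldtolattice}, draws independent dithers $\db_\e^R,\db_\e^I$ uniform over $\Vm$, and sends $\xb_\e=[\tb_\e^R-\db_\e^R]\modl+j[\tb_\e^I-\db_\e^I]\modl$; Lemma~\ref{l:dither} makes $\xb_\e^R,\xb_\e^I$ independent of the codewords and uniform over $\Vm$, so $E\|\xb_\e\|^2=2n\sigma_\L^2=nP$ (with fixed dithers meeting the power constraint handled as in Appendix~\ref{s:fixeddithers}).

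Next comes the algebraic reduction. Write $a_{m\e}=a_{m\e}^R+j a_{m\e}^I$ with $a_{m\e}^R,a_{m\e}^I\in\Zbb$, let $\alpha_m\in\Cbb$, and put $\theta_{m\e}=\alpha_m h_{m\e}-a_{m\e}\in\Cbb$ and $\db_\e=\db_\e^R+j\db_\e^I$. Each relay computes $\sb_m=\alpha_m\yb_m+\sum_\e a_{m\e}\db_\e=\sum_\e a_{m\e}(\xb_\e+\db_\e)+\sum_\e\theta_{m\e}\xb_\e+\alpha_m\zb_m$. Since $[\xb_\e^R+\db_\e^R]\modl=\tb_\e^R$, $[\xb_\e^I+\db_\e^I]\modl=\tb_\e^I$, and $\Re\big(a_{m\e}(\xb_\e+\db_\e)\big)=a_{m\e}^R(\xb_\e^R+\db_\e^R)-a_{m\e}^I(\xb_\e^I+\db_\e^I)$ has \emph{integer} coefficients, properties~(\ref{e:modprop1}) and~(\ref{e:modprop3}) give $\big[\Re(\sb_m)\big]\modl=\big[\vb_m^R+\Re\big(\sum_\e\theta_{m\e}\xb_\e+\alpha_m\zb_m\big)\big]\modl$, and symmetrically $\big[\Im(\sb_m)\big]\modl=\big[\vb_m^I+\Im\big(\sum_\e\theta_{m\e}\xb_\e+\alpha_m\zb_m\big)\big]\modl$; integrality of $\Re(a_{m\e})$ and $\Im(a_{m\e})$ is exactly what is used here. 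Quantizing each onto $\L_{\e_{\text{MAX}}(m)}$ (the finest lattice among the nonzero $a_{m\e}$) and reducing mod~$\L$ yields $\vbh_m^R,\vbh_m^I$, and by Lemma~\ref{l:dither} their error events have the same probabilities as $\{\zb_{eq,m}^R\notin\Vm_{\e_{\text{MAX}}(m)}\}$ and $\{\zb_{eq,m}^I\notin\Vm_{\e_{\text{MAX}}(m)}\}$, where $\zb_{eq,m}^R=\Re\big(\alpha_m\zb_m+\sum_\e\theta_{m\e}\tilde{\db}_\e\big)$ and $\zb_{eq,m}^I=\Im\big(\alpha_m\zb_m+\sum_\e\theta_{m\e}\tilde{\db}_\e\big)$ with fresh dithers $\tilde{\db}_\e=\tilde{\db}_\e^R+j\tilde{\db}_\e^I$ uniform over $\Vm$, each independent of the corresponding lattice equation.

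The third step is the second-moment accounting plus AWGN-goodness. By Lemma~\ref{l:gaussiannoise}, the density of each of $\zb_{eq,m}^R,\zb_{eq,m}^I$ is dominated up to a constant by that of an i.i.d. zero-mean Gaussian whose per-dimension variance $\sigma_m^2$ tends to $N_{eq,m}=\tfrac12\big(|\alpha_m|^2+P\|\alpha_m\hb_m-\ab_m\|^2\big)$ --- the $\tfrac12$ arising because passing to a real or imaginary part halves the variance of $\alpha_m\zb_m$ and of $\sum_\e\theta_{m\e}\tilde{\db}_\e$, together with $\sigma_\L^2=P/2$ and $\sum_\e|\theta_{m\e}|^2=\|\alpha_m\hb_m-\ab_m\|^2$. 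Since $\L_1,\dots,\L_L$ are AWGN-good (Appendix~\ref{s:awgngood}), fixing $\mathsf{Vol}(\Vm_\e)$ just above $\big(2\pi e\max_{m:a_{m\e}\neq0}\sigma_m^2\big)^{n/2}$ keeps every volume-to-noise ratio above $2\pi e$, so each of the finitely many (by Lemma~\ref{l:coeffbound}) error probabilities decays exponentially; a union bound over the $2M$ events gives average error $\epsilon$. With $\mathsf{Vol}(\Vm)=\big((P/2)/G(\L)\big)^{n/2}$ and $G(\L)2\pi e\to1$ by quantization-goodness, $r_\e=\tfrac1n\log\big(\mathsf{Vol}(\Vm)/\mathsf{Vol}(\Vm_\e)\big)$ can be pushed arbitrarily close to $\tfrac12\log^+\big((P/2)/N_{eq,m}\big)=\tfrac12\log^+\big(P/(|\alpha_m|^2+P\|\alpha_m\hb_m-\ab_m\|^2)\big)$ for every $m$ with $a_{m\e}\neq0$; absorbing a $\log(1+\delta)$ gap by taking $n$ large gives~(\ref{e:latcomp}). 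The argument goes through verbatim if $\hb_m\in\Cbb^L$.

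I expect the main points needing care to be bookkeeping rather than a genuine obstacle. First, the two factors of $\tfrac12$ (from $\sigma_\L^2=P/2$ and from taking real/imaginary parts of the noise) must be shown to cancel so that exactly the claimed rate, and the clean relation $R_\e=2r_\e$ with Theorem~\ref{t:basiccompcomplex}, emerge. Second, one must confirm that the in-phase and quadrature sub-problems do not couple: $\vb_m^R$ mixes $\tb_\e^R$ and $\tb_\e^I$ through $\Re(a_{m\e})$ and $\Im(a_{m\e})$, but it is still a single point of $\L_{\e_{\text{MAX}}(m)}$, and opposite-component codewords enter $\zb_{eq,m}^R$ only as self-noise, so no additional decoding constraint is incurred. (Treating the complex network as a $2L\times2M$ real network via the $\Re/\Im$ expansion would also work, but would force each relay to choose $4L$ real coefficients and would not reproduce the coupled form of Definition~\ref{d:equations}, so the direct argument above is preferable.)
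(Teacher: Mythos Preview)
Your proposal is correct and follows essentially the same approach as the paper: scale the coarse lattice to second moment $P/2$, dither the real and imaginary lattice points separately, form $\sb_m$ by scaling $\alpha_m\yb_m$ and adding back the (Gaussian-integer weighted) dithers, reduce each component $\modl$ to expose $\vb_m^R+\text{noise}$ and $\vb_m^I+\text{noise}$, bound the effective noise density via Lemma~\ref{l:gaussiannoise} to get per-dimension variance $N_{eq,m}=\tfrac12(|\alpha_m|^2+P\|\alpha_m\hb_m-\ab_m\|^2)$, and then invoke AWGN-goodness exactly as in the real case from (\ref{e:complexstart}) onward. Your compact complex notation for $\sb_m$ is equivalent to the paper's separate $\sb_m^R,\sb_m^I$ in (\ref{e:decodedither1})--(\ref{e:decodedither2}), and your explicit tracking of the two $\tfrac12$ factors matches the paper's remark that the effective $\snr$ is $P/(|\alpha_m|^2+P\|\alpha_m\hb_m-\ab_m\|^2)$ because $\sigma_\L^2=P/2$.
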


\begin{proof}
First, we scale our nested lattice ensemble so that the coarse lattice $\Lambda$ has second moment $P/2$. Each encoder is given two {dither vectors}, $\mathbf{d}_\ell^R$ and $\mathbf{d}_\ell^I$, which are independently drawn according to a uniform distribution over $\mathcal{V}$. All dither vectors are made available to each relay. Encoder $\ell$ generates a channel input:
\begin{align}
\mathbf{x}_\ell = \big[\mathbf{t}^R_\ell - \mathbf{d}^R_\ell\big]\modl + j \big[\mathbf{t}^I_\ell - \mathbf{d}^I_\ell\big]\modl \ . \label{e:xell}
\end{align} By Lemma \ref{l:dither}, the real and imaginary parts of $\xb_\e$ are independent and uniform over $\Vm$ so $E[\|\mathbf{x}_\ell\|^2] = nP$, with expectation taken over the dithers.\footnote{In Appendix \ref{s:fixeddithers}, we argue that there exist fixed dithers that meet the power constraint $\| \mathbf{x}\|^2 \leq nP$.}

Let $\e_{\text{MAX}}(m) = \max{\{\e: a_{m \e} \neq 0\}}$ and let $Q_m$ denote the lattice quantizer for $\L_{\e_{\text{MAX}}(m)}$. Each relay computes
\begin{align}
\sb_m^R &= \Re(\alpha_m \mathbf{y}_m) + \sum_{\ell=1}^L{\Re(a_{m \e}) \mathbf{d}^R_\ell- \Im(a_{m \e}) \mathbf{d}^I_\ell} \label{e:decodedither1}\\
\sb_m^I &= \Im(\alpha_m \mathbf{y}_m) + \sum_{\ell=1}^L{\Im(a_{m \e}) \mathbf{d}^R_\ell+ \Re(a_{m \e}) \mathbf{d}^I_\ell}. \label{e:decodedither2}
\end{align} To get estimates of the lattice equations, these vectors are quantized onto $\Lambda_{\e_{\text{MAX}}(m)}$ modulo the coarse lattice $\Lambda$:
\begin{align}
\vbh_m^R&= \left[Q_m(\sb_m^R)\right]\modl\\
\vbh_m^I&= \left[Q_m(\sb_m^I)\right]\modl.
\end{align} Note that by (\ref{e:modprop2}) we have
\begin{align}
 \Big[Q_m\big(\sb_m^R\big)\Big]\modl =  \Big[Q_m\big(\big[\sb_m^R\big]\modl\big)\Big]\modl.
 \end{align} 
 \begin{figure*}[!t]
\begin{align}
&[\sb_m^R]\modl =\left[\sum_{\e=1}^L{\Big(\Re(\alpha_m h_{m\e}) \Re(\xb_\e) - \Im(\alpha_m h_{m\e})\Im(\xb_\e) + \Re(a_{m\e}) \db_\e^R - \Im(a_{m\e}) \db_\e^I}\Big) + \Re(\alpha_m \zb_m) \right] \modl \label{e:latmanip1}\\
&= \left[\sum_{\e=1}^L{\Big( \Re(a_{m\e}) (\Re(\xb_\e) + \db_\e^R) - \Im(a_{m\e}) (\Im(\xb_\e) + \db_\e^I}) + \theta_{m\e}^R \Re(\xb_\e) - \theta_{m\e}^I \Im(\xb_\e) \Big)+ \Re(\alpha_m \zb_m) \right] \modl \label{e:latmanip2}\\
&\overset{\mbox{\footnotesize{(a)}}}{=} \left[\sum_{\e=1}^L{\Big( \Re(a_{m\e}) \tb^R_\e - \Im(a_{m\e}) \tb^I_\e } + \theta_{m\e}^R \Re(\xb_\e) - \theta_{m\e}^I \Im(\xb_\e) \Big) + \Re(\alpha_m \zb_m) \right] \modl \label{e:latmanip3} \\
&\overset{\mbox{\footnotesize{(b)}}}{=} \left[ \vb_m^R + \sum_{\e=1}^L{ \Big( \theta_{m\e}^R \Re(\xb_\e) - \theta_{m\e}^I \Im(\xb_\e) \Big)+ \Re(\alpha_m \zb_m)}\right] \modl \label{e:latmanip4}
\end{align}
\hrulefill
\vspace*{4pt}
\end{figure*}
 
 Define $\theta_{m\e}^R = \Re(\alpha_m h_{m\ell} - a_{m\ell})$ and 
$\theta_{m\e}^I = \Im(\alpha_m h_{m\ell} - a_{m\ell})$.  We now show that $[\sb_m^R]~\modl$ is equivalent to $\vb_m^R$ plus some noise terms in (\ref{e:latmanip1})-(\ref{e:latmanip4}). Using similar manipulations, it can be shown that $[\sb_m^I]~\modl$ is equivalent to $\vb_m^I$ plus some noise terms as well. From Lemma \ref{l:dither}, the pairs of random variables $(\vb_m^R, \vbh_m^R)$ and $(\vb_m^I, \vbh_m^I)$ have the same joint distributions as the pairs $(\vb_m^R, \mathbf{\tilde{v}}_m^R)$ and $(\vb_m^I, \mathbf{\tilde{v}}_m^I)$, respectively, where 
\begin{align}
\mathbf{\tilde{v}}_m^R &= \big[Q_m(\mathbf{v}_m^R + \mathbf{z}^R_{eq, m})\big]\modl \\
\mathbf{\tilde{v}}_m^I &= \big[Q_m(\mathbf{v}_m^I + \mathbf{z}^I_{eq, m})\big]\modl \\
\mathbf{z}^R_{eq,m} &= \Re(\alpha_m \zb_m) + \sum_{\ell=1}^L{\theta_{m\ell}^R\mathbf{\tilde{d}}_\ell^R - \theta_{m\e}^I \mathbf{\tilde{d}}_\ell^I} \\
\mathbf{z}^I_{eq,m} &= \Im(\alpha_m \zb_m) + \sum_{\ell=1}^L{\theta_{m\e}^I \mathbf{\tilde{d}}_\ell^R + \theta_{m\e}^R \mathbf{\tilde{d}}_\ell^I}
\end{align} where each $\mathbf{\tilde{d}}_\ell^R$ and $\mathbf{\tilde{d}}_\ell^I$ is drawn independently according to a uniform distribution over $\mathcal{V}$. 
\begin{figure}[t]
\begin{center}
\psset{unit=0.6mm}
\begin{pspicture}(0,0)(110,60)

\rput(10,0){
\rput(-10,5){$\wb_\e^I$}\psline{->}(-5,5)(0,5)
\psframe(0,0)(10,10) \rput(5,5){$\phi$}
\psline{->}(10,5)(40,5)
\rput(25,16){\scalebox{0.45}{

\psline[linewidth=2pt,linecolor=black!50!white](-26.667,0)(-13.333,20)(13.333,20)(26.667,0)(13.333,-20)(-13.333,-20)(-26.667,0)

\rput(-20,0){
\psline(-3.333,5)(3.333,5)(6.667,0)(3.333,-5)(-3.333,-5)(-6.667,0)(-3.333,5)
\pscircle[fillstyle=solid,fillcolor=black](0,0){1}
}
\rput(-10,15){
\psline(-3.333,5)(3.333,5)(6.667,0)(3.333,-5)(-3.333,-5)(-6.667,0)(-3.333,5)
\pscircle[fillstyle=solid,fillcolor=black](0,0){1}
}
\rput(-10,5){
\psline(-3.333,5)(3.333,5)(6.667,0)(3.333,-5)(-3.333,-5)(-6.667,0)(-3.333,5)
\pscircle[fillstyle=solid,fillcolor=black](0,0){1}
}
\rput(-10,-5){
\psline(-3.333,5)(3.333,5)(6.667,0)(3.333,-5)(-3.333,-5)(-6.667,0)(-3.333,5)
\pscircle[fillstyle=solid,fillcolor=black](0,0){1}
}
\rput(-10,-15){
\psline(-3.333,5)(3.333,5)(6.667,0)(3.333,-5)(-3.333,-5)(-6.667,0)(-3.333,5)
\pscircle[fillstyle=solid,fillcolor=black](0,0){1}
}
\rput(20,0){
\psline(-3.333,5)(3.333,5)(6.667,0)(3.333,-5)(-3.333,-5)(-6.667,0)(-3.333,5)
\pscircle[fillstyle=solid,fillcolor=black](0,0){1}
}
\rput(0,10){
\psline(-3.333,5)(3.333,5)(6.667,0)(3.333,-5)(-3.333,-5)(-6.667,0)(-3.333,5)
\pscircle[fillstyle=solid,fillcolor=black](0,0){1}
}
\rput(0,0){
\psline(-3.333,5)(3.333,5)(6.667,0)(3.333,-5)(-3.333,-5)(-6.667,0)(-3.333,5)
\pscircle[fillstyle=solid,fillcolor=black](0,0){1}
}
\rput(0,-10){
\psline(-3.333,5)(3.333,5)(6.667,0)(3.333,-5)(-3.333,-5)(-6.667,0)(-3.333,5)
\pscircle[fillstyle=solid,fillcolor=black](0,0){1}
}
\rput(10,15){
\psline(-3.333,5)(3.333,5)(6.667,0)(3.333,-5)(-3.333,-5)(-6.667,0)(-3.333,5)
\pscircle[fillstyle=solid,fillcolor=black](0,0){1}
}
\rput(10,5){
\psline(-3.333,5)(3.333,5)(6.667,0)(3.333,-5)(-3.333,-5)(-6.667,0)(-3.333,5)
\pscircle[fillstyle=solid,fillcolor=black](0,0){1}
}
\rput(10,-5){
\psline(-3.333,5)(3.333,5)(6.667,0)(3.333,-5)(-3.333,-5)(-6.667,0)(-3.333,5)
\pscircle[fillstyle=solid,fillcolor=black](0,0){1}
}
\rput(10,-15){
\psline(-3.333,5)(3.333,5)(6.667,0)(3.333,-5)(-3.333,-5)(-6.667,0)(-3.333,5)
\pscircle[fillstyle=solid,fillcolor=black](0,0){1}
}

\pscircle[fillstyle=solid,fillcolor=black](0,20){1}
\pscircle[fillstyle=solid,fillcolor=black](20,10){1}
\pscircle[fillstyle=solid,fillcolor=black](20,-10){1}

}}
\psframe(40,0)(60,10) \rput(50,5){Dither}
}

\rput(10,30){
\rput(-10,5){$\wb_\e^R$}\psline{->}(-5,5)(0,5)
\psframe(0,0)(10,10) \rput(5,5){$\phi$}
\psline{->}(10,5)(40,5)
\rput(25,16){\scalebox{0.45}{

\psline[linewidth=2pt,linecolor=black!50!white](-26.667,0)(-13.333,20)(13.333,20)(26.667,0)(13.333,-20)(-13.333,-20)(-26.667,0)

\rput(-20,0){
\psline(-3.333,5)(3.333,5)(6.667,0)(3.333,-5)(-3.333,-5)(-6.667,0)(-3.333,5)
\pscircle[fillstyle=solid,fillcolor=black](0,0){1}
}
\rput(-10,15){
\psline(-3.333,5)(3.333,5)(6.667,0)(3.333,-5)(-3.333,-5)(-6.667,0)(-3.333,5)
\pscircle[fillstyle=solid,fillcolor=black](0,0){1}
}
\rput(-10,5){
\psline(-3.333,5)(3.333,5)(6.667,0)(3.333,-5)(-3.333,-5)(-6.667,0)(-3.333,5)
\pscircle[fillstyle=solid,fillcolor=black](0,0){1}
}
\rput(-10,-5){
\psline(-3.333,5)(3.333,5)(6.667,0)(3.333,-5)(-3.333,-5)(-6.667,0)(-3.333,5)
\pscircle[fillstyle=solid,fillcolor=black](0,0){1}
}
\rput(-10,-15){
\psline(-3.333,5)(3.333,5)(6.667,0)(3.333,-5)(-3.333,-5)(-6.667,0)(-3.333,5)
\pscircle[fillstyle=solid,fillcolor=black](0,0){1}
}
\rput(20,0){
\psline(-3.333,5)(3.333,5)(6.667,0)(3.333,-5)(-3.333,-5)(-6.667,0)(-3.333,5)
\pscircle[fillstyle=solid,fillcolor=black](0,0){1}
}
\rput(0,10){
\psline(-3.333,5)(3.333,5)(6.667,0)(3.333,-5)(-3.333,-5)(-6.667,0)(-3.333,5)
\pscircle[fillstyle=solid,fillcolor=black](0,0){1}
}
\rput(0,0){
\psline(-3.333,5)(3.333,5)(6.667,0)(3.333,-5)(-3.333,-5)(-6.667,0)(-3.333,5)
\pscircle[fillstyle=solid,fillcolor=black](0,0){1}
}
\rput(0,-10){
\psline(-3.333,5)(3.333,5)(6.667,0)(3.333,-5)(-3.333,-5)(-6.667,0)(-3.333,5)
\pscircle[fillstyle=solid,fillcolor=black](0,0){1}
}
\rput(10,15){
\psline(-3.333,5)(3.333,5)(6.667,0)(3.333,-5)(-3.333,-5)(-6.667,0)(-3.333,5)
\pscircle[fillstyle=solid,fillcolor=black](0,0){1}
}
\rput(10,5){
\psline(-3.333,5)(3.333,5)(6.667,0)(3.333,-5)(-3.333,-5)(-6.667,0)(-3.333,5)
\pscircle[fillstyle=solid,fillcolor=black](0,0){1}
}
\rput(10,-5){
\psline(-3.333,5)(3.333,5)(6.667,0)(3.333,-5)(-3.333,-5)(-6.667,0)(-3.333,5)
\pscircle[fillstyle=solid,fillcolor=black](0,0){1}
}
\rput(10,-15){
\psline(-3.333,5)(3.333,5)(6.667,0)(3.333,-5)(-3.333,-5)(-6.667,0)(-3.333,5)
\pscircle[fillstyle=solid,fillcolor=black](0,0){1}
}

\pscircle[fillstyle=solid,fillcolor=black](0,20){1}
\pscircle[fillstyle=solid,fillcolor=black](20,10){1}
\pscircle[fillstyle=solid,fillcolor=black](20,-10){1}

}}
\psframe(40,0)(60,10) \rput(50,5){Dither}
}

\psline{->}(70,35)(85,35)(95,22)
\psline{->}(70,5)(76.5,5)
\psline{->}(79,14.5)(79,7.5) \rput(79,18){$j$}
\pscircle(79,5){2.5} \psline(78,4)(80,6) \psline(80,4)(78,6)
\psline{->}(81.5,5)(85,5)(95,18)
\pscircle(96.5,20){2.5} \psline(95.25,20)(97.75,20) \psline(96.5,18.75)(96.5,21.25)
\psline{->}(99,20)(105,20)
\rput(109,19.5){$\xb_\e$}
\end{pspicture}
\end{center}
\caption{Block diagram of the complex-valued compute-and-forward encoder at transmitter $\ell$, $\Em_\e$. Messages from a finite field are mapped onto a nested lattice code, dithered, and transmitted across the channel.} \label{f:encoder}
\end{figure}
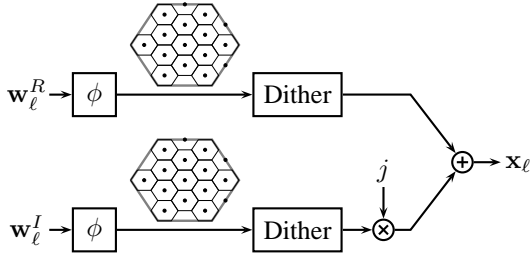
\begin{figure}[t]
\begin{center}
\psset{unit=0.58mm}
\begin{pspicture}(-15,0)(126,60)

\rput(12,0){
\psframe(-6,0)(4,10) \rput(-1,5){$\Im$}
\psline{->}(4,5)(9,5)
\psframe(9,-3)(27,13) \rput(18,8){\mbox{\scriptsize{Remove}}} \rput(18,2){\mbox{\scriptsize{Dithers}}}
\psline{->}(27,5)(32,5)
\psframe(32,0)(73,10) \rput(52.5,5){$[Q_m(~)]\hspace{0.02in}\modl$}
\psline{->}(73,5)(94,5)
\rput(83.5,15.5){\scalebox{0.36}{

\psline[linewidth=2pt,linecolor=black!50!white](-26.667,0)(-13.333,20)(13.333,20)(26.667,0)(13.333,-20)(-13.333,-20)(-26.667,0)

\rput(-20,0){
\psline(-3.333,5)(3.333,5)(6.667,0)(3.333,-5)(-3.333,-5)(-6.667,0)(-3.333,5)
\pscircle[fillstyle=solid,fillcolor=black](0,0){1}
}
\rput(-10,15){
\psline(-3.333,5)(3.333,5)(6.667,0)(3.333,-5)(-3.333,-5)(-6.667,0)(-3.333,5)
\pscircle[fillstyle=solid,fillcolor=black](0,0){1}
}
\rput(-10,5){
\psline(-3.333,5)(3.333,5)(6.667,0)(3.333,-5)(-3.333,-5)(-6.667,0)(-3.333,5)
\pscircle[fillstyle=solid,fillcolor=black](0,0){1}
}
\rput(-10,-5){
\psline(-3.333,5)(3.333,5)(6.667,0)(3.333,-5)(-3.333,-5)(-6.667,0)(-3.333,5)
\pscircle[fillstyle=solid,fillcolor=black](0,0){1}
}
\rput(-10,-15){
\psline(-3.333,5)(3.333,5)(6.667,0)(3.333,-5)(-3.333,-5)(-6.667,0)(-3.333,5)
\pscircle[fillstyle=solid,fillcolor=black](0,0){1}
}
\rput(20,0){
\psline(-3.333,5)(3.333,5)(6.667,0)(3.333,-5)(-3.333,-5)(-6.667,0)(-3.333,5)
\pscircle[fillstyle=solid,fillcolor=black](0,0){1}
}
\rput(0,10){
\psline(-3.333,5)(3.333,5)(6.667,0)(3.333,-5)(-3.333,-5)(-6.667,0)(-3.333,5)
\pscircle[fillstyle=solid,fillcolor=black](0,0){1}
}
\rput(0,0){
\psline(-3.333,5)(3.333,5)(6.667,0)(3.333,-5)(-3.333,-5)(-6.667,0)(-3.333,5)
\pscircle[fillstyle=solid,fillcolor=black](0,0){1}
}
\rput(0,-10){
\psline(-3.333,5)(3.333,5)(6.667,0)(3.333,-5)(-3.333,-5)(-6.667,0)(-3.333,5)
\pscircle[fillstyle=solid,fillcolor=black](0,0){1}
}
\rput(10,15){
\psline(-3.333,5)(3.333,5)(6.667,0)(3.333,-5)(-3.333,-5)(-6.667,0)(-3.333,5)
\pscircle[fillstyle=solid,fillcolor=black](0,0){1}
}
\rput(10,5){
\psline(-3.333,5)(3.333,5)(6.667,0)(3.333,-5)(-3.333,-5)(-6.667,0)(-3.333,5)
\pscircle[fillstyle=solid,fillcolor=black](0,0){1}
}
\rput(10,-5){
\psline(-3.333,5)(3.333,5)(6.667,0)(3.333,-5)(-3.333,-5)(-6.667,0)(-3.333,5)
\pscircle[fillstyle=solid,fillcolor=black](0,0){1}
}
\rput(10,-15){
\psline(-3.333,5)(3.333,5)(6.667,0)(3.333,-5)(-3.333,-5)(-6.667,0)(-3.333,5)
\pscircle[fillstyle=solid,fillcolor=black](0,0){1}
}

\pscircle[fillstyle=solid,fillcolor=black](0,20){1}
\pscircle[fillstyle=solid,fillcolor=black](20,10){1}
\pscircle[fillstyle=solid,fillcolor=black](20,-10){1}

}}
\psframe(94,0)(106,10) \rput(100,5){$\phi\n$}
\psline{->}(106,5)(111,5) \rput(115.5,5.5){$\ubh_m^I$}
}

\rput(12,30){
\psframe(-6,0)(4,10) \rput(-1,5){$\Re$}
\psline{->}(4,5)(9,5)
\psframe(9,-3)(27,13) \rput(18,8){\mbox{\scriptsize{Remove}}} \rput(18,2){\mbox{\scriptsize{Dithers}}}
\psline{->}(27,5)(32,5)
\psframe(32,0)(73,10) \rput(52.5,5){$[Q_m(~)]\hspace{0.02in}\modl$}
\psline{->}(73,5)(94,5)
\rput(83.5,15.5){\scalebox{0.36}{

\psline[linewidth=2pt,linecolor=black!50!white](-26.667,0)(-13.333,20)(13.333,20)(26.667,0)(13.333,-20)(-13.333,-20)(-26.667,0)

\rput(-20,0){
\psline(-3.333,5)(3.333,5)(6.667,0)(3.333,-5)(-3.333,-5)(-6.667,0)(-3.333,5)
\pscircle[fillstyle=solid,fillcolor=black](0,0){1}
}
\rput(-10,15){
\psline(-3.333,5)(3.333,5)(6.667,0)(3.333,-5)(-3.333,-5)(-6.667,0)(-3.333,5)
\pscircle[fillstyle=solid,fillcolor=black](0,0){1}
}
\rput(-10,5){
\psline(-3.333,5)(3.333,5)(6.667,0)(3.333,-5)(-3.333,-5)(-6.667,0)(-3.333,5)
\pscircle[fillstyle=solid,fillcolor=black](0,0){1}
}
\rput(-10,-5){
\psline(-3.333,5)(3.333,5)(6.667,0)(3.333,-5)(-3.333,-5)(-6.667,0)(-3.333,5)
\pscircle[fillstyle=solid,fillcolor=black](0,0){1}
}
\rput(-10,-15){
\psline(-3.333,5)(3.333,5)(6.667,0)(3.333,-5)(-3.333,-5)(-6.667,0)(-3.333,5)
\pscircle[fillstyle=solid,fillcolor=black](0,0){1}
}
\rput(20,0){
\psline(-3.333,5)(3.333,5)(6.667,0)(3.333,-5)(-3.333,-5)(-6.667,0)(-3.333,5)
\pscircle[fillstyle=solid,fillcolor=black](0,0){1}
}
\rput(0,10){
\psline(-3.333,5)(3.333,5)(6.667,0)(3.333,-5)(-3.333,-5)(-6.667,0)(-3.333,5)
\pscircle[fillstyle=solid,fillcolor=black](0,0){1}
}
\rput(0,0){
\psline(-3.333,5)(3.333,5)(6.667,0)(3.333,-5)(-3.333,-5)(-6.667,0)(-3.333,5)
\pscircle[fillstyle=solid,fillcolor=black](0,0){1}
}
\rput(0,-10){
\psline(-3.333,5)(3.333,5)(6.667,0)(3.333,-5)(-3.333,-5)(-6.667,0)(-3.333,5)
\pscircle[fillstyle=solid,fillcolor=black](0,0){1}
}
\rput(10,15){
\psline(-3.333,5)(3.333,5)(6.667,0)(3.333,-5)(-3.333,-5)(-6.667,0)(-3.333,5)
\pscircle[fillstyle=solid,fillcolor=black](0,0){1}
}
\rput(10,5){
\psline(-3.333,5)(3.333,5)(6.667,0)(3.333,-5)(-3.333,-5)(-6.667,0)(-3.333,5)
\pscircle[fillstyle=solid,fillcolor=black](0,0){1}
}
\rput(10,-5){
\psline(-3.333,5)(3.333,5)(6.667,0)(3.333,-5)(-3.333,-5)(-6.667,0)(-3.333,5)
\pscircle[fillstyle=solid,fillcolor=black](0,0){1}
}
\rput(10,-15){
\psline(-3.333,5)(3.333,5)(6.667,0)(3.333,-5)(-3.333,-5)(-6.667,0)(-3.333,5)
\pscircle[fillstyle=solid,fillcolor=black](0,0){1}
}

\pscircle[fillstyle=solid,fillcolor=black](0,20){1}
\pscircle[fillstyle=solid,fillcolor=black](20,10){1}
\pscircle[fillstyle=solid,fillcolor=black](20,-10){1}

}}
\psframe(94,0)(106,10) \rput(100,5){$\phi\n$}
\psline{->}(106,5)(111,5) \rput(115.5,5.5){$\ubh_m^R$}
}
%
\psline{->}(-6.5,33)(-6.5,22.5) \rput(-5.75,36){$\alpha_m$}
\rput(-16,18){$\yb_m$} \psline{->}(-14,20)(-9,20)
\pscircle(-6.5,20){2.5} \psline(-7.5,19)(-5.5,21) \psline(-5.5,19)(-7.5,21)
\psline{->}(-5,21.5)(1,35)(6,35)
\psline{->}(-5,18.5)(1,5)(6,5)
\end{pspicture}
\end{center}
\caption{Block diagram of the complex-valued compute-and-forward decoder for relay $m$, $\Dm_m$. The channel observation is scaled and decomposed into its real and imaginary components. The decoder then removes the dithers, quantizes onto the appropriate fine lattice, and takes the modulus over the coarse lattice. This results in an equation of lattice codewords which is then mapped into an equation of messages over the finite field.} \label{f:decoder}
\end{figure}
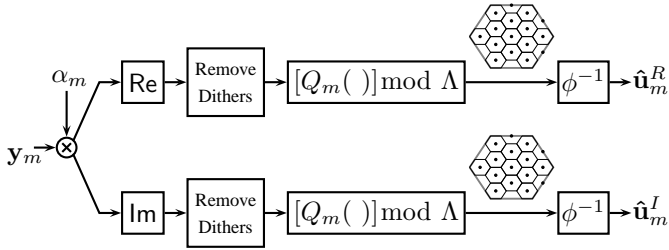
Using Lemma \ref{l:gaussiannoise} from Appendix \ref{s:gaussiannoise}, we have that the densities of both $\zb_{eq,m}^R$ and $\zb_{eq,m}^I$ are upper bounded (times a constant) by the density of an i.i.d. zero-mean Gaussian vector $\zb^*_m$ whose variance $\sigma_m^2$ approaches
\begin{align}
N_{eq,m} &= \frac{|\alpha_m|^2}{2} + \frac{P}{2} \Big( \big(\theta_{m\e}^R\big)^2 + \big(\theta_{m \e}^I\big)^2\Big)\\
&= \frac{|\alpha_m|^2}{2} + \frac{P}{2}\| \alpha_m \hb_m - \ab_m \|^2 
\end{align} as $n \rightarrow \infty$. Note that the effective $\snr$ for both real and imaginary components is
$P/(|\alpha_m|^2 + P \| \alpha_m \hb_m - \ab_m \|^2)$ since the second moment of $\Lambda$ is $P/2$.
This is the same effective $\snr$ encountered in the proof of Theorem \ref{t:latcompreal} and the rest of the proof follows identically from (\ref{e:complexstart}) onwards.
\end{proof}

\textit{Proof of Theorem \ref{t:basiccompcomplex}:}
See Figures \ref{f:encoder} and \ref{f:decoder} for block diagrams of the encoding and decoding processes. Choose $\epsilon > 0$. Encoder $\e$ maps its finite field message vectors $\wb_\ell^R$ and $\wb_\ell^I$ to a lattice points $\mathbf{t}_\ell^R, \mathbf{t}_\ell^I  \in \Lambda_\ell \cap \mathcal{V}$, using $\phi$ from Lemma \ref{l:fieldtolattice},
\begin{align}
\mathbf{t}_\ell^R &= \phi(\wb_\ell^R), \qquad \mathbf{t}_\ell^I = \phi(\wb_\ell^I) \ . 
\end{align} Using Theorem \ref{t:latcompreal}, these lattice points can be transmitted across the channel so that the relays can make estimates $\vbh_m^R$ and $\vbh_m^I$ of lattice equations $\vb_m^R$ and $\vb_m^I$ with coefficient vectors $\ab_m \in \ZC^L$ such that $\Pr\left( \cup_m \big\{\{ \vbh_m^R \neq \vb^R_m \} \cup  \{ \vbh^I_m \neq \vb^I_m \}\big\}  \right) < \epsilon$ for $n$ large enough so long as
\begin{align}
R_\ell <& \min_{m: a_{ml} \neq 0}  \log^+\left(\frac{P}{|\alpha_m|^2 + P \| \alpha_m \mathbf{h}_m - \mathbf{a}_m \|^2}\right)  \nonumber
\end{align} for some $\alpha_1, \ldots, \alpha_M \in \Rbb$. 
Finally, using $\phi\n$ from Lemma \ref{l:msgeqn}, each relay can produce estimates of the desired linear combinations of messages, $\ubh_m^R = \phi\n(\vbh_m^R)$ and $\ubh_m^I = \phi\n(\vbh^I_m)$, such $\Pr\left(\cup_m \big\{\{ \ubh_m^R \neq \ub_m^R \} \cup  \{ \ubh_m^I \neq \ub_m^I \}\big\} \right) < \epsilon$ where
\begin{align}
\ub_m^R &= \bigoplus_{\ell = 1}^L\Big({q_{m \ell }^R \wb_\ell^R \oplus (-q_{m \ell }^I) \wb_\ell^I}\Big) \\
\ub_m^I &= \bigoplus_{\ell = 1}^L\Big({q_{m \ell }^I \wb_\ell^R \oplus q_{m \ell }^R \wb_\ell^I}\Big) \\
q_{m\ell}^R &= g^{-1}\Big(\big[\Re({a}_{m\ell}) \big]\hspace{-0.1in}\mod p \Big) \\
q_{m\ell}^I &=  g^{-1}\Big(\big[\Im({a}_{m\ell}) \big]\hspace{-0.1in}\mod p \Big) \ .
\end{align}

\subsection{Multi-Stage Networks}

The framework developed in this section can easily be applied to AWGN networks with more than one layer of relays. Once the first layer has recovered its equations, it can just treat them as a set of messages for the second layer. The second layer simply decodes equations with coefficients that are close to the channel coefficients. This process repeats until the equations reach a destination. Since these layered equations are all linear, they can be expressed as linear equations over the original messages. 

\section{Recovering Messages} \label{s:recovery}

The primary goal of compute-and-forward is to enable higher achievable rates across an AWGN network. Relays decode linear equations of transmitted messages and pass them towards the destination nodes which, upon receiving enough equations, attempt to solve for their desired messages. In this section, we give sufficient conditions for recovering messages from a given set of equations. 

It will be useful to represent the equations in matrix form. For real-valued channels let $\Qb= \{q_{m\e}\}$ be the matrix of equation coefficients. For complex-valued channels, let $\Qb^R = \{q^R_{m \e}\}$ and $\Qb^I = \{q^I_{m \e}\}$ be the real and imaginary coefficient matrices. Using this representation, we can write out the received equations for real-valued channels in matrix form,
\begin{align*}
\Big[ \ub_1 ~\cdots ~\ub_M \Big]^T = ~ \Qb \Big[ \wb_1~ \cdots~ \wb_L \Big]^T \ . 
\end{align*} Similarly, for complex-valued channels, we can write
\begin{align*}
&\Big[ \ub^R_1 ~\cdots ~\ub^R_M~ \ub^I_1 ~\cdots ~\ub^I_M \Big]^T \\&\qquad = ~ 
\left[
\begin{array}{cc}
  \Qb^R& -\Qb^I    \\
\Qb^I  & \Qb^R     
\end{array}
\right]
 \Big[ \wb^R_1~ \cdots~ \wb^R_L~ \wb^I_1~ \cdots~ \wb^I_L \Big]^T \ . 
\end{align*}
These matrix formulations immediately yield sufficient conditions for recovery.
\begin{theorem} \label{t:recoverallreal} For real-valued channels, a destination, given $M$ linear combinations of messages with coefficient matrix $\Qb \in \Fbb_p^{M \times L}$, can recover all messages if and only if $\Qb$ has rank $L$. 
\end{theorem}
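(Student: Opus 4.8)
The plan is to recognize Theorem \ref{t:recoverallreal} as a purely algebraic statement over $\Fbb_p$ and prove it symbol-by-symbol, using the matrix identity $[\ub_1 \cdots \ub_M]^T = \Qb [\wb_1 \cdots \wb_L]^T$ stated just above it. That identity holds entrywise: for each symbol index $t \in \{1, \ldots, k\}$, collecting the $t$-th coordinates of the message and equation vectors into $\mathbf{w}(t) \in \Fbb_p^{L}$ and $\mathbf{u}(t) \in \Fbb_p^{M}$ gives $\mathbf{u}(t) = \Qb\, \mathbf{w}(t)$ over $\Fbb_p$. The destination knows $\Qb$ and observes $\ub_1, \ldots, \ub_M$, hence all of the $\mathbf{u}(t)$, and wishes to reconstruct every $\wb_\ell$, equivalently every $\mathbf{w}(t)$.

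For the "if" direction, assume $\mathsf{rank}(\Qb) = L$. Then $\Qb$ has full column rank, hence row rank $L$ as well; selecting $L$ linearly independent rows of $\Qb$ produces a nonsingular $L \times L$ submatrix $\tilde{\Qb}$, and from $\tilde{\Qb}^{-1}$ one builds a left inverse $\Qb^{+} \in \Fbb_p^{L \times M}$ with $\Qb^{+} \Qb = \mathbf{I}$. (One should resist the reflex to write $\Qb^{+} = (\Qb^T\Qb)^{-1}\Qb^T$: over $\Fbb_p$ the Gram matrix $\Qb^T\Qb$ can be singular even when $\Qb$ has full column rank.) The destination outputs $[\wbh_1 \cdots \wbh_L]^T = \Qb^{+}[\ub_1 \cdots \ub_M]^T$ and discards the zero-padding. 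This decoder is deterministic and exact, so it satisfies Definition \ref{d:recovery} for every $\epsilon > 0$ and every $n$.

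For the "only if" direction, assume $\mathsf{rank}(\Qb) < L$. By rank--nullity the null space of $\Qb$ contains a nonzero $\vb = [v_1 \cdots v_L]^T$, and $v_{\ell^\star} \neq 0$ for some $\ell^\star$; since each $k_\ell \geq 1$, the first coordinate of every $\wb_\ell$ is an honest uniform message symbol. Consider the map $T$ on message tuples that replaces the first coordinate of $\wb_\ell$ by its $\oplus$-sum with $v_\ell$, for every $\ell$. Then $T$ is a fixed-point-free involution, it preserves the uniform distribution on the message space, it leaves $\Qb[\wb_1 \cdots \wb_L]^T$ — hence every observation $\ub_m$ — unchanged, and it alters $\wb_{\ell^\star}$. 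Therefore, for any decoder $\Dm$, the output is the same on a message tuple and on its $T$-image, while the true $\wb_{\ell^\star}$ differs between them, so at most one of the two tuples is decoded correctly for transmitter $\ell^\star$. Pairing the equiprobable message space under $T$ yields $\Pr(\wbh_{\ell^\star} \neq \wb_{\ell^\star}) \geq 1/2$, so for $\epsilon < 1/2$ the destination cannot recover all messages.

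The argument is essentially routine linear algebra; the only places that reward care are (i) the non-existence of an orthogonal-projection-style pseudoinverse over $\Fbb_p$, which forces the left inverse to be extracted from a nonsingular square submatrix, and (ii) turning the "only if" half into a genuine converse against arbitrary decoders, which the involution-and-pairing argument accomplishes while simultaneously confirming that the perturbed messages still lie in the (zero-padded) message space.
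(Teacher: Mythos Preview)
Your proof is correct. The paper itself treats this theorem as an immediate consequence of the matrix identity $[\ub_1 \cdots \ub_M]^T = \Qb[\wb_1 \cdots \wb_L]^T$ and gives no explicit argument, so your approach is the same in spirit but considerably more careful: you supply a genuine converse against arbitrary decoders (the involution-and-pairing argument yielding $\Pr(\wbh_{\ell^\star} \neq \wb_{\ell^\star}) \geq 1/2$), and you flag the finite-field trap that $(\Qb^T\Qb)^{-1}\Qb^T$ need not exist over $\Fbb_p$ even when $\Qb$ has full column rank, building the left inverse from a nonsingular square submatrix instead. Both refinements are worth keeping.
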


\begin{theorem} \label{t:recoverallcomplex} For complex-valued channels, a destination, given $M$ linear combinations of messages with real and imaginary coefficient matrices $\Qb^R, \Qb^I \in \Fbb_p^{M \times L}$, can recover all messages if and only if both $\Qb^R$ and $\Qb^I$ have rank $L$. 
\end{theorem}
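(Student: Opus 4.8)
The statement is the complex counterpart of Theorem~\ref{t:recoverallreal}, so the plan is to use the same two moves: first reduce ``recover all messages'' to a rank condition on a single matrix over $\Fbb_p$, then read that condition off the blocks $\Qb^R$ and $\Qb^I$.

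\textbf{Reduction to linear algebra.} Stack the decoded equations into $\ub \triangleq [\ub_1^R, \ldots, \ub_M^R, \ub_1^I, \ldots, \ub_M^I]^T$ and the message halves into $\wb \triangleq [\wb_1^R, \ldots, \wb_L^R, \wb_1^I, \ldots, \wb_L^I]^T$. The matrix identity displayed immediately before the theorem is precisely $\ub = \mathbf{M}\,\wb$, where $\mathbf{M} \in \Fbb_p^{2M \times 2L}$ is the block matrix with rows $[\,\Qb^R \;\; -\Qb^I\,]$ and $[\,\Qb^I \;\; \Qb^R\,]$, with all arithmetic over $\Fbb_p$ and applied coordinate-wise to the length-$k$ blocks. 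So the destination observes a deterministic $\Fbb_p$-linear image of $\wb$, and recovery of all messages in the sense of Definition~\ref{d:recovery} holds if and only if $\mathbf{M}$ has full column rank $2L$ (in particular $M \ge L$). For the ``if'' direction, pick any left inverse $\mathbf{P}$ of $\mathbf{M}$ over the field $\Fbb_p$ (it exists by Gaussian elimination) and output $\wbh = \mathbf{P}\,\ub$, which is exact for every $n$. For the ``only if'' direction, if $\mathrm{rank}\,\mathbf{M} < 2L$ choose a nonzero $\wb_0 \in \ker \mathbf{M}$ and an index $\ell_0$ at which a block of $\wb_0$ is nonzero; since all message halves are i.i.d.\ uniform, $\wb$ and $\wb \oplus \wb_0$ are equiprobable, produce the same observation $\ub$, yet differ in block $\ell_0$, so every decoder errs on $\wb_{\ell_0}$ with probability at least $1/2$, contradicting Definition~\ref{d:recovery} for $\epsilon < 1/2$. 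This is exactly the argument behind Theorem~\ref{t:recoverallreal}, now run with $\mathbf{M}$ in place of $\Qb$.

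\textbf{Unpacking the rank of $\mathbf{M}$.} This is where I expect the real work to lie. I would diagonalize the block structure over the splitting field $\mathbb{K}$ of $x^2+1$ over $\Fbb_p$ --- so $\mathbb{K} = \Fbb_p$ when $p \equiv 1 \bmod 4$ and $\mathbb{K} = \Fbb_{p^2}$ when $p \equiv 3 \bmod 4$. Fixing $i \in \mathbb{K}$ with $i^2 = -1$ (and noting $p$ is odd), left-multiplying $\mathbf{M}$ by the $2M \times 2M$ matrix $\left(\begin{smallmatrix} I & iI \\ I & -iI \end{smallmatrix}\right)$ and right-multiplying by the inverse of the $2L \times 2L$ matrix of the same form sends $\mathbf{M}$ to $\mathrm{diag}\!\left(\Qb^R + i\Qb^I,\; \Qb^R - i\Qb^I\right)$; rank is unchanged under extension of scalars and under these multiplications, so $\mathrm{rank}\,\mathbf{M} = \mathrm{rank}_{\mathbb{K}}(\Qb^R + i\Qb^I) + \mathrm{rank}_{\mathbb{K}}(\Qb^R - i\Qb^I)$. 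Hence $\mathbf{M}$ has rank $2L$ iff each of $\Qb^R + i\Qb^I$ and $\Qb^R - i\Qb^I$ has rank $L$ over $\mathbb{K}$. When $p \equiv 3 \bmod 4$ these two matrices are Galois conjugates, so the conditions coincide and collapse to ``$\Qb^R + j\Qb^I$ has rank $L$ over $\Fbb_p[j] = \Fbb_{p^2}$''; when $p \equiv 1 \bmod 4$ they are two separate rank conditions over $\Fbb_p$. The remaining step is to express this equivalence directly in terms of $\Qb^R$ and $\Qb^I$, as in the statement. I expect that last translation --- reconciling invertibility of the coupled real/imaginary system with rank conditions on the individual blocks, and keeping careful track of whether $-1$ is a square modulo $p$ --- to be the one genuinely delicate point; everything in the reduction step is routine.
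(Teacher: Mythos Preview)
Your reduction step is correct: the destination recovers all messages if and only if the block matrix $\mathbf{M} = \left[\begin{smallmatrix} \Qb^R & -\Qb^I \\ \Qb^I & \Qb^R \end{smallmatrix}\right]$ has full column rank $2L$ over $\Fbb_p$. The paper offers no proof beyond the remark that the matrix formulation ``immediately yields'' the result, so your argument here is already more than the paper provides.

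The genuine gap lies in your second step, and it is not a gap in your reasoning but in the theorem itself: the condition $\mathrm{rank}\,\mathbf{M} = 2L$ is \emph{not} equivalent to ``both $\Qb^R$ and $\Qb^I$ have rank $L$,'' so the translation you flag as ``delicate'' is in fact impossible. For the ``only if'' direction, take $\Qb^R = \mathbf{I}$ and $\Qb^I = \mathbf{0}$: then $\mathbf{M}$ is the $2L \times 2L$ identity and all messages are recoverable, yet $\Qb^I$ has rank $0$. For the ``if'' direction, take $p \equiv 1 \bmod 4$, fix $i \in \Fbb_p$ with $i^2 = -1$, and set $L = M = 1$, $\Qb^R = [1]$, $\Qb^I = [i]$: both have rank $1$, but $\mathbf{M} = \left[\begin{smallmatrix} 1 & -i \\ i & 1 \end{smallmatrix}\right]$ has determinant $1 + i^2 = 0$, so recovery fails. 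Your own block-diagonalization already exposes this: the correct condition is that $\Qb^R + i\Qb^I$ and $\Qb^R - i\Qb^I$ each have rank $L$ over the splitting field, which neither implies nor is implied by the individual ranks of $\Qb^R$ and $\Qb^I$. The statement as printed appears to be in error; the operationally correct condition is the one you derived, $\mathrm{rank}\,\mathbf{M} = 2L$, and Theorem~\ref{t:rankcomplex} in the paper supplies the matching sufficient condition phrased over $\Cbb$.
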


In many cases, a destination may only be interested in a subset of the transmitted messages. Depending on the coefficients, it may be able to reduce the number of required equations. Recall that $\delta_\e$ is the unit vector with $1$ in the $\ellth$ entry and 0 elsewhere. 

\begin{theorem} \label{t:recoveronereal} For real-valued channels, a destination, given $M$ linear combinations of messages with coefficient matrix $\Qb \in \Fbb_p^{M \times L}$, can recover the message $\wb_\ell$ if there exists a vector $\cb \in \Fbb_p^M$ such that $\cb^T \Qb = \delta_\ell^T$. 
\end{theorem}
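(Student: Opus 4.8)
The plan is to exploit the matrix identity displayed just before the theorem: over $\Fbb_p$, the vector of received equations satisfies $[\ub_1 ~\cdots ~\ub_M]^T = \Qb\, [\wb_1 ~\cdots ~\wb_L]^T$, where each $\wb_\ell$ is understood to be zero-padded to the common length $k$ and all field operations act componentwise. Given a vector $\cb \in \Fbb_p^M$ with $\cb^T \Qb = \delta_\ell^T$, I would simply have the destination form the $\Fbb_p$-linear combination $\bigoplus_{m=1}^M c_m \ub_m$ of the equations it holds, and then read off the first $k_\ell$ coordinates of the result to produce $\wbh_\ell$.

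The verification is a one-line computation over $\Fbb_p$: substituting $\ub_m = \bigoplus_{\ell'=1}^L q_{m\ell'}\wb_{\ell'}$ and interchanging the two finite sums gives $\bigoplus_{m} c_m \ub_m = \bigoplus_{\ell'}\big(\bigoplus_{m} c_m q_{m\ell'}\big)\wb_{\ell'} = \bigoplus_{\ell'}(\cb^T\Qb)_{\ell'}\,\wb_{\ell'} = \wb_\ell$, the last equality using $\cb^T\Qb = \delta_\ell^T$. Truncating to the first $k_\ell$ entries strips the zero padding and yields exactly $\wb_\ell \in \Fbb_p^{k_\ell}$. So the map $\Dm(\ub_1,\ldots,\ub_M)$ defined this way is deterministic and returns $\wb_\ell$ exactly whenever it is fed the true equations.

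For the probability-of-error claim I would invoke Definition \ref{d:perror} together with the achievability results that deliver the equations to the destination (Theorem \ref{t:basiccompreal}, possibly applied across several hops as in the Multi-Stage Networks subsection): for any $\epsilon > 0$ and $n$ large enough all the relevant equations are recovered with $\Pr\big(\bigcup_m \{\ubh_m \neq \ub_m\}\big) < \epsilon$. On the complementary event every $\ubh_m$ equals the true $\ub_m$, and since $\Dm$ is a deterministic function of its inputs we get $\wbh_\ell = \Dm(\ubh_1,\ldots,\ubh_M) = \Dm(\ub_1,\ldots,\ub_M) = \wb_\ell$ on that event; hence $\Pr(\wbh_\ell \neq \wb_\ell) < \epsilon$, which is precisely Definition \ref{d:recovery}.

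There is essentially no hard step: the content is linear algebra over $\Fbb_p$ plus a union bound. The only points that need a moment's care are (i) the zero-padding bookkeeping, so that the recovered length-$k$ vector really does restrict to the intended length-$k_\ell$ message, and (ii) making explicit that composing equations through several network layers is still a single $\Fbb_p$-linear relation with some coefficient matrix $\Qb$, so that the hypothesis $\cb^T\Qb = \delta_\ell^T$ is exactly the right condition regardless of topology.
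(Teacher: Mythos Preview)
Your proposal is correct and matches the paper's own treatment: the paper does not spell out a proof for this theorem, regarding it as immediate from the matrix identity $[\ub_1~\cdots~\ub_M]^T = \Qb\,[\wb_1~\cdots~\wb_L]^T$ displayed just before the statement, so that left-multiplying by $\cb^T$ with $\cb^T\Qb = \delta_\ell^T$ yields $\wb_\ell$. Your added remarks on zero-padding and on the probability-of-error bound via the union over $\{\ubh_m \neq \ub_m\}$ are fine, though note that per Definition~\ref{d:recovery} the decoder is handed the true $\ub_m$, so the recovery here is in fact exact and the $\epsilon$-clause is satisfied trivially.
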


\begin{theorem} \label{t:recoveronecomplex} For complex-valued channels, a destination, given $M$ linear combinations of messages with real and imaginary coefficient matrices $\Qb^R, \Qb^I \in \Fbb_p^{M \times L}$, can recover the message $\wb_\ell$ if there exists a vector $\cb \in \Fbb_p^{2M}$ such that 
\begin{align}
\cb^T \left[
\begin{array}{cc}
  \Qb^R& -\Qb^I    \\
\Qb^I  & \Qb^R     
\end{array}
\right] = \delta_\ell^T \ . \label{e:recoverone}
\end{align}
\end{theorem}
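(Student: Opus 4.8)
The plan is to reuse the one-line argument behind Theorem~\ref{t:recoveronereal} and then exploit the block structure of the coefficient matrix to recover the paired component at no extra cost. Let $\Qb_\Cbb$ denote the $2M\times 2L$ matrix $\left[\begin{array}{cc}\Qb^R & -\Qb^I\\ \Qb^I & \Qb^R\end{array}\right]$ appearing in (\ref{e:recoverone}), write $\bar{\ub} = \big[\ub_1^R \cdots \ub_M^R ~ \ub_1^I \cdots \ub_M^I\big]^T$ for the stacked vector of received equations and $\bar{\wb} = \big[\wb_1^R \cdots \wb_L^R ~ \wb_1^I \cdots \wb_L^I\big]^T$ for the stacked message vector, so that $\bar{\ub} = \Qb_\Cbb\,\bar{\wb}$ with all arithmetic over $\Fbb_p$ (this is the matrix form of the complex-valued equations stated just above). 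The decoder we build is a fixed $\Fbb_p$-linear function of $(\ub_1,\ldots,\ub_M)$, hence it errs only when one of its input equations is wrong; so if the equations $\ub_m^R,\ub_m^I$ are available with average probability of error below $\epsilon$, then $\wb_\ell$ is recovered with probability of error below $\epsilon$, exactly as Definition~\ref{d:recovery} requires. What remains is purely linear-algebraic.

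The hypothesis furnishes $\cb\in\Fbb_p^{2M}$ with $\cb^T\Qb_\Cbb = \delta_\ell^T$; since $\ell \le L$, the length-$2L$ unit vector $\delta_\ell$ selects the real part $\wb_\ell^R$. The destination forms the single linear combination $\cb^T\bar{\ub}$, computed entrywise on the length-$k$ equation vectors over $\Fbb_p$; by $\bar{\ub}=\Qb_\Cbb\bar{\wb}$ it equals $\delta_\ell^T\bar{\wb} = \wb_\ell^R$, recovering the real part of the message.

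The key step is to recover $\wb_\ell^I$ from the \emph{same} hypothesis, using the $2\times 2$ block pattern of $\Qb_\Cbb$. Split $\cb = [\cb_1^T ~ \cb_2^T]^T$ with $\cb_1,\cb_2\in\Fbb_p^M$. Expanding $\cb^T\Qb_\Cbb = \delta_\ell^T$ into its two length-$L$ halves yields that $\cb_1^T\Qb^R + \cb_2^T\Qb^I$ equals the length-$L$ unit row vector with a $1$ in position $\ell$, and that $\cb_2^T\Qb^R - \cb_1^T\Qb^I = \mathbf{0}^T$. Now set $\cb' = [\,-\cb_2^T ~ \cb_1^T\,]^T$, the real-representation analogue of multiplying $\cb$ by $j$. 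A direct expansion shows that the first length-$L$ half of $\cb'^T\Qb_\Cbb$ is $-\cb_2^T\Qb^R + \cb_1^T\Qb^I = -\big(\cb_2^T\Qb^R - \cb_1^T\Qb^I\big) = \mathbf{0}^T$, while the second half is $\cb_1^T\Qb^R + \cb_2^T\Qb^I$, which is the length-$L$ unit row vector with a $1$ in position $\ell$. Hence $\cb'^T\Qb_\Cbb$ is the length-$2L$ unit vector selecting $\wb_\ell^I$, so $\cb'^T\bar{\ub} = \wb_\ell^I$, and the destination outputs $\wbh_\ell = (\cb^T\bar{\ub},\,\cb'^T\bar{\ub}) = \wb_\ell$.

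The only delicate point I expect is the block expansion just sketched: one must use \emph{both} scalar identities implied by $\cb^T\Qb_\Cbb=\delta_\ell^T$ — the one asserting that $\wb_\ell^R$ is selected and the one asserting that the imaginary block contributes nothing — in order to conclude that $\cb'$ selects $\wb_\ell^I$. Apart from this bookkeeping the argument is a transcription of the real-valued proof, and the error analysis follows at once from $\Fbb_p$-linearity.
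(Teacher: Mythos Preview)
Your proof is correct and follows essentially the same approach as the paper: apply $\cb$ to extract $\wb_\ell^R$, then use the rotated vector $\cb' = [-\cb_2^T~\cb_1^T]^T$ (the paper writes it as $\tilde{\cb} = [-\cb^I~\cb^R]$) to extract $\wb_\ell^I$. The only difference is that you spell out the block computation explicitly, whereas the paper dismisses it with the phrase ``by symmetry''; your version is the more careful write-up of the same idea.
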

\begin{proof}
Clearly, the vector $\cb$ can be applied to the received equations $[ \ub^R_1 ~\cdots ~\ub^R_M~ \ub^I_1 ~\cdots ~\ub^I_M ]^T$ to recover $\wb_\ell^R$. Let $\cb^R$ denote the first $M$ elements of $\cb$, $\cb^I$ denote the last $M$ elements, and let $\tilde{\mathbf{c}} = [ -\cb^I ~\cb^R]$. By symmetry, replacing $\cb$ with $\tilde{\mathbf{c}}$ in (\ref{e:recoverone}) will yield the unit vector $\delta^T_{\ell + L}$ instead of $\delta^T_\ell$. Thus, $\tilde{\mathbf{c}}$ can be used to extract $\wb_\ell^I$ from the equations.
\end{proof}

\begin{remark} These conditions can also be stated directly in terms of the coefficient vectors $\ab_1, \ldots, \ab_M$. For real-valued channels, set $\Ab = [\ab_1 ~\cdots ~\ab_M]^T$. Now, we can substitute $\Qb$ with $\Ab$ in Theorems \ref{t:recoverallreal} and \ref{t:recoveronereal} so long as all operations are taken modulo $p$. For complex-valued channels, the same holds true for Theorems \ref{t:recoverallcomplex} and \ref{t:recoveronecomplex} if we replace $\Qb^R$ with $\Re(\Ab)$ and $\Qb^I$ with $\Im(\Ab)$.\end{remark} 

It may be more convenient to evaluate the rank of the coefficients directly on the complex field. This is possible, given some mild assumptions on the equation coefficients.

\begin{theorem} \label{t:rankcomplex}
Assume that, in an AWGN network, the magnitude of each equation coefficient is upper bounded by a constant $a_{\text{MAX}}$. Then, for sufficiently large blocklength $n$ and field size $p$, there exists a set of nested lattice codes such that a destination can recover all $L$ messages from $L$ equations if their coefficient matrix $\Ab = [\ab_1 ~\cdots ~\ab_L]^T$ is full rank over the complex field. 
\end{theorem}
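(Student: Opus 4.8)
\textit{Proof sketch.} The plan is to translate the hypothesis ``$\Ab$ has full rank over $\Cbb$'' into the statement ``a certain $2L\times 2L$ integer matrix stays nonsingular after reduction modulo $p$,'' which can always be arranged once $p$ exceeds a fixed constant depending only on $a_{\text{MAX}}$ and $L$. Recall the matrix formulation from Section~\ref{s:recovery}: for complex-valued channels with $M=L$ equations, the $2L$ message components are obtained from the $2L$ received equation components by
\begin{align*}
\Big[\ub_1^R ~\cdots~ \ub_L^R ~~ \ub_1^I ~\cdots~ \ub_L^I\Big]^{T}
= \left[\begin{array}{cc}\Qb^R & -\Qb^I \\ \Qb^I & \Qb^R\end{array}\right]\Big[\wb_1^R ~\cdots~ \wb_L^R ~~ \wb_1^I ~\cdots~ \wb_L^I\Big]^{T},
\end{align*}
where $\Qb^R$ has $(m,\ell)$-entry $g^{-1}\big([\Re(a_{m\ell})]\bmod p\big)$ and $\Qb^I$ has $(m,\ell)$-entry $g^{-1}\big([\Im(a_{m\ell})]\bmod p\big)$. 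Exactly as in the proof of Theorem~\ref{t:recoverallreal} (its counterpart here for $L$ complex equations), a destination holding the true equations can recover all $L$ messages whenever this $2L\times 2L$ coefficient matrix is invertible over $\Fbb_p$; so it suffices to choose the field size $p$ so that invertibility holds for every admissible full-rank $\Ab$.

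The key observation is that the displayed block matrix is the realification of $\Ab = \Re(\Ab) + j\,\Im(\Ab)$. A routine computation --- conjugating the real block matrix over $\Cbb$ by $\left[\begin{array}{cc}\mathbf{I} & \mathbf{I} \\ -j\mathbf{I} & j\mathbf{I}\end{array}\right]$, which block-diagonalizes it into $\mathrm{diag}\!\big(\Re(\Ab)+j\Im(\Ab),~\Re(\Ab)-j\Im(\Ab)\big)$ --- yields the identity
\begin{align*}
\det\left[\begin{array}{cc}\Re(\Ab) & -\Im(\Ab) \\ \Im(\Ab) & \Re(\Ab)\end{array}\right]
= \det(\Ab)\,\overline{\det(\Ab)} = |\det\Ab|^2 .
\end{align*}
This is an identity of polynomials with \emph{integer} coefficients in the entries of $\Re(\Ab)$ and $\Im(\Ab)$ (the left-hand side is the Leibniz expansion of a matrix whose entries are $\pm$ those variables), so it stays valid after reducing every entry modulo $p$: the determinant of the $\Fbb_p$-matrix above is simply the image of the integer $|\det\Ab|^2$. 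In particular one never needs $\sqrt{-1}\in\Fbb_p$. Since $\Ab$ has full rank over $\Cbb$, $\det\Ab$ is a nonzero Gaussian integer, so $|\det\Ab|^2$ is a \emph{positive} integer.

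To finish, I would bound $|\det\Ab|^2$ uniformly. The hypothesis $|a_{m\ell}|\le a_{\text{MAX}}$ makes every row of $\Ab$ have $\ell^2$-norm at most $\sqrt{L}\,a_{\text{MAX}}$, so Hadamard's inequality gives $1\le|\det\Ab|^2\le D_{\max}:=(L\,a_{\text{MAX}}^2)^{L}$, a bound independent of $n$ and $p$; moreover there are only finitely many admissible $\Ab$, since each entry is one of the finitely many Gaussian integers of magnitude at most $a_{\text{MAX}}$. Choosing any prime $p>D_{\max}$ therefore forces $|\det\Ab|^2\not\equiv 0\pmod p$ for \emph{all} such $\Ab$ simultaneously, so the $2L\times 2L$ matrix is invertible over $\Fbb_p$ and every message can be recovered. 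Finally, since the nested lattice construction of Section~\ref{s:latticeconstruct} only requires $p$ to grow with $n$ --- in particular fast enough that $n/p\to 0$, needed for AWGN-goodness --- and $D_{\max}$ is a fixed constant, for every sufficiently large blocklength $n$ a prime $p>D_{\max}$ with the required properties exists (e.g.\ $p\sim n\log n$), which supplies the desired code ensemble.

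\textbf{Main obstacle.} The delicate point is bridging ``full rank over $\Cbb$'' and ``nonsingular modulo $p$'': the former is an infinite-precision statement with no a priori consequences mod $p$, and $\Fbb_p$ need not contain $j$. Both difficulties dissolve through the integer polynomial identity $\det(\text{realification of }\Ab)=|\det\Ab|^2$ together with the fact that this quantity is a positive integer bounded by a constant depending only on $a_{\text{MAX}}$ and $L$, hence automatically a unit modulo any sufficiently large prime. Everything else --- the matrix view of recovery, Hadamard's bound, and the existence of a suitable prime compatible with the lattice construction --- is routine bookkeeping.
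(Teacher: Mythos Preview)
Your proposal is correct and follows essentially the same approach as the paper: pass to the real $2L\times 2L$ representation $\tilde{\Ab}$, observe that its determinant is a nonzero integer bounded in magnitude by a constant depending only on $L$ and $a_{\text{MAX}}$, and conclude that it remains nonzero modulo any sufficiently large prime $p$. The only differences are cosmetic refinements---you invoke the identity $\det(\tilde{\Ab})=|\det\Ab|^2$ and Hadamard's inequality to get the bound $(L\,a_{\text{MAX}}^2)^L$, whereas the paper uses the cruder Leibniz expansion bound $(2L)!\,a_{\text{MAX}}^{2L}$---but the architecture of the argument is identical.
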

\begin{proof}
$\Ab$ is full rank over the complex field if and only if its real-valued representation $\mathbf{\tilde{A}}$ is full rank over the reals. Recall that a matrix is full rank only if its determinant is non-zero. We will now show that for sufficiently large $p$, if the determinant of $\mathbf{\tilde{A}}$ is non-zero over the reals it is non-zero modulo $p$. The determinant over $\Rbb$ can be written as
\begin{align}
\det(\mathbf{\tilde{A}}) = \sum_{\sigma \in \mathcal{S}}{\mbox{sgn}(\sigma) \prod_{m = 1}^{2L}{\tilde{a}_{m \sigma(m)}}}
\end{align} where $\mathcal{S}$ is the set of all permutations of $\{1,2,\ldots,2L\}$, sgn$(\sigma)$ is the signature of the permutation which is equal to $1$ for even permutations and $-1$ for odd permutations, and $\tilde{a}_{m\e}$ are the entries of $\mathbf{\tilde{A}}$. Using the upper bound on the magnitudes of the $a_{m\e}$ and the fact that $| \mathcal{S} | = (2L)!$, the determinant is lower and upper bounded as follows:
\begin{align}
-(2L)! (a_{\text{MAX}})^{2L} \leq \det(\mathbf{\tilde{A}}) \leq (2L)! (a_{\text{MAX}})^{2L} \ . 
\end{align} The determinant under modulo $p$ arithmetic can be written as
\begin{align} 
\left[\sum_{\sigma \in \mathcal{S}}{\mbox{sgn}(\sigma) \prod_{m = 1}^{2L}{\tilde{a}_{m \sigma(m)}}}\right] \modp \ . 
\end{align}
Since the underlying field size $p \rightarrow \infty$ as $n \rightarrow \infty$, for large enough blocklength $n$, we can use the bounds on $\det(\mathbf{\tilde{A}})$ to show that the determinant modulo $p$ does not wrap   around zero. This immediately implies that it is zero if and only the determinant is zero over the reals. 
\end{proof}

\begin{remark}
Theorem \ref{t:rankcomplex} can also be stated in terms of bounds on the channel coefficients. For instance, if $|h_{m\e}| < h_{\text{MAX}}$, then we can use the bound in Lemma \ref{l:coeffbound}, to show that $|a_{m \e}|$ is bounded as well. More generally, the result holds if the channel coefficients are drawn from a distribution such that $Pr\left(\cup_{m\e}\{|h_{m\e}| > h_{\text{MAX}} \}\right) \rightarrow 0$ as $h_{\text{MAX}} \rightarrow \infty$. In this case, we choose $h_{\text{MAX}}$ such that this probability is very small and can be absorbed into the total probability of error for our scheme. The result follows by taking an appropriate increasing sequence of $h_{\text{MAX}}$. 
\end{remark}

\begin{figure}[h]
\begin{center}
\psset{unit=0.55mm}
\begin{pspicture}(0,-20)(150,50)

\footnotesize

\rput(1,32){$\wb_1$}
\rput(1,12){$\wb_2$}
\rput(0.5,-18){$\wb_M$}
\psline{->}(5,32)(10,32)
\psline{->}(5,12)(10,12)
\psline{->}(5,-18)(10,-18)

\rput(2,0){
 \psframe(8,27)(25,37)
\rput(17,32){$\mathsf{Tx}~1$} \rput(31,35.5){$\xb_1$}
\psline{->}(25,32)(38,32)

 \psframe(8,7)(25,17)
\rput(17,12){$\mathsf{Tx}~2$} \rput(31,15.5){$\xb_2$}
\psline{->}(25,12)(38,12)

\rput(18,-1){$\vdots$}

 \psframe(8,-23)(25,-13)
\rput(17,-18){$\mathsf{Tx}~M$} \rput(31,-14.5){$\xb_M$}
\psline{->}(25,-18)(38,-18)
}


\psframe(40,-23)(55,37)
\rput(47.5,7){\large{$\mathbf{H}$}}

\rput(13,0){
\psline{->}(42,32)(46.5,32)
\pscircle(49,32){2.5} \psline{-}(47.75,32)(50.25,32)
\psline{-}(49,30.75)(49,33.25) \psline{->}(49,39.5)(49,34.5) \rput(49,42.5){$\zb_1$}
\psline{->}(51.5,32)(63,32) \rput(57,35.5){$\yb_1$}

\psline{->}(42,12)(46.5,12)
\pscircle(49,12){2.5} \psline{-}(47.75,12)(50.25,12)
\psline{-}(49,10.75)(49,13.25) \psline{->}(49,19.5)(49,14.5) \rput(49,22.5){$\zb_2$}
\psline{->}(51.5,12)(63,12)  \rput(57,15.5){$\yb_2$}

\psline{->}(42,-18)(46.5,-18)
\pscircle(49,-18){2.5} \psline{-}(47.75,-18)(50.25,-18)
\psline{-}(49,-19.25)(49,-16.75) \psline{->}(49,-10.5)(49,-15.5) \rput(49,-7.5){$\zb_M$}
\psline{->}(51.5,-18)(63,-18) \rput(57,-14.5){$\yb_M$}

\psframe(63,27)(85,37) \rput(74.5,32){$\mathsf{Relay}~1$}

\psframe(63,7)(85,17) \rput(74.5,12){$\mathsf{Relay}~2$}

\rput(69.5,-1){$\vdots$}

\psframe(63,-23)(85,-13) \rput(74.5,-18){$\mathsf{Relay}~M$}
}

\rput(63,0){
\psline{->}(35,32)(46.5,32)
\pscircle(49,32){2.5} \psline{-}(47.75,32)(50.25,32)
\psline{-}(49,30.75)(49,33.25) \psline{->}(49,39.5)(49,34.5) \rput(49.5,42.5){$\zb^{a}_1$}
\psline{->}(51.5,32)(63,32)
 \rput(40,36){$\xb^{a}_1$}
 \rput(56,36){$\yb^{a}_1$}

\psline{->}(35,12)(46.5,12)
\pscircle(49,12){2.5} \psline{-}(47.75,12)(50.25,12)
\psline{-}(49,10.75)(49,13.25) \psline{->}(49,19.5)(49,14.5) \rput(49.5,22.5){$\zb^{a}_2$}
\psline{->}(51.5,12)(63,12) 
\rput(40,16){$\xb^{a}_2$}
\rput(56,16){$\yb^{a}_2$}

\psline{->}(35,-18)(46.5,-18)
\pscircle(49,-18){2.5} \psline{-}(47.75,-18)(50.25,-18)
\psline{-}(49,-19.25)(49,-16.75) \psline{->}(49,-10.5)(49,-15.5) \rput(49.5,-6.5){$\zb^{a}_M$}
\psline{->}(51.5,-18)(63,-18) 
\rput(40.5,-14){$\xb^{a}_M$}
\rput(56.5,-14){$\yb^{a}_M$}
}

\psframe(126,-23)(140,37) \rput(133,7){\mbox{\normalsize{$\mathsf{Rx}$}}}
\psline{->}(140,7)(145,7) \rput(150,18){$\mathbf{\hat{w}}_1$}
\rput(150,12){$\mathbf{\hat{w}}_2$} \rput(150,6){$\vdots$} 
\rput(150,-3){$\mathbf{\hat{w}}_M$}

\end{pspicture}
\end{center}
\caption{A linear relay network where compute-and-forward is beneficial.} \label{f:hadamardrelay}
\end{figure}
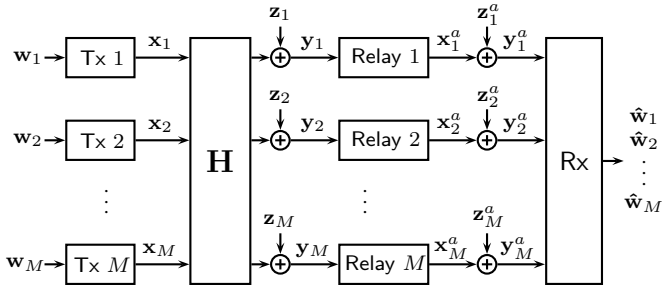

\begin{example} \label{e:hadamard}
Consider the AWGN network in Figure \ref{f:hadamardrelay}. Transmitters $1$ through $M$ send messages $\wb_1, \ldots, \wb_M$ through a channel $\Hb$ to M relays. Each relay has a point-to-point AWGN channel to the receiver which wants to recover all of the messages at the highest possible symmetric rate. Each channel input has power $P$ and all noise terms are i.i.d. circularly symmetric Gaussian with variance $1$. Let $\Hb$ be an $M \times M$ Hadamard matrix. (We assume that $M$ is chosen such that a Hadamard matrix of that size exists. ) Recall that a Hadamard matrix has $\pm 1$ entires such that $\Hb \Hb^T = M \mathbf{I}$.

Using Theorems \ref{t:optcompcomplex} and \ref{t:rankcomplex} and setting the coefficient vectors equal to the channel vectors, $\ab_m = \hb_m$, compute-and-forward can achieve 
\begin{align}
R_{_{\text{COMP}}}= \log^+\left(\frac{1}{M} + P\right)
\end{align} bits per channel use per user since $\Hb$ is full rank. It can be shown that decode-and-forward, amplify-and-forward, and compress-and-forward (with i.i.d. Gaussian codebooks) can achieve 
\begin{align}
R_{_{\text{DF}}}&= \frac{1}{M}\log\left(1 + MP\right)\\
R_{_{\text{AF}}} =R_{_{\text{CF}}}&= \log\left(1 + P\left(\frac{P}{MP + 1}\right)\right)
\end{align} bits per channel use per user. Compute-and-forward is the dominant strategy except at very low power and it rapidly approaches the upper bound $R_{_{\text{UPPER}}}= \log\left(1 + P\right)$ as $P \rightarrow \infty$. As $M$ increases the rates of decode-and-forward, amplify-and-forward, and compress-and-forward go to $0$.
\end{example} 

\section{Successive Cancellation} \label{s:cancel}

Once a relay has recovered an equation of messages, it can subtract its contribution from the channel observation. This results in a residual channel output from which it can extract a different equation, potentially with a higher rate than possible over the original channel. One key difference from standard applications of successive cancellation is that the relay cannot completely cancel out all channel inputs associated with the decoded equation. This is because in the first step, it only decodes an integer combination of the messages, which is often not the same as the linear combination taken by the channel.

We demonstrate an achievable region for decoding two different equations using successive cancellation at each relay. This can be easily generalized to more than two equations. For succinctness, we only state this result for real-valued channel models.

\begin{theorem} \label{t:twocomp}
Let $\hb_1, \ldots, \hb_M \in \Rbb^L$ denote the channel vectors and $R_\e$ denote the message rates. Each relay can first decode an equation with coefficient vector $\ab_m \in \Zbb^L$ and then one with coefficient vectors $\bb_m \in \Zbb^L$ if 
\begin{align*}
&R_\e < \min\left(\min_{m: a_{m \e} \neq 0}\mathcal{R}_1(\hb_m, \ab_m), \min_{m: b_{m \e} \neq 0}\mathcal{R}_2(\hb_m, \ab_m, \bb_m) \right) \\
&\mathcal{R}_1(\hb_m, \ab_m) =  \frac{1}{2}\log^+\left(\frac{P}{\alpha_m^2 + P \| \alpha_m \mathbf{h}_m - \mathbf{a}_m \|^2}\right)\\
&\mathcal{R}_2(\hb_m, \ab_m, \bb_m) \\
&=\begin{cases}
{\displaystyle \frac{1}{2}\log^+\left(\frac{P}{\beta_m^2 + P \sum_{\e \neq i}|\beta_m h_{m \e} - b_{m\e}|^2}\right)},~~\ab_m = \delta_i ,&~ \\
{\displaystyle \frac{1}{2}\log^+\left(\frac{P}{\beta_m^2 + P \| \beta_m \mathbf{h}_m - \tau_m\mathbf{a}_m - \bb_m \|^2}\right)},\mbox{\small{otherwise.}} &~
 \end{cases}
 \end{align*} for some choice of $\alpha_m, \beta_m \in \Rbb$ and $\tau_m \in \Zbb$. 
\end{theorem}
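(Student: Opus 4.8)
The plan is to layer a second decoding step on top of the proof of Theorem~\ref{t:latcompreal}. First, each relay $m$ decodes the lattice equation $\vb_m = \big[\sum_\ell a_{m\ell}\tb_\ell\big]\modl$ exactly as in that proof; conditioned on the event that all relays succeed (which has probability tending to one provided $R_\ell < \mathcal{R}_1(\hb_m,\ab_m) = \frac12\log^+\!\big(P/(\alpha_m^2 + P\|\alpha_m\hb_m - \ab_m\|^2)\big)$ for every $\ell$ with $a_{m\ell}\neq 0$), each relay then forms a second effective observation and decodes $\vb'_m = \big[\sum_\ell b_{m\ell}\tb_\ell\big]\modl$. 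At the end, Lemma~\ref{l:msgeqn} maps both lattice equations back to equations over $\Fbb_p$, and a union bound over the two decoding stages at all $M$ relays controls the overall probability of error; the resulting rate condition is precisely the stated one.

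For the generic case, when $\ab_m$ is not a unit vector, the relay cannot cancel any transmitted codeword outright — it can only subtract an integer multiple $\tau_m\in\Zbb$ of the already-decoded lattice equation. Fix $\beta_m\in\Rbb$, set $\theta'_{m\ell} = \beta_m h_{m\ell} - \tau_m a_{m\ell} - b_{m\ell}$, and have the relay compute $\big[\,\big[\beta_m\yb_m + \sum_\ell(b_{m\ell}+\tau_m a_{m\ell})\db_\ell\big]\modl - \tau_m\vbh_m\,\big]\modl$. Writing $\xb_\ell = [\tb_\ell-\db_\ell]\modl$ and repeatedly applying the modulo identities (\ref{e:modprop1}) and (\ref{e:modprop3}), together with $\big[\tau_m\sum_\ell a_{m\ell}\tb_\ell\big]\modl = [\tau_m\vb_m]\modl$ and the conditioning $\vbh_m = \vb_m$, this quantity equals $\big[\vb'_m + \zb'_{eq,m}\big]\modl$ with effective noise $\zb'_{eq,m} = \sum_\ell\theta'_{m\ell}\xb_\ell + \beta_m\zb_m$. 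The relay quantizes onto the fine lattice $\Lambda_{\ell'_{\text{MAX}}(m)}$, where $\ell'_{\text{MAX}}(m) = \max\{\ell : b_{m\ell}\neq 0\}$, and reduces mod $\Lambda$. As in Theorem~\ref{t:latcompreal}, Lemma~\ref{l:dither} makes the $\xb_\ell$ uniform over $\Vm$ and independent of the lattice points (hence of $\vb'_m$), so Lemma~\ref{l:gaussiannoise} bounds the density of $\zb'_{eq,m}$ by that of a Gaussian whose per-dimension variance approaches $\beta_m^2 + P\sum_\ell(\theta'_{m\ell})^2 = \beta_m^2 + P\|\beta_m\hb_m - \tau_m\ab_m - \bb_m\|^2$; AWGN-goodness of the fine lattices (Appendix~\ref{s:awgngood}) and quantization-goodness of $\Lambda$ then give a vanishing error probability whenever $R_\ell < \mathcal{R}_2(\hb_m,\ab_m,\bb_m)$ for every $\ell$ with $b_{m\ell}\neq 0$, with the same volume-of-$\Vm$ argument as in Theorem~\ref{t:latcompreal}.

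The case $\ab_m = \delta_i$ is simpler and strictly better. Here $\vb_m = \tb_i$, so after the first stage the relay knows $\tb_i$ and therefore $\xb_i = [\tb_i - \db_i]\modl$ exactly; it subtracts $h_{mi}\xb_i$ from $\yb_m$ over the reals, an \emph{exact} cancellation with no residual self-noise from transmitter $i$. It then applies the argument of Theorem~\ref{t:latcompreal} verbatim to the reduced network of the remaining $L-1$ transmitters, with scaling $\beta_m$ and target coefficients $(b_{m\ell})_{\ell\neq i}$, which yields the rate $\frac12\log^+\!\big(P/(\beta_m^2 + P\sum_{\ell\neq i}|\beta_m h_{m\ell} - b_{m\ell}|^2)\big)$.

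I expect the main obstacle to be the bookkeeping in the generic case: one must verify that, after conditioning on correct first-stage decoding and subtracting the integer multiple $\tau_m\vbh_m$ of the decoded equation, the residual genuinely collapses to ``$\vb'_m$ plus effective noise'' with the effective noise independent of $\vb'_m$, even though $\vb_m$, $\vb'_m$, and the codewords $\xb_\ell$ are all correlated through the same lattice points and dithers. This requires inserting the dither-swap of Lemma~\ref{l:dither} at exactly the right point in the chain of modulo manipulations, just as in the proof of Theorem~\ref{t:latcompreal}. A secondary point — choosing a single coarse lattice $\Lambda$ whose volume-to-noise ratio simultaneously satisfies the constraints of both decoding stages at every relay — is routine, handled by taking $\mathsf{Vol}(\Vm_\ell)$ large enough to dominate the worst effective noise over both stages.
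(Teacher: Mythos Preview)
Your proposal is correct and follows essentially the same approach as the paper: decode the first equation as in Theorem~\ref{t:latcompreal}, then in the unit-vector case subtract $h_{mi}\xb_i$ over the reals and rerun the basic scheme on the reduced channel, while in the generic case subtract an integer multiple $\tau_m$ of the decoded lattice equation inside $\modl$ and rerun the decoding with effective coefficients $\beta_m h_{m\ell}-\tau_m a_{m\ell}$. The only cosmetic difference is that the paper first converts $\vb_m$ back to the dithered-codeword sum $\bar{\vb}_m = \big[\sum_\ell a_{m\ell}\xb_\ell\big]\modl$ and subtracts that, whereas you subtract $\tau_m\vbh_m$ directly and compensate by adding $\tau_m\sum_\ell a_{m\ell}\db_\ell$ in the dither-removal step; these are equivalent via (\ref{e:modprop1}) and (\ref{e:modprop3}).
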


\begin{proof}
All messages are mapped onto lattice points, dithered, and transmitted across the channel as in the proof of Theorem \ref{t:basiccompcomplex}. The first set of equations can be reliably decoded using the procedure from Theorem \ref{t:basiccompcomplex} as well. Now, we condition on the event that each relay has successfully recovered the equation with coefficient vectors $\ab_m$.

Consider the case where the first coefficient vector at relay $m$ is a unit vector $\ab_m = \delta_i$. This means that relay $m$ can successfully decode the message $\wb_i$ from encoder $i$. It can then replicate the encoding process to get $\xb_i$. Now, the relay removes $\xb_i$ from $\yb_m$,
\begin{align}
\yb_m - h_{mi} \xb_i = \sum_{\e \neq i} {h_{m\e} \xb_\e + \zb_m} \ ,
\end{align} and uses this as a channel output for Theorem \ref{t:basiccompreal} to get the equation with coefficient vector $\mathbf{\tilde{b}}_m$ which is equal to $\bb_m$ except that it has $0$ in the $\ith$ position. It then adds $b_{mi} \wb_i$ to the recovered equation to get $\bb_m$. 

If $\ab_m$ is not a unit vector, the decoder has access to the lattice equation
\begin{align}
\vb_m = \left[\sum_{\ell = 1}^L{a_{m\e} \tb_\e} \right]\modl 
\end{align} from which it computes
\begin{align}
\mathbf{\bar{v}}_m &= \left[\vb_m - \sum_{\ell = 1}^L{a_{m\e}\db_\e } \right]\modl \nonumber = \left[\sum_{\ell = 1}^L{a_{m\e}\xb_\e} \right]\modl \\
\mathbf{\tilde{y}}_m &= \big[\beta_m\yb_m -\tau_m \mathbf{\bar{v}}_m\big] \modl \nonumber \\
 &= \left[\sum_{\ell = 1}^L{(\beta_m{h}_{m\e} - \tau_m a_{m\e})\xb_\e}+\zb_m \right]\modl \nonumber  \ .
\end{align} Now we can follow the steps in the proof of Theorem \ref{t:latcompreal}. In (\ref{e:decodedither}), replace $\alpha_m \yb_m$ with $\mathbf{\tilde{y}}_m$. In all steps of the proof, substitute $a_{m\e}$ with $b_{m\e}$, $\alpha_m h_{m\e}$ with $\beta_m h_{m\e} - \tau_m a_{m\e}$, and, if it has not already been replaced, $\alpha_m$ with $\beta_m$. 
\end{proof}

\begin{remark} Given $\ab_m$, $\bb_m$, and $\tau_m$, we can solve for the optimal $\alpha_m$ and $\beta_m$ following the steps of the proof of Theorem \ref{t:optcompreal}.
\end{remark}
\begin{remark} The restriction of $\tau_m$ to the integers stems from the fact that (\ref{e:modprop3}) only holds for integer coefficients.
\end{remark}

\begin{example} There are $L= 4$ transmitters and $M=1$ relay and the channel vector is $\hb_1 = [10~10~8~8]^T$. The relay wants to first decode the equation with coefficient vector $\ab_1 = [1~1~1~1]^T$ and then with coefficient vector $\bb_1=[1~~1~-1~-1]^T$. Using Theorem \ref{t:twocomp}, this is possible if the message rates satisfy
\begin{align}
R_\e < \min{\left(\frac{1}{2}\log^+\left(\frac{1}{4} + \frac{81 P}{1 +4 P}\right), \frac{1}{2}\log^+\left(\frac{1}{328} + P\right)\right)} \nonumber 
\end{align} by using $\tau_1 = 9$ so that $\hb_1 - \tau_1  \ab_1 = \bb_1$. Note that if we applied Theorem \ref{t:basiccompreal} directly to decode $\bb_1$, we would not be able to get a positive rate. 
\end{example}

\begin{remark}
As noted in Remark \ref{r:piecewise}, it may be more efficient to recover an equation piecewise by recovering equations of subsets of messages and taking an appropriate linear combination of these equations. Theorem \ref{t:twocomp} is strictly better for this process than Theorem \ref{t:basiccompreal}.
 \end{remark}

\subsection{Multiple-Access}
Assume there is only one relay and that it wants to recover all transmitted messages. This is the standard Gaussian multiple-access problem whose capacity region is well-known to be the set of all rate tuples $(R_1, \ldots, R_L)$ satisfying
\begin{align}
\sum_{\e \in S}{R_\e} ~~<~~ \frac{1}{2}\log{\left(1 + P \sum_{\e \in S}{|h_{1\e}|^2}\right)}
\end{align} for all subsets $S \subseteq \{1,2,\ldots, L\}$ \cite[Theorem 14.3.5]{coverthomas}. We now show that compute-and-forward includes the multiple-access capacity region as a special case. First, we consider the corner point of the capacity region associated with decoding the messages in ascending order. From Example \ref{ex:onemsg}, the first message can be decoded (while treating the others as noise) if
\begin{align}
R_1 < \frac{1}{2}\log{\left(1 + \frac{|h_{11}|^2P}{1 + P \sum_{i = 2}^L{|h_{1i}|^2}}\right)} \ .
\end{align} Using successive cancellation, the relay removes $\xb_1$ from the channel observation to get $\sum_{\e=2}^L{h_{1\e}\xb_\e} +\zb_1$. It then repeats the above procedure for each message in ascending order to get
\begin{align}
R_\e < \frac{1}{2}\log{\left(1 + \frac{|h_{1\e}|^2P}{1 + P \sum_{i=\ell+1}^L{|h_{1i}|^2}} \right)} \ . 
\end{align} The resulting rate tuple is a corner point of the multiple-access capacity region. By changing the decoding order, any corner point is achievable. Note that any point on the boundary of the capacity region is achievable by time-sharing corner points. 

\begin{remark}
One interesting open problem is to develop \textit{joint decoding} for the compute-and-forward framework. Of course, within the context of multiple-access, this is possible with nested lattice codewords as they have good statistical properties. Extending joint decoding to recovering equations of messages may enlarge the computation rate region. 
\end{remark}

\section{Superposition}\label{s:superposition}

In the previous section, we considered the scenario where each relay decodes several equations, but the transmitters each use a single codebook (as in Theorem \ref{t:basiccompreal}). However, when decoding multiple equations, it is sometimes useful to superimpose multiple codebooks. We investigate this possibility in this section for real-valued channels. As before, the complex case follows naturally.

 We will assume that there are two levels $A$ and $B$ and that each relay wants to a recover an equation from both levels. (If it is not interested in a level, it can just set its desired coefficients to zero.)  

Each encoder has two messages $\wb_{\e A}$ and $\wb_{\e B} $ with rates $R_{\e A}$ and $R_{\e B}$ respectively. Relay $m$ wants to decode equations $\ub_{mA}$ and $\ub_{mB}$ with coefficient vectors $\ab_m$ and $\bb_m$, respectively, for $m = 1,2,\ldots, M$. In the theorem below, we give achievable rates for this scenario by combining superposition and successive cancellation. The basic idea is to superimpose two lattice codes at each receiver scaled by $\gamma_{\e A}$ and $\gamma_{\e B}$ to ensure that the power constraint is met.

\begin{theorem} \label{t:twolevels}
Choose $\gamma_{\e A}, \gamma_{\e B}$ such that $ \gamma_{\e A}^2 + \gamma_{\e B}^2 = 1$. For channel vectors $\hb_1, \ldots, \hb_M \in \Rbb^L$, the relays can first decode any set of linear equations over $\wb_{\e A}$ with coefficient vectors $\ab_1,  \ldots, \ab_M \in \Zbb^L$ and then any set of linear equations over $\wb_{\e B}$ with coefficient vectors $\bb_1, \ldots, \bb_M \in \Zbb^L$ if \begin{align*}
&R_{\e A} < \min_{m: a_{m \e} \neq 0}\frac{1}{2}\log^+\left(\frac{P}{N_{mA}}\right)\\
&R_{\e B} < \min_{m: b_{m \e} \neq 0}\frac{1}{2}\log^+\left(\frac{P}{N_{mB}}\right) 
\end{align*} where
\begin{align*}
&\hb_{m A}= [\gamma_{1A} h_{m1} ~\cdots ~\gamma_{LA}h_{mL}]^T\\
&\hb_{m B}= [\gamma_{1B} h_{m1} ~\cdots ~\gamma_{LB}h_{mL}]^T\\
&N_{mA} = |\alpha_m|^2(1 + P \| \hb_{mB}\|^2) + P \| \alpha_m \mathbf{h}_{m A}- \mathbf{a}_m \|^2 \\
&N_{mB1} = \\
&~~|\beta_m|^2(1 + P \sum_{\e \neq i}{| \gamma_{\e A} h_{m\e}|^2})+ P \| \beta_m \mathbf{h}_{m B}- \mathbf{b}_m \|^2 \\
&N_{mB2} =\\
&~~ |\beta_m|^2 + P \| \beta_m \hb_{mA} - \tau_m \ab_m\|^2 + P \| \beta_m \mathbf{h}_{m B}- \mathbf{b}_m \|^2 \\
&N_{mB} = \begin{cases}
 N_{mB1}
, & \ab_m = \delta_i \mbox{ for some $i$}, \\
N_{mB2}, & \mbox{otherwise.}
 \end{cases}
 \end{align*} for some choice of $\alpha_m, \beta_m \in \Rbb$ and $\tau_m \in \Zbb$. 
\end{theorem}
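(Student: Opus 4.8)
The plan is to build a two-level nested lattice ensemble and reduce the statement to two nested applications of Theorem~\ref{t:latcompreal}. Concretely, I would extend the construction of Section~\ref{s:latticeconstruct} to produce a chain of $2L$ fine lattices, all good for AWGN and all containing a common coarse lattice $\L$ of second moment $P$, whose rates realize the list $R_{1A},\ldots,R_{LA},R_{1B},\ldots,R_{LB}$ (sorted in decreasing order); this is the straightforward $2L$-transmitter version of the construction already in the paper. Transmitter $\e$ then maps $\wb_{\e A}$ and $\wb_{\e B}$ to codewords $\tb_{\e A},\tb_{\e B}$ via $\phi$ (Lemma~\ref{l:fieldtolattice}), dithers each independently over $\Vm$ to get $\xb_{\e A}=[\tb_{\e A}-\db_{\e A}]\modl$ and $\xb_{\e B}=[\tb_{\e B}-\db_{\e B}]\modl$, and transmits $\xb_\e=\gamma_{\e A}\xb_{\e A}+\gamma_{\e B}\xb_{\e B}$. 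Since $\xb_{\e A}$ and $\xb_{\e B}$ are independent, zero-mean, and each of per-dimension second moment $P$ (Lemma~\ref{l:dither}), the constraint $\gamma_{\e A}^2+\gamma_{\e B}^2=1$ gives $E[\|\xb_\e\|^2]=nP$, and one fixes the dithers as in Appendix~\ref{s:fixeddithers}. At relay $m$ the observation is $\yb_m=\sum_\e(\gamma_{\e A}h_{m\e})\xb_{\e A}+\sum_\e(\gamma_{\e B}h_{m\e})\xb_{\e B}+\zb_m$, so the level-$A$ codewords experience the channel vector $\hb_{mA}$ and the level-$B$ codewords experience $\hb_{mB}$.

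For the first stage (decoding $\ub_{mA}$), relay $m$ scales by $\alpha_m$, adds back $\sum_\e a_{m\e}\db_{\e A}$, quantizes onto the finest level-$A$ lattice appearing in $\ab_m$, and reduces $\modl$. Writing $\alpha_m(\hb_{mA})_\e=a_{m\e}+\theta^A_{m\e}$ and repeating the manipulations (\ref{e:latmanip1real})--(\ref{e:latmanip5real}) shows the result equals $[\vb_{mA}+\zb_{eq,m}]\modl$ with $\vb_{mA}=[\sum_\e a_{m\e}\tb_{\e A}]\modl$; after invoking Lemma~\ref{l:dither} the equivalent noise becomes $\alpha_m\zb_m+\sum_\e\theta^A_{m\e}\tilde{\db}_{\e A}+\sum_\e\alpha_m\gamma_{\e B}h_{m\e}\tilde{\db}_{\e B}$, i.e. a Gaussian plus independent scaled uniform-over-$\Vm$ vectors, whose per-dimension variance approaches exactly $N_{mA}=|\alpha_m|^2(1+P\|\hb_{mB}\|^2)+P\|\alpha_m\hb_{mA}-\ab_m\|^2$ (the $|\alpha_m|^2P\|\hb_{mB}\|^2$ term being the level-$B$ interference). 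Lemma~\ref{l:gaussiannoise} together with AWGN-goodness (Definition~\ref{d:awgngood}) then yields the $R_{\e A}$ bound exactly as in Theorem~\ref{t:latcompreal}.

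For the second stage (decoding $\ub_{mB}$), I would condition on all relays having recovered their level-$A$ equations and treat this exactly like the successive-cancellation step of Theorem~\ref{t:twocomp}. If $\ab_m=\delta_i$, relay $m$ reconstructs $\xb_{iA}$ from $\wb_{iA}$, subtracts $\gamma_{iA}h_{mi}\xb_{iA}$ from $\yb_m$, and reruns the first-stage argument on the level-$B$ lattices with channel vector $\hb_{mB}$ and base noise $1+P\sum_{\e\ne i}|\gamma_{\e A}h_{m\e}|^2$ (residual level-$A$ interference plus channel noise), giving $N_{mB1}$. If $\ab_m$ is not a unit vector, the relay instead has only the lattice equation $\vb_{mA}$; it forms $\bar{\vb}_{mA}=[\vb_{mA}-\sum_\e a_{m\e}\db_{\e A}]\modl=[\sum_\e a_{m\e}\xb_{\e A}]\modl$ and then $\tilde{\yb}_m=[\beta_m\yb_m-\tau_m\bar{\vb}_{mA}]\modl$, whose pre-modulo content is $\sum_\e(\beta_m\gamma_{\e A}h_{m\e}-\tau_m a_{m\e})\xb_{\e A}+\sum_\e\beta_m\gamma_{\e B}h_{m\e}\xb_{\e B}+\beta_m\zb_m$; here $\tau_m\in\Zbb$ is required so that (\ref{e:modprop3}) may be used. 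Decoding the level-$B$ equation $\bb_m$ off of this, treating the residual level-$A$ term (coefficient vector $\beta_m\hb_{mA}-\tau_m\ab_m$) as self-noise, and running the proof of Theorem~\ref{t:latcompreal} with the substitutions of Theorem~\ref{t:twocomp} produces equivalent-noise variance approaching $N_{mB2}=|\beta_m|^2+P\|\beta_m\hb_{mA}-\tau_m\ab_m\|^2+P\|\beta_m\hb_{mB}-\bb_m\|^2$. Finally, one maps all recovered lattice equations back to the finite field via $\phi^{-1}$ (Lemma~\ref{l:msgeqn}) and union-bounds the first- and second-stage error events over the $M$ relays.

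The main obstacle is bookkeeping rather than a new idea: one must verify that in both stages the equivalent noise is exactly a Gaussian plus independent scaled copies of uniform-over-$\Vm$ vectors (so that Lemma~\ref{l:gaussiannoise} and AWGN-goodness apply verbatim), and one must track carefully that the successive-cancellation subtraction in the non-unit-vector case uses the \emph{unscaled} lattice equation $\bar{\vb}_{mA}$ --- because the per-transmitter scalars $\gamma_{\e A}$ live only in the channel and not inside $\vb_{mA}$ --- which is precisely what makes the residual level-$A$ self-noise come out with the controllable coefficient vector $\beta_m\hb_{mA}-\tau_m\ab_m$. A minor additional point is checking that $\gamma_{\e A}^2+\gamma_{\e B}^2=1$ is exactly the condition that lets both superimposed codewords retain full power $P$ while the sum still satisfies (\ref{e:powerconstraint}).
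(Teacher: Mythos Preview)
Your proposal is correct and matches the paper's approach in substance. The only difference is packaging: the paper sets up the same superposition encoding $\xb_\e=\gamma_{\e A}\xb_{\e A}+\gamma_{\e B}\xb_{\e B}$, then observes that the received signal is that of a $2L$-transmitter system with stacked channel vector $\tilde{\hb}_m=[\hb_{mA}^T\ \hb_{mB}^T]^T$ and stacked coefficient vectors $\tilde{\ab}_m=[\ab_m^T\ \mathbf{0}^T]^T$, $\tilde{\bb}_m=[\mathbf{0}^T\ \bb_m^T]^T$, and simply invokes Theorem~\ref{t:twocomp} on this $2L$-user problem, whereas you unpack what that invocation amounts to stage by stage. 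One cosmetic difference: the paper uses two separate nested chains $\L\subset\L_{LA}\subset\cdots\subset\L_{1A}$ and $\L\subset\L_{LB}\subset\cdots\subset\L_{1B}$ (with maps $\phi_A,\phi_B$) rather than your single $2L$-lattice chain, but either works here since each decoded equation involves only one level.
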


\begin{proof}
Choose two sets of nested lattices $\Lambda \subset \Lambda_{LA} \subset \cdots \subset \Lambda_{1A}$, $\Lambda \subset \L_{LB} \subset \cdots \subset \L_{1B}$ with appropriate rates where $\L$ is the coarse lattice with second moment $P$. Each encoder maps its messages onto lattice points using $\phi$ from Lemma \ref{l:fieldtolattice} and dithers them with $\db_{\e A}, \db_{\e B}$ drawn independently and uniformly over the fundamental Voronoi region $\Vm$ of $\L$,
\begin{align*}
&\tb_{\e A} = \phi_A(\wb_{\e A})~~~~~~~~~~~~~~~~~~\tb_{\e B} = \phi_B(\wb_{\e B}) \\
&\xb_{\e A}  = [\tb_{\e A} - \db_{\e A}]\modl ~~~~~ \xb_{\e B}  = [\tb_{\e B} - \db_{\e B}]\modl 
\end{align*}It then combines $\xb_{\e A}$ and $\xb_{\e B}$ according to $\gamma_{\e A}$ and $\gamma_{\e B}$ which guarantees the power constraint is met:
\begin{align}
\xb_\e &= \gamma_{\e A} \xb_{\e A} + \gamma_{\e B} \xb_{\e B}\\
 E[ \| \xb_\e \|^2 ] &=  \gamma_{\e A} ^2 nP + \gamma_{\e B}^2 nP = nP
 \end{align}
 At each receiver, we can just treat the channel output as if it came from $2L$ transmitters labelled $1A, \ldots, LA, 1B, \ldots, LB$. We can write the channel to receiver $m$ and the desired coefficient vectors as
 \begin{align}
\mathbf{\tilde{h}}_m =  \left[
\begin{array}{c}
\hb_{mA}\\ 
 \hb_{mB}
\end{array}
\right]
 ~~~~~~ \mathbf{\tilde{a}}_m =  \left[
\begin{array}{c}
\ab_m\\ 
 \mathbf{0}
\end{array}
\right]
 ~~~~~~\mathbf{\tilde{b}}_m =  \left[
\begin{array}{c}
\mathbf{0}\\ 
 \bb_m
\end{array}
\right]. \nonumber
 \end{align} We can now directly apply Theorem \ref{t:twocomp} with $\mathbf{\tilde{h}}_m$, $\mathbf{\tilde{a}}_m$, and $\mathbf{\tilde{b}}_m$ to get the desired result.
\end{proof}
\begin{remark}
As before, given $\ab_m$, $\bb_m$,  $\tau_m$, and $\gamma_{\ell A}$ we can solve for the optimal $\alpha_m$ and $\beta_m$ following the steps of the proof of Theorem \ref{t:optcompreal}.
\end{remark}

\begin{remark}
In order to keep the notation manageable, we have chosen to present the superposition strategy in Theorem \ref{t:twolevels} only for two levels. There are several immediate extensions, including:
\begin{itemize}
\item More than two levels.
\item Allowing a different decoding order at each relay.
\item Equations spanning different levels.
\end{itemize}
\end{remark}

\begin{example} There are $L= 3$ transmitters and $M=1$ relay and the channel vector is $\hb_1 = [1 ~ 1 ~ \sqrt{2}]^T$. Set the scaling coefficients to be $\gamma_{1A}=\gamma_{2A} = 0$, $\gamma_{1B}=\gamma_{2B} = 1$, and $\gamma_{3A}=\gamma_{3B} = 1/\sqrt{2}$. The relay wants to first decode the equation with coefficient vector $\ab_1 = [0~0~1]^T$ from level $A$ and then the equation with coefficient vector $\bb_1=[1~1~1]^T$ from level $B$. Using Theorem \ref{t:twolevels}, this is possible if the message rates satisfy
\begin{align}
R_{3A} &< \frac{1}{2}\log\left(1 + \frac{P}{1 +3 P}\right)\\
R_{\e B}&< \frac{1}{2}\log^+\left(\frac{1}{3} + P\right)~~~~~\e = 1,2,3.
\end{align} 
\end{example}

\begin{remark} It can be shown that nested lattice codes can approach the capacity region of the standard Gaussian broadcast problem. See \cite{zse02} for more details.
\end{remark}

\begin{remark}
For an application of this superposition scheme to a backhaul-limited cellular uplink network, see \cite{nsgs09}.
\end{remark}

\section{Outage Formulation} \label{s:outage}

So far, we have considered fixed channel coefficients. Now, we demonstrate that our scheme can be applied to the slow fading scenario. This further emphasizes the fact that our compute-and-forward scheme does {\em not} require channel state information at the transmitters. Under a slow fading model, the channel matrix $\Hb$ is chosen according to some probability distribution and then remains fixed for all time. As a result, we must accept some probability that the rate used by the transmitters is above the maximum rate permitted for those channel coefficients. For an achievable strategy with rate $R_{\text{SCHEME}}(\Hb)$ for fixed $\Hb$, this \textit{outage probability} is given by
\begin{align}
\rho_{\text{OUT}}(R) &= \Pr\left(R_{\text{SCHEME}}(\Hb) < R\right) \ . 
\end{align}
We can also characterize the performance of a given strategy by its \textit{outage rate},
\begin{align}
R_{\text{OUT}}(\rho) &= \sup\{R:\rho_{\text{OUT}}(R) \leq \rho\}.
\end{align}

\begin{figure}[h]
\centering
\includegraphics[width=3.75in]{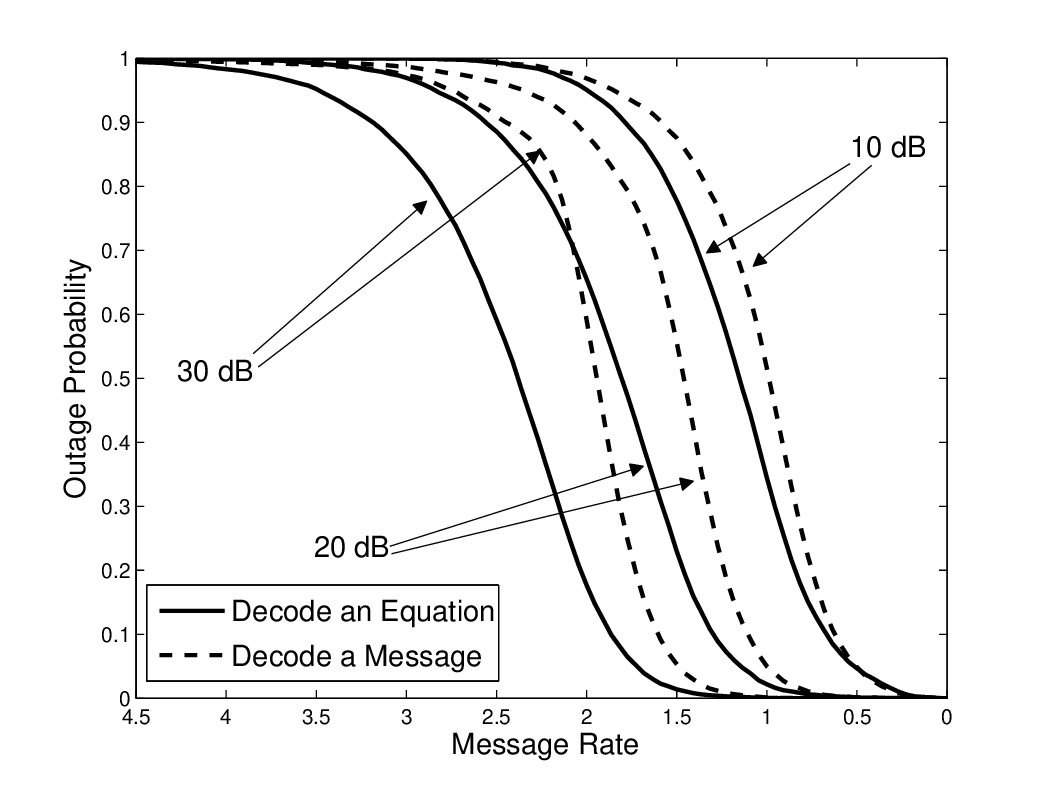}
\caption{Outage probability for a relay that receives $\yb = h_1 \xb_1 + h_2 \xb_2 + h_3 \xb_3 + \zb$ where the $h_\ell$ are i.i.d. according to $\mathcal{N}(0,1)$. The ``Decode a Message'' strategy uses standard random codes and joint typicality decoding to recover at least one of the messages $\wb_1, \wb_2,$ or $\wb_3$. The ``Decode an Equation'' strategy uses compute-and-forward to recover some linear equation $a_1 \wb_1 \oplus a_2 \wb_2 \oplus a_3 \wb_3$. }\label{f:outageprobs}
\end{figure}

\begin{example}
There are three transmitters that communicate to a single relay over a real-valued AWGN multiple-access channel. The channel coefficients $h_\e$ are i.i.d. according to $\mathcal{N}(0,1)$ and are only known to the relay. Each transmitter has a single message $\wb_\e$ of rate $R$. Usually, the relay would only have the choice of decoding one message, two messages, or all three messages with the rates given by the multiple-access rate region.\footnote{Those messages that are not decoded are treated as noise.} The resulting outage probabilities for this strategy are plotted in Figure \ref{f:outageprobs} for $P = 10, 20,$ and $30$dB. We also plot the performance of the compute-and-forward strategy from Theorem \ref{t:basiccompreal}, which permits the relay to decode any linear equation of the messages, $\ub = \bigoplus_\ell a_\ell \wb_\ell$ so long as at least one of the coefficients is not equal to zero. 
\end{example}

The example above demonstrates that decoding an equation is often easier than decoding a message. In order to use compute-and-forward for network communication, we also need that the end-to-end linear transformation of the desired messages is full rank. The next section explores this issue through a case study.

\section{Case Study: Distributed MIMO} \label{s:distmimo}

We will now compare the outage performance of compute-and-forward to the performance of classical relaying strategies over a simple network. Consider the two user distributed MIMO network in Figure \ref{f:distmimo}. There are two sources, two relays, and one destination. The relays see the transmitters through $\Hb$ whose entries are i.i.d. Rayleigh, $h_{m\ell} \sim \mathcal{CN}(0,1)$. We assume that relay $m$ only knows the channel vector $\hb_m$ to itself. Each relay is given a bit pipe with rate $R_0$ bits per channel use to the destination. The destination would like to recover both message $\wb_1$ and $\wb_2$ at the highest possible symmetric outage rate. Recall that for a symmetric rate point to be achievable, both transmitters must be able to communicate their messages with at least that rate.

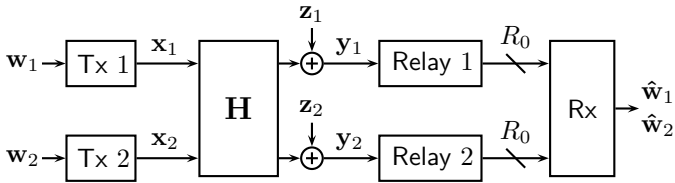
\begin{figure}[h]
\begin{center}
\psset{unit=0.63mm}
\begin{pspicture}(0,8)(135,44)

\rput(1,32){$\wb_1$}
\rput(1,12){$\wb_2$}
\psline{->}(5,32)(10,32)
\psline{->}(5,12)(10,12)

\rput(2,0){
 \psframe(8,27)(23,37)
\rput(16,32){$\mathsf{Tx}~1$} \rput(29,35.5){$\xb_1$}
\psline{->}(23,32)(36,32)

 \psframe(8,7)(23,17)
\rput(16,12){$\mathsf{Tx}~2$} \rput(29,15.5){$\xb_2$}
\psline{->}(23,12)(36,12)
}

\psframe(38,8)(55,37)
\rput(46.5,22.5){\large{$\mathbf{H}$}}

\rput(13,0){
\psline{->}(42,32)(46.5,32)
\pscircle(49,32){2.5} \psline{-}(47.75,32)(50.25,32)
\psline{-}(49,30.75)(49,33.25) \psline{->}(49,39.5)(49,34.5) \rput(49,42.5){$\zb_1$}
\psline{->}(51.5,32)(63,32) \rput(57,35.5){$\yb_1$}

\psline{->}(42,12)(46.5,12)
\pscircle(49,12){2.5} \psline{-}(47.75,12)(50.25,12)
\psline{-}(49,10.75)(49,13.25) \psline{->}(49,19.5)(49,14.5) \rput(49,22.5){$\zb_2$}
\psline{->}(51.5,12)(63,12)  \rput(57,15.5){$\yb_2$}

\psframe(63,27)(85,37) \rput(74.5,32){$\mathsf{Relay}~1$}

\psframe(63,7)(85,17) \rput(74.5,12){$\mathsf{Relay}~2$}

}

\rput(63,0){
\rput(42,37.5){$R_0$}
\psline{-}(40,34)(44,30)
\psline{->}(35,32)(49,32)

\rput(42,17.5){$R_0$}
\psline{-}(40,14)(44,10)
\psline{->}(35,12)(49,12)

}

\psframe(112,8)(126,37) \rput(119,22.5){\mbox{\normalsize{$\mathsf{Rx}$}}}
\psline{->}(126,22.5)(131,22.5) \rput(135,26){$\mathbf{\hat{w}}_1$}
\rput(135,19){$\mathbf{\hat{w}}_2$}

\end{pspicture}
\end{center}
\caption{Two transmitters communicate to a distributed MIMO receiver with two antennas. Each antenna has a rate $R_0$ bit pipe to the receiver.} \label{f:distmimo}
\end{figure}

The basic compute-and-forward strategy has each relay decode the equation with the highest rate and pass that to the destination. If the equations received by the destination are full rank, decoding is successful. However, at low SNR, the probability that the equations are not full rank is quite high as shown in Figure \ref{f:distmimoranks}. One simple solution is to force each relay to choose an equation with $a_{mm} \neq 0$. This results in equations that are far more likely to be solvable at the expense of slightly lower computation rates.\footnote{More work is needed to develop distributed coefficient selection strategies that operate on the optimal tradeoff between computation rate and matrix rank.} The achievable rates for these two strategies are given below and are plotted in Figure \ref{f:distmimorates} for $R_0 = 2$ and outage probability $\rho = 1/4$. 

\begin{align*}
&R_{\text{MAX},m} = \max_{\ab_m}~ \log^+{\left(\left(\| \mathbf{a}_m \|^2 - \frac{P ~ | \mathbf{h}_m^* \mathbf{a}_m |^2}{1 + P\|\mathbf{h}_m\|^2}\right)\n\right)}\\
&R_{\text{NZ},m} = \max_{\substack{\ab_m \\ a_{mm} \neq 0}} \ \log^+{\left(\left(\| \mathbf{a}_m \|^2 - \frac{P ~ | \mathbf{h}_m^* \mathbf{a}_m |^2}{1 + P\|\mathbf{h}_m\|^2}\right)\n\right)}\\
&R_{\text{COMP}}(\Hb) = \begin{cases}
 \min\Big( {\displaystyle \big(\min_m R_{\text{MAX},m}}\big), R_0 \Big)
 &\mathsf{rank}(\Ab) = 2, \\
0 & \mbox{otherwise.}
 \end{cases}\\
&R_{\text{CNZ}}(\Hb) = \begin{cases}
 \min\Big( {\displaystyle \big(\min_m R_{\text{NZ},m}}\big), R_0 \Big)
 &\mathsf{rank}(\Ab) = 2, \\
0 & \mbox{otherwise.}
 \end{cases}
\end{align*}

For decode-and-forward, we require that each relay is responsible for a single message. It attempts to recover this message either by treating the other message as noise or decoding both messages. The rate for this strategy is evaluated below and plotted in Figure \ref{f:distmimorates}. For more details on decode-and-forward  for multiple relays (as well as compress-and-forward and cut-set upper bounds), see \cite{kgg05}.

\begin{align}
R_{\text{ignore},1} &= \log\left(1 + \frac{|h_{11}|^2P}{1 + |h_{12}|^2 P}\right)\\
R_{\text{ignore},2} &= \log\left(1 + \frac{|h_{22}|^2P}{1 + |h_{21}|^2 P}\right)\\
R_{\text{decode},m} &= \min\bigg(\log\left(1 + |h_{m1}|^2P\right), \nonumber \\ &~~~~~~~~~~~ \log\left(1 + |h_{m2}|^2P\right),\nonumber \\ & ~~~~~~~~~~~ \onehalf \log\left(1 + \| \hb_m \|^2 P\right) \bigg) \\
R_{\text{ii}} &= \min(R_{\text{ignore},1}, R_{\text{ignore},2})\\
R_{\text{id}} &= \min(R_{\text{ignore},1}, R_{\text{decode},2})\\
R_{\text{di}} &= \min(R_{\text{decode},1}, R_{\text{ignore},2})\\
R_{\text{dd}} &= \min(R_{\text{decode},1}, R_{\text{decode},2})\\
R_{\text{DF}}(\Hb) &= \min\big(\max(R_{\text{ii}},R_{\text{id}},R_{\text{di}},R_{\text{dd}}),R_0\big)
\end{align}
\begin{figure}[h]
\centering
\includegraphics[width=3.75in]{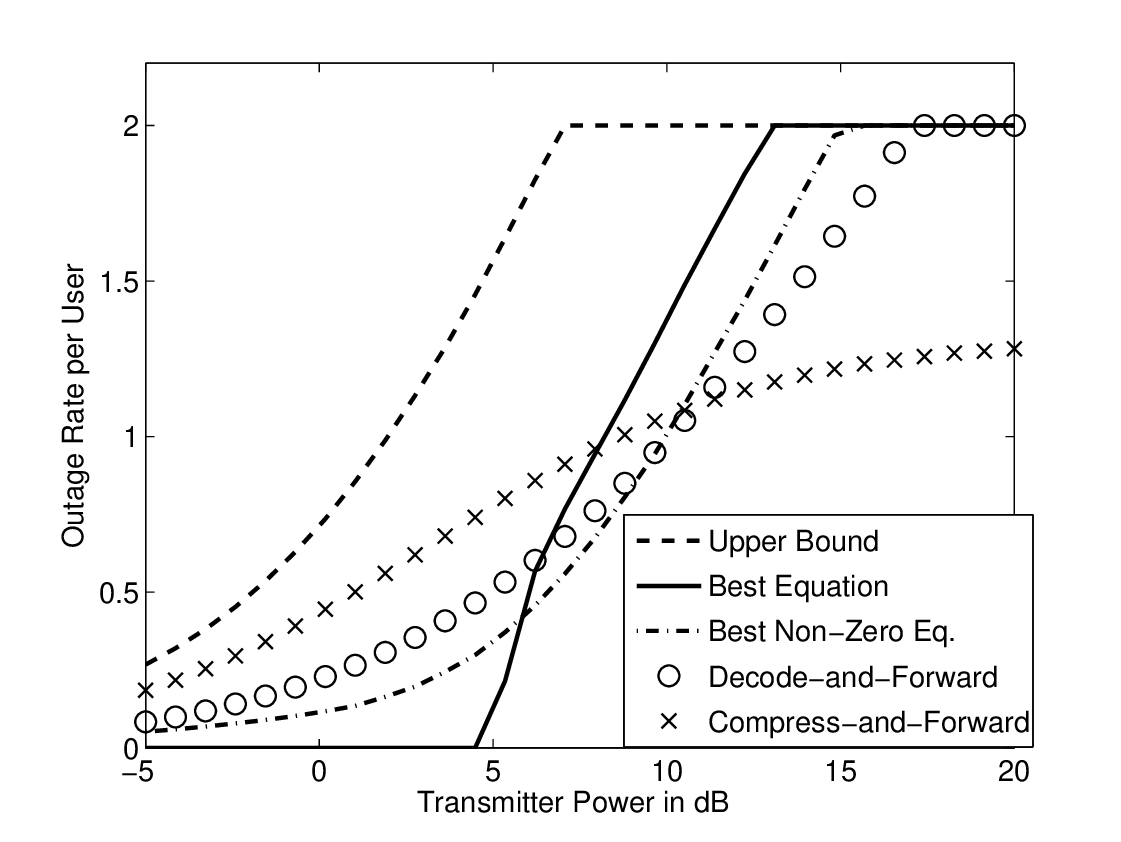}
\caption{Symmetric outage rates for the $2$-user distributed MIMO multiple-access channel with i.i.d. Rayleigh fading only known at the receivers. Here, we set $R_0 = 2$ and outage probability $\rho = 1/4$.}\label{f:distmimorates}
\end{figure}

\begin{figure}[h]
\centering
\includegraphics[width=3.75in]{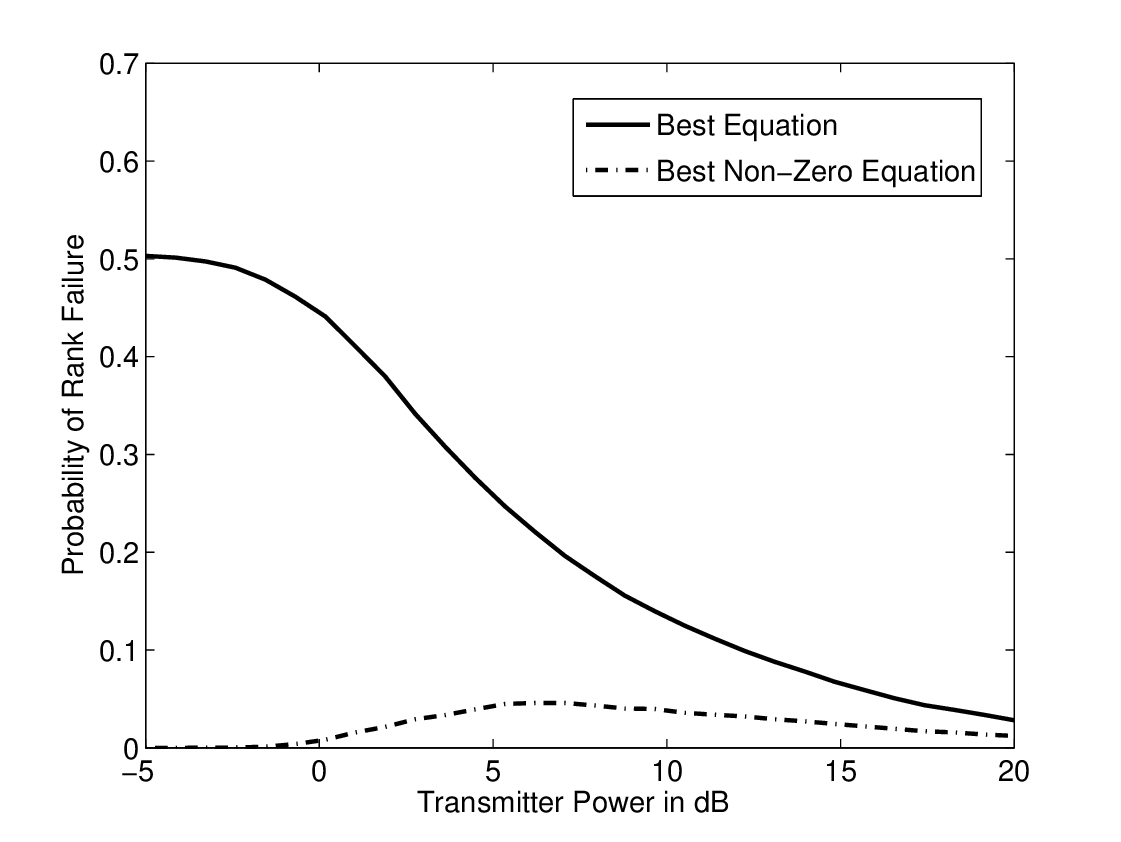}
\caption{Probability of rank failure for the $2$-user distributed MIMO multiple-access channel by having each relay decode the best equation and the best non-zero equation.}\label{f:distmimoranks}
\end{figure}
For our upper bound, we use a cut-set bound that either groups the relays with the sources or with the destination. This yields the following bound on the symmetric rate:
\begin{align}
R_{\text{MIMO}}(\Hb) &= \min\bigg(\log\left(1 + (|h_{11}|^2+|h_{21}|^2)P\right), \nonumber \\ &~~~~~~~~~~~ \log\left(1 + (|h_{12}|^2+|h_{22}|^2)P\right),\nonumber \\ & ~~~~~~~~~~~ \onehalf \log\det\left(\mathbf{I} + \Hb \Hb^* P\right) \bigg) \\
R_{\text{UPPER}}(\Hb) &= \min\big(R_{\text{MIMO}}(\Hb),R_0\big) \ . 
\end{align}

Finally, we consider the performance of compress-and-forward with i.i.d. Gaussian codebooks. The variance of the channel observation at relay $m$ is $1 + \| \hb_m \|^2 P$ and we have to compress this using $R_0$ bits. At the destination, one can equivalently write this as a MIMO channel with channel matrix $\Hb_{\text{CF}}$,
\begin{align}
\snr_{\text{CF},m} &= \frac{P (2^{R_0} -1)}{2^{R_0} + P \| \hb_m \|^2}\\
~~~~~~~~ \Hb_{\text{CF}} & =  \left[
\begin{array}{cc}
\sqrt{\snr_{\text{CF},1}/P}& 0\\ 
0&\sqrt{\snr_{\text{CF},2}/P} 
\end{array}
\right] \Hb \\
R_{\text{CF}}(\Hb) &= R_{\text{MIMO}}(\Hb_{\text{CF}}) \ .
\end{align}
 
From Figure \ref{f:distmimorates}, we can see that compute-and-forward (with the best equation) outperforms all other strategies starting at approximately $8$dB. It also saturates the bit pipes to the destination using 5dB less power per transmitter than required for decode-and-forward. However, the gains are not as dramatic as observed in Example \ref{e:hadamard}. For non-integer coefficients, we can only decode an integer combination and the remainder acts like additional noise. Despite this penalty, compute-and-forward is the best strategy in the moderate transmit power regime. Compress-and-forward is a good strategy at low transmit power since, in this regime, the rate of the bit pipes exceeds the MIMO capacity between the transmitters and the relays. Therefore, the effective noise introduced by vector quantization at the relays does not significantly degrade the effective end-to-end SNR. At high transmit power, this effective noise becomes a significant factor. Decode-and-forward is not as efficient as compute-and-forward at high transmit power as the relays must either treat one of the messages as noise or decode both. However, it outperforms compute-and-forward in the low transmit power regime since it is able to perform joint decoding.\footnote{We do not know how to naturally fit joint decoding into the compute-and-forward framework so we have excluded it (even in the context of multiple-access) to emphasize this fact.}

\begin{remark}Note that the encoding strategy for compute-and-forward does not depend on the choice of equation coefficients at the relay. Therefore, one can obtain the maximum of the best equation rate and the best non-zero equation rate with the same strategy simply by disallowing certain coefficients at the relays past an appropriate $P$.
\end{remark}
\begin{remark}
Since the channel from the transmitters to the relays is essentially a $2$-user interference channel, it may be useful to have each transmitter send out a public and a private message as in the Han-Kobayashi scheme \cite{hk81}. Such a scheme might improve the performance of both the decode-and-forward strategy and the compute-and-forward strategy (by employing superposition as in Section \ref{s:superposition}).
\end{remark}

\section{Upper Bound}

In this section, we give a simple upper bound on the computation rate through a genie-aided argument. This bound does not match our achievable strategy in general and it may be possible to construct tighter outer bounds by taking into account the mismatch between the desired function and the function naturally provided by the channel.

\begin{theorem} \label{t:upper} Assume the channel between the transmitters and the relays is $p(y_1, \ldots, y_M | x_1, \ldots, x_L)$. If the relays, want equations with coefficient vectors $\ab_1, \ldots, \ab_M \in \Zbb^L$, the message rates are upper bounded as follows:
\begin{align*}
R_\e \leq \min_{m: a_{m\e} \neq 0}I(X_\e;Y_m |X_1, \ldots, X_{\e-1},X_{\e + 1}, \ldots, X_L)
\end{align*} For the real-valued Gaussian channel model considered in this paper, with channel vectors $\hb_1, \ldots, \hb_M \in \Rbb^L$, this specializes to 
\begin{align}
R_\e \leq \min_{m: a_{m\e} \neq 0} \frac{1}{2}\log{\left(1 + h_{m \e}^2 P\right)} \ . 
\end{align}
Similarly, for the complex-valued Gaussian channel model considered in this paper, with coefficient vectors $\ab_1, \ldots, \ab_M \in \ZC^L$ channel vectors $\hb_1, \ldots \hb_M \in \Cbb^L$, we have that
\begin{align}
R_\e \leq \min_{m: a_{m\e} \neq 0} \log{\left(1 + |h_{m \e}|^2 P\right)} \ . 
\end{align}
\end{theorem}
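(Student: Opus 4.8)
The plan is a standard genie-aided converse. Fix any relay $m$ with $q_{m\ell}\neq 0$; this is precisely the relevant case behind the minimum, since $a_{m\ell}\neq 0$ together with $p$ large (or, more generally, $a_{m\ell}\not\equiv 0 \bmod p$) forces $q_{m\ell}=g^{-1}([a_{m\ell}]\bmod p)$ to be a nonzero, hence invertible, element of $\Fbb_p$. Reveal to relay $m$ the messages of all other transmitters, $\{\wb_{\ell'}\}_{\ell'\neq\ell}$. Because relay $m$ recovers $\ubh_m$ with $\Pr(\ubh_m\neq\ub_m)<\epsilon$ and $\ub_m=\bigoplus_{\ell'}q_{m\ell'}\wb_{\ell'}$, it can form
\begin{align}
\wbh_\ell = q_{m\ell}^{-1}\Big(\ubh_m \ominus \bigoplus_{\ell'\neq\ell} q_{m\ell'}\wb_{\ell'}\Big),
\end{align}
which equals $\wb_\ell$ whenever $\ubh_m=\ub_m$. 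Thus the sub-system consisting of transmitter $\ell$, the channel $p(y_m\mid x_1,\ldots,x_L)$ with $\{X_{\ell'}\}_{\ell'\neq\ell}$ available at the receiver, and relay $m$ forms a point-to-point channel over which $\wb_\ell$ is communicated with vanishing error probability.

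Now apply the usual converse to this induced point-to-point channel. By Fano's inequality, $H(\wb_\ell\mid Y_m^n,\{X_{\ell'}^n\}_{\ell'\neq\ell})\le n\epsilon_n$ with $\epsilon_n\to 0$. Since messages are independent and uniform and the other codewords are functions of the other messages, $k_\ell\log p = H(\wb_\ell)=H(\wb_\ell\mid \{X_{\ell'}^n\}_{\ell'\neq\ell})$, so
\begin{align}
k_\ell\log p \;\le\; I(\wb_\ell;Y_m^n\mid \{X_{\ell'}^n\}_{\ell'\neq\ell}) + n\epsilon_n \;\le\; I(X_\ell^n;Y_m^n\mid \{X_{\ell'}^n\}_{\ell'\neq\ell}) + n\epsilon_n,
\end{align}
the last step by data processing along the Markov chain $\wb_\ell\to X_\ell^n\to Y_m^n$ (conditioned on the other codewords). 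Single-letterizing the conditional mutual information in the standard way (memorylessness of the channel, the chain rule, and the fact that conditioning on the other inputs is inherited coordinate-wise), and using $R_\ell=(k_\ell/n)\log p$, yields $R_\ell\le I(X_\ell;Y_m\mid X_1,\ldots,X_{\ell-1},X_{\ell+1},\ldots,X_L)+\epsilon_n$; letting $n\to\infty$ and minimizing over all admissible $m$ gives the first display.

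For the Gaussian specializations, once $\{X_{\ell'}\}_{\ell'\neq\ell}$ is known it can be subtracted from $Y_m$, leaving the point-to-point channel $\tilde Y_m=h_{m\ell}X_\ell+Z_m$. In the real case $Z_m\sim\mathcal{N}(0,1)$, and the power constraint $\tfrac1n\|X_\ell^n\|^2\le P$ with the Gaussian maximization of differential entropy give $I(X_\ell;Y_m\mid\text{rest})\le\tfrac12\log(1+h_{m\ell}^2P)$; in the complex case $Z_m\sim\mathcal{CN}(0,1)$ gives $\log(1+|h_{m\ell}|^2P)$, recalling the factor-of-two convention in the definition of $R_\ell$ for complex models. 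I do not expect a genuine obstacle here: the only point requiring care is the bookkeeping that the condition ``$a_{m\ell}\neq 0$'' must be read as ``$q_{m\ell}\neq 0$'' so that the genie reconstruction of $\wb_\ell$ is valid, which is automatic for the relevant coefficient choices once $p$ is large; the remainder is the textbook interference-removal converse.
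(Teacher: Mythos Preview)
Your proposal is correct and follows essentially the same genie-aided approach as the paper: reveal all other messages to relay $m$, reduce to a point-to-point channel, and bound by the resulting capacity. You spell out the Fano/data-processing/single-letterization steps and the $a_{m\ell}\neq 0$ versus $q_{m\ell}\neq 0$ bookkeeping more carefully than the paper, which simply asserts the reduction to a multicast problem and invokes the lowest-rate link.
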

\begin{proof}
To each relay $m$ for which $a_{m \e} \neq 0$, we provide all messages except that from encoder $\e$ as genie-aided side-information. Now, we are left with a multicasting problem from encoder $\e$ to all relays with $a_{m \e} \neq 0$. Clearly, the multicast rate is upper bounded by the lowest rate link. For the Gaussian case, it is easy to show that the mutual information expressions are maximized by the Gaussian distribution.
\end{proof}

\section{Conclusions}

In this paper, we have developed a new coding scheme that enables relays to reliably recover equations of the original messages by exploiting the interference structure of the wireless channel. As we have seen, this framework can achieve end-to-end rates across an AWGN network that are not accessible with classical relaying strategies. More generally, the techniques in this paper can be used as building blocks for developing new cooperative communication schemes that exploit both the algebraic and statistical properties of wireless networks. Here, we presented an application to distributed MIMO and we believe there are many other scenarios where it will be useful. For instance, it can reduce energy consumption for gossiping over a sensor network \cite{ndg11} and improve the performance of low-complexity MIMO receiver architectures \cite{zneg11}.

Compute-and-forward also adds to the growing pile of evidence that structured codes are a powerful tool for tackling problems in multi-user information theory. Recently, many new inner bounds have emerged that take advantage of the algebraic structure of multi-user problems. The behavior observed in these strategies is not well-captured by the usual cut-set outer bounds. Therefore, new outer bounds that account for algebraic as well as statistical structure will be needed to better characterize the capacity regions of multi-user networks \cite{eo09}. An interesting direction for future study, inspired by the work of Avestimehr, Diggavi, and Tse on deterministic models \cite{adt11}, is whether compute-and-forward can be used to closely approximate the capacity of an AWGN network.

\section*{Acknowledgment}

The authors would like to thank G. Reeves for pointing out Theorem \ref{t:optcompreal} when this work was in an early stage. They would also like to thank G. Bresler, U. Erez, S. Shamai, and R. Zamir for valuable discussions as well as the anonymous reviewers whose comments improved the presentation of this work.

\appendices

\section{Upper Bound on Noise Densities} \label{s:gaussiannoise}

In this appendix, we demonstrate that the densities of the noise terms in Theorem \ref{t:latcompreal} and \ref{t:latcompcomplex} are upper bounded by the density of an i.i.d. Gaussian vector. The proof follows that of Lemmas 6 and 11 from \cite{ez04}. 
\begin{lemma} \label{l:gaussiannoise}
Let $\zb \sim \mathcal{N}(\mathbf{0},\mathbf{I}^{n\times n})$ and let $\db_\e$ be independently generated according to a uniform distribution over $\Vm$, the fundamental Voronoi region of $\Lambda$. Also, let $\sigma_{\mathcal{B}}^2$ denote the second moment of an n-dimensional ball whose radius is equal to the covering radius $\rcov$ of  $\L$ and let $\zb_\e^*$ be independently generated according to $\mathcal{N}(\mathbf{0},\sigma_{\mathcal{B}}^2 \mathbf{I}^{n \times n})$. Now, let 
\begin{align}
\mathbf{z}_{eq} = \alpha \zb + \sum_{\e = 1}^L{\theta_\e \db_\e}
\end{align} where $\alpha, \theta_\e \in \Rbb$. There exists an i.i.d. Gaussian vector
\begin{align}
\mathbf{z}^* = \alpha \zb + \sum_{\e = 1}^L{\theta_\e \zb^*_\e}
\end{align} with variance $\sigma^2$ satisfying
\begin{align}
\sigma^2 \leq \alpha^2+ \left(\frac{\rcov}{\reffec}\right)^2 P  \sum_{\e=1}^L{\theta_\e^2}
\end{align} such that the density of $\zb_{eq}$ is upper bounded as follows:
\begin{align}
f_{\zb_{eq}}(\zb) &\leq e^{Lc(n)n}f_{\zb^*}(\zb) \\
c(n) &= \ln{ \left(\frac{\rcov}{\reffec}\right)} + \onehalf \ln{2 \pi e G_{\Bm}^{(n)}} + \onen \label{e:densityexp}
\end{align} where $\ln$ is the natural logarithm, $G_{\Bm}^{(n)}$ is the normalized second moment of an n-dimensional ball, and $\reffec$ is the effective radius of $\L$.
\end{lemma}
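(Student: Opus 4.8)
The plan is to control the density of $\zb_{eq} = \alpha \zb + \sum_\ell \theta_\ell \db_\ell$ by comparing each dither term $\theta_\ell \db_\ell$ to a Gaussian. The key observation is that a uniform distribution over the Voronoi region $\Vm$ is supported inside the ball $\Bm(\rcov)$ of radius equal to the covering radius, so its density is pointwise at most $\frac{1}{\V(\Vm)}\mathbf{1}_{\Bm(\rcov)}$, while the Gaussian $\zb^*_\ell \sim \mathcal{N}(\mathbf{0}, \sigma_\Bm^2 \mathbf{I})$ has a density that is lower bounded on $\Bm(\rcov)$ by its value at the boundary. Taking the ratio of these two densities on $\Bm(\rcov)$ gives a bound of the form $f_{\db_\ell}(\db) \le e^{n c(n)} f_{\zb^*_\ell}(\db)$ with $c(n) = \ln(\rcov/\reffec) + \frac{1}{2}\ln(2\pi e G_\Bm^{(n)}) + \frac{1}{n}$, where the $\ln(\rcov/\reffec)$ term accounts for $\V(\Bm(\rcov))/\V(\Vm) = (\rcov/\reffec)^n$, the $\frac12 \ln(2\pi e G_\Bm^{(n)})$ term comes from writing $\sigma_\Bm^2 = G_\Bm^{(n)} \V(\Bm(\rcov))^{2/n}$ and evaluating the Gaussian density at radius $\rcov$, and the $1/n$ is slack absorbing lower-order terms.

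First I would follow Lemma 6 of \cite{ez04} to establish the single-dither bound: express both densities explicitly, note that $f_{\zb^*_\ell}(\db) \ge (2\pi \sigma_\Bm^2)^{-n/2} e^{-\rcov^2/(2\sigma_\Bm^2)}$ for $\db \in \Bm(\rcov)$ since $\|\db\|\le \rcov$ there, and divide. Second, I would lift this to a bound on convex combinations: since $\zb_{eq}$ is a sum of independent terms, its density is the convolution of the individual densities; convolving the pointwise inequality $f_{\db_\ell} \le e^{nc(n)} f_{\zb^*_\ell}$ over all $\ell$ (and leaving the $\alpha \zb$ factor untouched, which is itself exactly Gaussian) yields $f_{\zb_{eq}}(\zb) \le e^{Lnc(n)} f_{\zb^*}(\zb)$, where $\zb^* = \alpha\zb + \sum_\ell \theta_\ell \zb^*_\ell$. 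Third, the variance of $\zb^*$ is $\sigma^2 = \alpha^2 + \sigma_\Bm^2 \sum_\ell \theta_\ell^2$, and since $\sigma_\Bm^2 = (\rcov/\reffec)^2 \reffec^2 G_\Bm^{(n)}/G_\Bm^{(n)} = (\rcov/\reffec)^2 \sigma_\Lambda^2$ after using $\V(\Vm) = \V(\Bm(\reffec))$ and the scaling $\sigma_\Lambda^2 = P$, we get $\sigma^2 \le \alpha^2 + (\rcov/\reffec)^2 P \sum_\ell \theta_\ell^2$ as claimed. The convolution step preserves the multiplicative constant because each factor contributes $e^{nc(n)}$ and there are $L$ of them.

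The main obstacle will be making the convolution argument fully rigorous when the densities being compared do not have the same "shape": the pointwise inequality $f_{\db_\ell} \le e^{nc(n)} f_{\zb^*_\ell}$ holds on $\Bm(\rcov)$ but $f_{\db_\ell}$ vanishes outside it while $f_{\zb^*_\ell}$ does not, so one must check the inequality is genuinely pointwise everywhere in $\Rbb^n$ (it is, trivially, outside $\Bm(\rcov)$). Then convolution of pointwise-dominated nonnegative functions preserves domination, but one must be careful that the constant multiplies correctly and that scaling by $\theta_\ell$ does not disturb the ratio — it rescales both densities identically, so the ratio is unchanged. The rest is bookkeeping: tracking the scaling of $\Lambda$ so that $\sigma_\Lambda^2 = P$, and verifying that $c(n) \to 0$ as $n \to \infty$ using $\rcov/\reffec \to 1$ (covering goodness) and $G_\Bm^{(n)} \to 1/(2\pi e)$, which is what makes the bound asymptotically harmless when combined with AWGN goodness of the fine lattices in Theorem \ref{t:latcompreal}.
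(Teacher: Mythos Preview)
Your proposal is correct and follows essentially the same route as the paper: cite the single-dither density bound from \cite{ez04}, propagate it through the convolution of independent summands to pick up the factor $e^{Lnc(n)}$, and then bound $\sigma_\Bm^2$ in terms of $P$. One small cleanup: your intermediate expression $\sigma_\Bm^2 = (\rcov/\reffec)^2 \sigma_\Lambda^2$ should be an inequality, obtained (as the paper does) from the fact that a ball has the smallest second moment among all bodies of a given volume, so $P = \sigma_\Lambda^2 \ge (\reffec/\rcov)^2 \sigma_\Bm^2$.
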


\begin{proof}
First, we will show that the density of $\zb_{eq}$ is upper bounded as desired. From Lemma 11 in \cite{ez04}, we have that
\begin{align}
f_{\db_{\e}}(\zb)  \leq e^{c(n)n}f_{\zb^*_{\e}}(\zb)  \ . \label{e:ditherbound}
\end{align} Since $\zb, \db_1, \ldots, \db_L$ are independent, we can write the density of $\zb_{eq}$ as an $n$-dimensional convolution of the densities of its components,
\begin{align}
f_{\zb_{eq}}(\zb) = f_{\alpha\zb}(\zb) * f_{\theta_1\db_{1}}(\zb) * \cdots * f_{\theta_L\db_{L}}(\zb) \ . 
\end{align} Similarly, we can write the density of $\zb^*$ as
\begin{align}
f_{\zb^*}(\zb) = f_{\alpha\zb}(\zb) * f_{\theta_1\zb^*_{1}}(\zb) * \cdots * f_{\theta_L\zb^*_{L}}(\zb) \ . 
\end{align} Since probability densities are non-negative, we can use the upper bound in (\ref{e:ditherbound}) to get
\begin{align}
f_{\alpha \zb}(\zb) * f_{\theta_\e \db_\e}(\zb) \leq f_{\alpha \zb}(\zb) * e^{c(n)n} f_{\theta_\e \zb^*_\e}(\zb) \ .
\end{align} Applying this idea $L$ times to $f_{\zb_{eq}}(\zb)$ yields
\begin{align}
f_{\zb_{eq}}(\zb) &\leq e^{Lc(n)n}f_{\zb^*}(\zb) \ . 
\end{align} We must now upper bound the variance of $\zb^*$. By Definition \ref{d:reffec}, $\mathsf{Vol}(\Bm(\reffec)) = \mathsf{Vol}(\Vm)$. Recall that a ball has the smallest second moment for a given volume. Let $\mathbf{b}$ be generated according to the uniform distribution over $\Bm(\rcov)$. It follows that
\begin{align}
P &= \onen E\left[ \| \db_\e \|^2\right] \\
& \geq \onen E\left[\left\| \frac{\reffec}{\rcov} \mathbf{b} \right\|^2\right] = \left(\frac{\reffec}{\rcov}\right)^2 \sigma_{\Bm}^2 \ . 
\end{align} Finally, we get
\begin{align}
\sigma^2 &= \onen E\left[\| \alpha \zb \|^2 \right] + \onen \sum_{\e =1}^L{E\left[\|\theta_\e \zb^*_\e\|^2\right]} \\
&= \alpha^2 + \sigma_{\Bm}^2 \sum_{\e = 1}^L{\theta_\e^2}\\
&\leq \alpha^2 + \left(\frac{\rcov}{\reffec}\right)^2  P \sum_{\e = 1}^L{\theta_\e^2} \ . 
\end{align}
\end{proof}

Since the coarse lattice is good for covering and for quantization, $\frac{\rcov}{\reffec} \rightarrow 1$ and $G_{\Bm}^{(n)} \rightarrow \frac{1}{2\pi e}$ as $n \rightarrow \infty$. Therefore, $c(n) \rightarrow 0$ as $n \rightarrow \infty$. As we will show in the next appendix, the fine lattices are good for AWGN, which means that they can attain a positive error exponent for i.i.d. Gaussian noise whose variance is smaller than their respective second moments. 

\section{Fine Lattices are Good for AWGN} \label{s:awgngood}

We now show that the fine lattices from Section \ref{s:latticeconstruct} can recover from i.i.d. Gaussian noise.

\begin{lemma} \label{l:awgngood}
$\L_1,\L_2,\ldots, \L_L$ are good for AWGN with probability that goes to $1$ as $n \rightarrow \infty$ so long as $\frac{n}{p} \rightarrow 0$. 
\end{lemma}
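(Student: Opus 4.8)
The plan is to construct the fine lattices via Construction A as described in Section~\ref{s:latticeconstruct} and show that, with probability tending to~$1$, each $\L_\ell$ achieves a positive error exponent against i.i.d.\ Gaussian noise whose variance per dimension is strictly less than the volume-to-noise threshold $2\pi e$. The natural starting point is the Minkowski--Hlawka style averaging argument over random Construction-A lattices, exactly as in Loeliger~\cite{loeliger97} and Erez--Litsyn--Zamir~\cite{elz05}, adapted to the fact that here the generator matrix~$\Gb_L$ is drawn uniformly over $\Fbb_p^{n\times k_L}$ with $p$ prime and growing with~$n$. First I would fix the rate of each fine lattice; recall $r_\ell = \tfrac{k_\ell}{n}\log p$, and since $\L_\ell = \Bb\tilde\L_\ell$ with $\Bb$ the (volume-normalized) generator of the quantization-good coarse lattice, the normalized volume-to-noise ratio of $\L_\ell$ is controlled by the pair $(G(\Lambda),\,r_\ell)$.

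The key steps, in order: (i) Reduce AWGN-goodness of $\L_\ell = \Bb\tilde\L_\ell$ to a statement about $\tilde\L_\ell = p^{-1}g(\Cm_\ell) + \Zbb^n$, using that $\Bb$ is fixed and quantization-good so it does not affect the decay of the error probability beyond a vanishing correction (this is where $G(\Lambda)2\pi e \to 1$ enters). (ii) For a minimum-distance (equivalently, maximum-likelihood for Gaussian) decoder, bound the probability of error by a union over nonzero codewords of $\Cm_\ell$ of pairwise error events, then take expectation over the random~$\Gb_\ell$. Because each nonzero $\wb$ yields $\Gb_\ell\wb$ uniform over $\Fbb_p^n$ (here primality of~$p$ and the argument already used in the excerpt for full-rankness are what give uniformity), the lifted points $p^{-1}g(\Gb_\ell\wb)$ modulo~$\Zbb^n$ are (close to) uniform over the unit torus. (iii) Invoke the standard geometric estimate: the expected number of nonzero lattice points of a random Construction-A lattice inside a ball of radius~$\rho$ is essentially $\mathrm{Vol}(\Bm(\rho))/\mathrm{Vol}(\Vm_\ell)$ plus a correction of order $n/p$ coming from the discrepancy between the discrete point set and the continuous torus — this is precisely the place where the hypothesis $n/p\to 0$ is needed, so that the lifted codebook ``looks continuous'' and the Minkowski--Hlawka heuristic is asymptotically exact. (iv) Combine with the Chernoff/sphere-bound estimate for the Gaussian noise exiting a ball to conclude that, for volume-to-noise ratio exceeding $2\pi e$, the averaged error probability decays exponentially in~$n$; then pass from the ensemble average to a good realization by Markov's inequality, and finally take a union bound over the finitely many indices $\ell = 1,\dots,L$ (and, invoking the full-rank event already established in Section~\ref{s:latticeconstruct}, over the event that all $\Gb_\ell$ have full rank so the rates are as claimed).

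The main obstacle I expect is step~(iii): controlling the $n/p$ error term in the second-moment count of lattice points of the random Construction-A lattice, i.e.\ showing that the random codebook $p^{-1}g(\Cm_\ell)\bmod\Zbb^n$ is equidistributed finely enough that the expected theta-series (or ball-count) matches the continuum prediction up to a factor $1+o(1)$. This requires being careful that the codewords $\Gb_\ell\wb$, while pairwise uniform, are not independent, so one either restricts attention to the pairwise (second-moment) count — which suffices for a union bound on pairwise error events — or imports the refined counting lemma from~\cite{elz05,kp07report}. A secondary, more routine technicality is handling the coarse-lattice rotation by~$\Bb$: since $\Bb$ is quantization-good and volume-normalized to second moment~$P$, one must verify that conjugating the noise and the decoding regions by~$\Bb$ changes the effective volume-to-noise ratio only by the factor $G(\Lambda)2\pi e \to 1$ already exploited in the proof of Theorem~\ref{t:latcompreal}; this is bookkeeping rather than a genuine difficulty. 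All remaining pieces — the Gaussian tail bound, Markov's inequality for the ensemble-to-realization step, and the union bound over the $L$ fine lattices and over the full-rank event — are standard.
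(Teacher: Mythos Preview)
Your proposal is correct, and the underlying ingredients you identify---marginal uniformity of $\Gb_\ell\wb$ over $\Fbb_p^n$, pairwise independence of distinct codewords, the $n/p\to 0$ discretization error, and a final union bound over $\ell=1,\ldots,L$---are exactly the ones the paper relies on. The route, however, is packaged differently. You propose a direct Minkowski--Hlawka/Loeliger-style argument: work with $\tilde\Lambda_\ell$ on the torus, count expected lattice points in balls, and combine with a Gaussian sphere bound to extract the Poltyrev exponent from scratch. The paper instead works in the $\bmod\,\Lambda$ channel framework of Erez--Zamir~\cite{ez04}: it first cites their result that an i.i.d.\ random code uniform over $\Vm$ achieves the Poltyrev exponent on that channel, then cites Appendix~C of~\cite{ez04} for the statement that discretizing the input distribution to $p^{-1}\Lambda\cap\Vm$ preserves the exponent when $n/p\to 0$, and finally invokes Lemma~\ref{l:latprops} to observe that the Construction-A codebook has exactly the marginal-uniform, pairwise-independent properties needed for the union bound to match the i.i.d.\ analysis. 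What your approach buys is self-containment---you do not need to import the $\bmod\,\Lambda$ machinery or the specific lemmas from~\cite{ez04}---at the cost of reproving the ball-count estimate you flag in step~(iii). What the paper's approach buys is brevity: the heavy lifting is outsourced to~\cite{ez04}, and the only new content is the observation that pairwise independence suffices. Your handling of the rotation by $\Bb$ in step~(i) is also slightly different in spirit; the paper sidesteps it entirely by staying in the $\bmod\,\Lambda$ channel, where $\Bb$ is already baked into the geometry of~$\Vm$.
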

\begin{proof}
Recall that the coarse lattice $\Lambda$ is good for AWGN. Let $\mathcal{\tilde{C}}_\ell$ be a codebook consisting of length $n$ codewords randomly and independently generated according to a uniform distribution over $\Vm$, the fundamental Voronoi region of $\Lambda$. Let $\mathbf{\tilde{z}}_\e$ denote an i.i.d. Gaussian vector with zero-mean and any variance $\sigma_\e^2$ such that the volume-to-noise ratio $\mu(\Lambda_\ell, \epsilon) = \frac{(\V(\Vm))^{2/n}}{\sigma_\e^2}$ is greater than $2\pi e$.  Consider the following channel from $\mathbf{\tilde{x}}_\ell \in \mathcal{\tilde{C}}_\ell$ to $\mathbf{\tilde{y}}_\e \in \Vm$:
\begin{align}
\mathbf{\tilde{y}}_\e = \left[\mathbf{\tilde{x}}_\e + \mathbf{\tilde{z}}_\e \right]\modl
\end{align} and let $\epsilon_\e$ the probability that $\mathbf{\tilde{x}}_\e$ is incorrectly decoded from $\mathbf{\tilde{y}}_\e$.
As part of the proof of Theorem 5 in \cite{ez04}, it is shown that the random coding error exponent for this channel is equal to the Poltyrev exponent (see Equation (56) in \cite{ez04}). This means that $\epsilon_\e$ decreases exponentially with $n$ for volume-to-noise ratio greater than $2 \pi e$. Appendix C of \cite{ez04} shows that the same performance is possible via Euclidean decoding if $\mathbf{\tilde{x}}_\e$ is drawn according to a uniform distribution over $\{p\n \L\} \cap \Vm$ and $\frac{n}{p} \rightarrow 0$. 

From Lemma \ref{l:latprops}, we know that the marginal distribution of each element of $\Lambda_\e \cap \Vm$ is uniform over $\{p\n \L\} \cap \Vm$. Furthermore, all points in the set $\L_\e \cap \Vm$ are pairwise independent. This is all that is required to apply the union bound and obtain the same performance as i.i.d. inputs over $\{p\n \L\} \cap \Vm$ in terms of the error exponent.

Thus, the probability that $\L_\e$ is good for AWGN (with the Poltyrev error exponent) goes to $1$ as $n \rightarrow \infty$. It follows from the union bound that $\L_1, \ldots, \L_L$ are simultaneously good for AWGN with high probability as $n \rightarrow \infty$.
\end{proof}

\section{Fixed Dithers} \label{s:fixeddithers}

We now show that there exist fixed dithers that are appropriate for our coding scheme. Instead of setting the second moment of $\Lambda$ to $P$, we will set the covering radius $\rcov$ to $\sqrt{nP}$. Recall that the covering radius is chosen such that the resulting ball $\Bm(\rcov)$ includes every point of the fundamental Voronoi region $\Vm$. Therefore, setting $\rcov = \sqrt{nP}$ guarantees that every transmission $\xb_\ell \in \Vm$ satisfies the power constraint. We now show that the rate loss can be made arbitrarily small. 

The effective radius $\reffec$ is chosen such that $\V(\Vm) = \V(\Bm(\reffec))$. Recall that, for even $n$, the volume of an $n$-dimensional ball of radius $1$ is  
\begin{align}
\V(\Bm(1)) = \frac{\pi^{n/2}}{(n/2)!} \ .
\end{align} By Stirling's approximation, for any $\delta > 0$ and $n$ large enough, this is lower bounded by 
\begin{align}
\V(\Bm(1)) \geq \left(\frac{2\pi e}{n(1+\delta)}\right)^{n/2} \ . 
\end{align} Thus, for any $\delta$ and $n$ large enough, the volume of $\V$ satisfies \begin{align*}
\V(\Vm) = \V(\Bm(\reffec)) &=\left(\frac{\reffec}{\rcov}\right)^n \V(\Bm(\rcov)) \\
&\geq \left(\frac{\reffec}{\rcov}\right)^n  \left(\frac{2\pi e ~\rcov^2}{n(1+\delta)}\right)^{n/2} \\
&= \left(\frac{\reffec}{\rcov}\right)^n  \left(\frac{2\pi e ~P}{(1+\delta)}\right)^{n/2}  \ . 
\end{align*} Since $\Lambda$ is also good for covering, we can choose $n$ large enough such that $\rcov^2/\reffec^2 > 1/(1+\delta)$. Finally, we have that  
\begin{align}
\V(\Vm) \geq  \left(\frac{2\pi e ~P}{(1+\delta)^2}\right)^{n/2}  \ . 
\end{align} Substituting this bound into (\ref{e:volv}), we can see that this only reduces the rate by an additional $\log(1 + \delta)$ bits, which can be made arbitrarily small through our choice of $\delta$. 

Note that the probability of error decays exponentially in $n$ averaged over the randomness in the dither vectors and the noise. Therefore, for $n$ large enough, there is at least one good fixed set of dither vectors that attains the desired probability of error $\epsilon$.

\bibliographystyle{ieeetr}

\begin{IEEEbiography}{Bobak Nazer} received the B.S.E.E. degree from Rice University, Houston, TX, in 2003, the M.S. degree from the University of California, Berkeley, CA, in 2005, and the Ph.D degree from the University of California, Berkeley, CA, in 2009, all in electrical engineering. 

He is currently an Assistant Professor in the Department of Electrical and Computer Engineering at Boston University, Boston, MA. From 2009 to 2010, he was a postdoctoral associate in the Department of Electrical and Computer Engineering at the University of Wisconsin, Madison, WI. His research interests are in network information theory and statistical signal processing, with applications to wireless networks and distributed, reliable computation.
 
Dr. Nazer received the Eli Jury award from the EECS Department at UC - Berkeley in 2009 for his dissertation research and a Dean's Catalyst Award from Boston University in 2011. He is a member of Eta Kappa Nu, Tau Beta Pi, and Phi Beta Kappa. 
\end{IEEEbiography}

\begin{IEEEbiography}{Michael Gastpar} received the Dipl. El.-Ing. degree from the Swiss Federal Institute of Technology (ETH), Zurich, in 1997, the M.S. degree from the University of Illinois at Urbana-Champaign, Urbana, in 1999, and the
Doctorat \`es Science degree from the Ecole Polytechnique F\'ed\'erale, Lausanne, Switzerland (EPFL), in 2002, all in electrical engineering. He was also a student in engineering and philosophy at the Universities of Edinburgh and Lausanne.

He is currently a Professor in the School of Computer and Communication Sciences, Ecole Polytechnique F\'ed\'erale, Lausanne, Switzerland, and an Associate Professor in the Department of Electrical Engineering and Computer Sciences, University of California, Berkeley. He also holds a faculty position at Delft University of Technology and was a researcher at Bell Labs, Lucent Technologies, Murray Hill, NJ. His research interests are in network information theory and related coding and signal processing techniques, with applications to sensor networks and neuroscience.

Dr. Gastpar won the 2002 EPFL Best Thesis Award, an NSF CAREER award in 2004, and an Okawa Foundation Research Grant in 2008. He is an Information Theory Society Distinguished Lecturer (2009--2011). He is currently an Associate Editor for Shannon Theory for the IEEE TRANSACTIONS ON INFORMATION THEORY, and he has served as Technical Program Committee Co-Chair for the 2010 International Symposium on Information Theory, Austin, TX. \end{IEEEbiography}

\end{document}